\documentclass[11pt,english]{article}
\usepackage[margin=0.75in]{geometry}
\usepackage{hyperref}
\hypersetup{
     colorlinks=true,
     linkcolor=blue,
     citecolor=blue,
}
\usepackage[ruled,vlined,linesnumbered]{algorithm2e}

\usepackage{pst-node}
\usepackage{pst-coil}
\usepackage{auto-pst-pdf}
 \usepackage[english]{babel}
\usepackage{bm,xfrac,amsmath,amsthm,soul, comment, xspace, mathtools, amsfonts, amssymb}
 \setlength{\parskip}{0.75ex}
\usepackage{footnote}
\usepackage{nicefrac,bm}
\usepackage{thmtools}
\usepackage{thm-restate}
\usepackage{graphicx}
\usepackage{pifont,multirow}
\usepackage{textcomp}
\usepackage[switch]{lineno}
\newcommand{\truncate}[2]{#1_{\downarrow #2}}
\title{$\nicefrac{1}{2}$-Approximate $\MMS$ Allocation for Separable Piecewise Linear Concave Valuations }
\date{}
\author{
Chandra Chekuri\thanks{University of Illinois at Urbana-Champaign}\\
\texttt{chekuri@illinois.edu}\and
Pooja Kulkarni\thanks{University of Illinois at Urbana-Champaign.}\\ \texttt{poojark2@illinois.edu} \and
Rucha Kulkarni\thanks{University of Illinois at Urbana-Champaign.}\\\texttt{ruchark2@illinois.edu} \and
Ruta Mehta\thanks{University of Illinois at Urbana-Champaign}\\ \texttt{rutameht@illinois.edu}
}

\let\origappendix\appendix 
\renewcommand\appendix{\clearpage\pagenumbering{arabic}\origappendix}
\theoremstyle{plain}
\newtheorem{remark}{Remark}[section]
\newtheorem{example}{Example}[section]
\newtheorem{lemma}{Lemma}[section]
\newtheorem*{lemma*}{Lemma}

\newtheorem{theorem}{Theorem}[section]

\newtheorem{claim}{Claim}[section]
\newtheorem*{claim*}{Claim}

\newtheorem{definition}{Definition}[section]

\newcommand{\A}{\mathcal{A}}
\newcommand{\N}{{\mathcal N}}

\newcommand{\Set}{{\mathcal S}}

\newcommand{\M}{{\mathcal M}}
\newcommand{\Vals}{(v_i)_{i\in\N}}

\newcommand{\B}{(b_i)_{i\in\N}}

\newcommand{\MMSins}{(\N,\M,\Vals,\B)}

\newcommand{\MMSinssym}{(\N,\M,\Vals)}
\newcommand{\SPLCins}{(\N, t, (k_j)_{j \in [t]}, \Vals)}

\newcommand{\vecx}{\mathbf{x}}
\newcommand{\vecp}{\mathbf{p}}

\newcommand{\classNP}{{\sf NP}}

\newcommand{\classPPAD}{{\sf PPAD}}
\newcommand{\SPLC}{{\sf SPLC}}

\newcommand{\MMS}{{\sf{MMS}}}

\newcommand{\APS}{{\sf{APS}}}

\newcommand{\PA}{{{A^\pi}}}

\newcommand{\G}{\mathcal{G}}



\let\oldnl\nl
\newcommand{\nonl}{\renewcommand{\nl}{\let\nl\oldnl}}

\DeclareMathOperator*{\argmax}{\arg\!\max}

\renewcommand{\nonl}{\renewcommand{\nl}{\let\nl\oldnl}}
\long\def\symbolfootnote[#1]#2{\begingroup%
\def\thefootnote{\fnsymbol{footnote}}\footnote[#1]{#2}\endgroup}

\newcommand{\Price}{\mathcal{P}}
\newcommand{\allocvalue}{\rho}
\newcommand{\marginalbid}{\mu}

\begin{document}
\maketitle
\begin{abstract}
We study fair distribution of a collection of $m$ indivisible goods among a group of $n$ agents, using the widely recognized fairness principles of Maximin Share $(\MMS)$ and Any Price Share $(\APS)$. These principles have undergone thorough investigation within the context of additive valuations. We explore these notions for valuations that extend beyond additivity.

First, we study approximate $\MMS$ under the separable (piecewise-linear) concave ($\SPLC$) valuations, an important class generalizing additive, where the best known factor was $1/3$-$\MMS$. We show that $\sfrac{1}{2}$-$\MMS$ allocation exists and can be computed in polynomial time, significantly improving the state-of-the-art.
We note that $\SPLC$ valuations introduce an elevated level of intricacy in contrast to additive. For instance, the $\MMS$ value of an agent can be as high as her value for the entire set of items.\footnote{Also the equilibrium computation problem, which is polynomial-time for additive valuations, becomes intractable for $\SPLC$ \cite{SPLC1}.} We use a relax-and-round paradigm that goes through competitive equilibrium and LP relaxation. Our result extends to give (symmetric) $\sfrac{1}{2}$-$\APS$, a stronger guarantee than $\MMS$. 

$\APS$ is a stronger notion that generalizes $\MMS$ by allowing agents with arbitrary entitlements. We study the approximation of $\APS$ under submodular valuation functions. We design and analyze a simple greedy algorithm using concave extensions of submodular functions. We prove that the algorithm gives a $\sfrac{1}{3}$-$\APS$ allocation which matches the best-known factor by \cite{uziahu2023fair}. Concave extensions are hard to compute in polynomial time and are, therefore, generally not used in approximation algorithms. Our approach shows a way to utilize it within analysis (while bypassing its computation), and hence might be of independent interest. 
\end{abstract}
\paragraph{Acknowledgements.} Chandra Chekuri is partially supported by NSF grant CCF-1910149. Pooja Kulkarni, Rucha Kulkarni and Ruta Mehta are supported by NSF CAREER Award CCF 1750436.
\section{Introduction}\label{sec:intro}
We consider the problem of fairly allocating a set $\M$ of $m$ indivisible goods among a set $\N$ of $n$ agents with heterogeneous preferences under the popular fairness notions of {\em Maximin share $(\MMS)$} \cite{budish2011combinatorial} and {\em Any Price Share $(\APS)$} \cite{BabaioffEF21}. These notions have been extensively studied for the setting where the agents have additive valuations \cite{barman2017approximation, ghodsi2018fair, garg2018approximating, garg2019improved, AkramiGS23, AkramiG23}. This paper studies the problem beyond additive valuations, particularly for the classical separable-concave valuations \cite{SPLC1, SPLC4} and submodular valuations \cite{ghodsi2018fair, barman2020approximation, uziahu2023fair}. 

   $\MMS$ and $\APS$ are {\em share based} fairness notions, where each agent is entitled to a bundle worth her {\em fair share}. Under $\MMS$, this {\em fair share} of an agent is defined as the maximum value she can guarantee herself under the classical {\em cut-and-choose} mechanism when she is the cutter; she partitions the item set into $n$ bundles and gets to pick last. Therefore, she partitions so that the value of the minimum valued bundle is maximized. Let $\Pi_{\N}(\M)$ denote the set of all allocations of $\M$ among the $n$ agents. If $(A_1,\dots,A_n)$ denotes any allocation into $n$ bundles, and $v_i:2^{\M} \rightarrow \mathbb R_+$ denotes agent $i$'s valuation function, 
    then the $\MMS$ value of agent $i$ is defined as, 
    \[
    \MMS_i = \max_{(A_1,\dots,A_n)\in \Pi_{\N}(\M)} \min_{j\in [n]} v_i(A_j)
    \]
    
    An $\MMS$ allocation is one where every agent $i$ gets a bundle worth at least $\MMS_i$. $\MMS$ treats all agents equally. In some settings it is necessary to consider \emph{weighted} agents, where the weight or entitlement of agent $i$ is $b_i>0$; the weights are normalized to satisfy $\sum_i b_i=1$. It is not straight forward to define a weighted generalization of $\MMS$. To address this, \cite{BabaioffEF21} introduced the notion of $\APS$. This fair share value of agent $i$ is defined as the value she can ensure herself with a budget of $b_i$ when the {\em prices} of the items are chosen adversarially, subject to a normalization constraint that the total sum of prices is $1$. More formally, let $\Price=\{(p_1\dots,p_m)\mid \sum_{j=1}^m p_j=1,\ p_j\ge 0\ \forall j\}$ denote the simplex of price vectors for the $m$ goods. If the budget of agent $i$ is $b_i$, then,   
    \[
    \APS_i = \min_{(p_1,\dots,p_m)\in \Price} \max_{S\subseteq \M: \sum_{j\in S} p_j \le b_i} v_i(S)
    \]

    We note that when $b_i = \frac{1}{n}$, $\APS_i \ge \MMS_i$ \cite{BabaioffEF21}. Thus, allocations that give guarantees with respect to $\APS_i$ at $b_i = \frac{1}{n}$ automatically provide same guarantees for $\MMS_i$.

    Allocations achieving $\MMS$ and $\APS$ shares may not exist even under additive valuations \cite{ProcacciaW14, FeigeST21}. Therefore, the focus has been on finding approximate solutions, where in an $\alpha$-$\MMS$ $(\APS)$ allocation, every agent receives a bundle worth at least $\alpha$ times their $\MMS$ $(\APS)$ value. This problem has been studied extensively for additive valuations (see \cite{AmanatidisABFLMVW22survey} for a survey and pointers) with much progress \cite{ProcacciaW14,garg2018approximating, KulkarniMT21}. It is known that a $(\frac34+\frac{3}{3836})$-$\MMS$ always exists and can be computed in polynomial time \cite{AkramiG23}, while there are examples showing that $\frac{39}{40}$-$\MMS$ may not exist \cite{FeigeST21} even in the setting of three agents.    

Additive valuations are inapplicable if agents have {\em decreasing marginal gains}, a crucial property in practice. This raises the need to go beyond additive. Some well-known classes of valuation functions such as subadditive, fractionally sub-additive i.e. XOS, submodular, and their interesting special cases have been studied in the literature \cite{barman2020existence, LiV21, viswanathan2022yankee}. Here we consider separable (piecewise-linear) concave 
(\SPLC) \cite{SPLC1,SPLC2,SPLC3,SPLC4} and submodular valuations. $\SPLC$ valuations generalize additive valuations, and form a subclass of submodular valuations. Such functions are separable across different types of goods, and concave within each type capturing decreasing marginal gains. Submodular functions allow decreasing marginal gain across all goods. Formally, a real-valued function
$f:2^\M \rightarrow \mathbb{R}$ is submodular iff $f(A \cup \{e\}) - f(A) \ge f(B \cup \{e\}) - f(B)$ for all $A \subset B$ and $e \not \in B$. For submodular valuations, \cite{ghodsi2018fair} gave an algorithm to find a $\frac13$-$\MMS$ allocation based on a certain local search procedure, and \cite{barman2017approximation} showed that a simple round-robin procedure can achieve a $\frac13(1-1/e)$-$\MMS$ allocation. This was recently improved to $\frac{10}{27}$-$\MMS$ by \cite{uziahu2023fair}, who also gave $\frac{1}{3}$-$\APS$ algorithm. These results also apply to separable-concave functions and remain the best known.

In terms of lower bounds, for submodular valuations, even for special cases like assignment valuations and weighted matroid rank valuations it is known that better than $\frac23$-$\MMS$ allocations may not exist \cite{barman2020approximation, kulkarni2023maximin}. For a more general class, namely fractionally subadditive valuations, it is known that better than $\frac12$-$\MMS$ allocations may not exist \cite{ghodsi2018fair}. Closing the gaps for these rich classes of valuations is of much interest. A natural question here is whether $\frac12$-$\MMS$ allocations exist for submodular valuations and it is open even for $\SPLC$ valuations.

One of the difficulties in going beyond additive valuations is the lack of good upper bound on the $\MMS$ ($\APS$) values of the agents. For example, if $v_i$ is additive and $b_i = \frac{1}{n}$ then $\MMS_i \le \APS_i \le  \frac{v_i(\M)}{n}$, while if $v_i$ is separable-concave or submodular then we can have $\MMS_i = v_i(\M)$ (see Example \ref{eg:splc-mms-high} in  Appendix \ref{app:intro}). 

\subsection{Our Results}
In this paper, we develop novel ways to upper bound the $\MMS$ and $\APS$ values for monotone valuation functions via {\em market equilibrium} and {\em concave extensions}. This is our conceptual contribution. We leverage this to obtain two results.

First, we design an LP-relaxation-based method to find a $1/2$-$\MMS$ allocation for $\SPLC$ valuations in polynomial time. Our result is in fact stronger; the algorithm outputs an allocation that gives each agent a value at least $1/2$-$\APS_i$ (for symmetric agents). $\SPLC$ valuations are a special case of submodular valuations, and for the latter, the best known result for $\MMS$ allocations is a very recent result \cite{uziahu2023fair} that yields a $\frac{10}{27}$-$\MMS$ allocation. Thus, for $\SPLC$ valuations we obtain an improved approximation.

Second, we show that a simple greedy algorithm achieves $\frac13$-$\APS$ allocation for submodular valuations. This result was independently obtained in a recent work by \cite{uziahu2023fair}. Their proof technique is conceptually different and uses the bidding game methodology of \cite{BabaioffEF21}. We make use of an argument via the concave extension approach. We believe that this may be of independent interest and may offer helpful insights.

In Appendix \ref{app:intro}, we also point out an example (Example \ref{eg:greedy-half-splc-counter}) to show that a greedy-like approach cannot yield a  $1/2$-MMS allocation for $\SPLC$ valuations. This points to the importance of the LP-based approach that we utilize.
\section{Preliminaries}\label{sec:prelims}

We use $[k]$ to denote the set $\{1,2,\cdots,k-1,k\}$.
\subsection{Fair division model} 
We study the problem of fairly dividing a set of $m$ indivisible goods $\M = [m]$ among a set of $n$ agents $\N = [n]$ who can have asymmetric entitlements. The entitlement or weight of agent $i$ is denoted by $b_i,$ and the weights are normalized so that their sum is $1,$ that is, $\sum_{i \in \N} b_i=1.$
The preferences of an agent $i \in \N$ are defined by a valuation function 
$v_i: 2^{\M} \rightarrow \mathbb{R}_{\ge 0}$ over the set of goods. We represent a fair division problem instance by $\MMSins$. When all agents have the same weight, we denote the instance by $\MMSinssym$ and call it the symmetric fair division instance. An allocation $A \coloneqq (A_1, \ldots, A_n)$ is a partition of all the goods among the $n$ agents, i.e. for all $i, j \in [n]$ with $i \neq j$, $A_i \cap A_j = \emptyset$ and $\cup_{i \in [n]}A_i = [m]$. Note that empty parts are allowed. We denote the set of all allocations by $\Pi_{[n]}([m])$. We will also sometimes use fractional allocations of goods. We denote these allocations by $\vecx$ and the allocation of a particular agent, $i \in \N$ by $\vecx_i$. Formally, $\vecx = (x_{ij})_{i \in \N, j \in \M}$ such that $\sum_{i \in \N} x_{ij} \leq 1$ for all $j \in \M$ and $\vecx_i = (x_{ij})_{j \in \M}$ where $(x_{ij})_{i \in \N, j \in \M}$ is a fractional allocation.

Throughout this paper, we assume that the valuation functions are monotone and non-negative. Further, Section \ref{sec:splc} assumes that valuations functions are $\SPLC$ and Section \ref{sec:sub-aps} assumes that the valuation functions are submodular. We define these classes of functions next.

\paragraph{$\SPLC$ valuations.}
Separable-concave, a.k.a. separable piecewise-linear concave ($\SPLC$), valuations is a well-studied class that subsumes additive valuations. Under $\SPLC$ valuations, we have $t$ types of goods, with each good $j \in [t]$ having $k_j$ copies. For an $\SPLC$ valuation function $f(\cdot)$, a value $f_{jk}$ is associated with $k^{th}$ copy of good $j$. The functions are concave so that for all $j \in [t]$, we have $f_{j1} \geq f_{j2} \geq \ldots \geq f_{jk_j}$. Finally, the valuations are additive across different goods. Formally, we let $\M$ be the set of all goods (that includes all copies of each type of good). For all $j \in [t]$, we denote by $\M_j \subseteq \M$ the subset of goods that are copies of good $j$. The value for any set $S \subseteq \M$ is given by $
    f(S) = \sum_{j \in [t]} \sum_{k \leq |S \cap \M_j|}f_{jk}$.

Throughout this paper, when we refer to fair division problems in the specific context of $\SPLC$ valuations, we denote the instance by $\SPLCins$.
\paragraph{Submodular Valuations.}
Submodular valuations are a popular class of valuations in the complement-free hierarchy. The valuations are characterized by the property of decreasing marginal utility. In particular, a valuation function $f(\cdot)$ is said to be submodular if and only if, for all goods $g \in \M$ and any subsets $S \subset Q \subseteq \M$, $
    f(g \mid S) \geq f(g \mid Q)$, where $f(g \mid S)$ denotes the marginal utility of good $g$ on set $S$, i.e. $f(g \mid S) \coloneqq f(g \cup S) - f(S)$.
\subsection{Fairness Notions}
\paragraph{Maximin Share ($\MMS$)}\label{sec:mms-prelims}
Maximin Share ($\MMS$) is defined for symmetric agents, that is for all agents $i \in [n]$, $b_i = \sfrac{1}{n}$. Consider a symmetric fair division instance $\MMSinssym$. The $\MMS$ value of an agent $i \in \N$ is defined as,
\begin{equation}\label{def:mms}
    \MMS_i^n([m]) \coloneqq \max_{(A_1, \ldots, A_n) \in \Pi_{[n]}([m])} \, \min_{k \in [n]} v_i(A_k)
\end{equation}
We refer to $\MMS_{i}^n([m])$ by $\MMS_i$ when the qualifiers $n$ and $m$ are clear from the context. The following claim about $\MMS$, called single good reduction is well known and we will use it in our analysis.

\begin{claim}\label{clm:single-good-redn}
    Given a fair division instance $\MMSinssym$, the $\MMS$ value of an agent is retained if we remove any single agent and any single good. That is,
        $\MMS_i^n([m]) \geq \MMS_i^{n-1}([m \setminus \{g\}])$ for all  $g \in [m]$
\end{claim}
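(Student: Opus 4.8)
The plan is to bound $\MMS_i^n([m])$ from below by exhibiting an explicit partition of all $m$ goods into $n$ bundles whose minimum value is at least $\mu := \MMS_i^{n-1}([m]\setminus\{g\})$. Fix the good $g$ and let $(B_1,\dots,B_{n-1})$ be an optimal $(n-1)$-partition of $[m]\setminus\{g\}$ witnessing $\mu$, so that $v_i(B_k)\ge\mu$ for every $k\in[n-1]$. Since the two instances differ by exactly one agent and one good, the natural idea is to convert this $(n-1)$-partition into an $n$-partition of $[m]$ by finding a home for the extra good $g$ and opening one additional bundle.

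First I would try the most direct construction: keep $B_1,\dots,B_{n-1}$ intact and take the singleton $\{g\}$ as the $n$-th bundle. This is a valid $n$-partition of $[m]$, and $n-1$ of its parts already meet the threshold $\mu$ by monotonicity; it remains only to argue that the new part also clears $\mu$.

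The hard part --- which I expect to be the crux --- is exactly this last step, and I do not expect it to go through. There is no reason for $v_i(\{g\})$ to be as large as $\mu$, so the minimum of the constructed partition can equal $v_i(\{g\})<\mu$. The difficulty is structural rather than an artifact of this particular choice: passing from $n-1$ to $n$ bundles while introducing only one new good asks us to carve an entire extra part of value $\mu$ out of essentially the same goods, and when the $B_k$ carry no surplus above $\mu$ and $g$ has small value, no relabeling or reshuffling of the $n$ parts can raise the minimum to $\mu$. (For additive $v_i$ this is immediate: if the total value $v_i([m])$ is below $n\mu$, then every $n$-partition has a part of value strictly less than $\mu$.) This signals that the minimum bundle value moves the other way under a single-agent, single-good reduction: the robust operation is to start from an optimal $n$-partition $(A_1,\dots,A_n)$ of $[m]$, delete $g$ from its part $A_\ell$, and merge $A_\ell\setminus\{g\}$ into another part, so that by monotonicity the resulting $(n-1)$-partition of $[m]\setminus\{g\}$ still has minimum at least $\MMS_i^n([m])$. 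I would therefore expect the inequality that actually holds --- and the form the analysis relies on --- to be the reverse one, $\MMS_i^{n-1}([m]\setminus\{g\})\ge\MMS_i^n([m])$, provable in one line by this merging argument; I flag the displayed direction itself as the main obstacle to a direct proof of the statement exactly as worded.
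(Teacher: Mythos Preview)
Your diagnosis is correct: the displayed inequality is the wrong way around and is false in general (e.g., additive valuations, three unit-valued goods, $n=2$: then $\MMS_i^2=1$ but $\MMS_i^1$ on the two remaining goods is $2$). The prose (``retained'') and, more importantly, every application of the claim in the paper---the single-good reduction step in the $\SPLC$ algorithm and its analysis, where the authors say ``the $\MMS$ values in the reduced instance can only increase''---use the reverse direction $\MMS_i^{n-1}([m]\setminus\{g\}) \ge \MMS_i^n([m])$. So this is a typo in the displayed formula, not a substantive claim you failed to prove. The paper does not prove the $\MMS$ claim (it is quoted as well known), but its proof of the $\APS$ analogue has exactly the same sign slip: the argument there constructs a feasible solution for the reduced $(n-1)$-agent instance from an optimal one for the $n$-agent instance, establishing the reverse of what is displayed.

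Your one-line merging argument---start from an optimal $n$-partition $(A_1,\dots,A_n)$ of $[m]$, delete $g$ from its part $A_\ell$, and merge $A_\ell\setminus\{g\}$ into any other part---is the standard proof of the intended direction and is what the paper has in mind.
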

\paragraph{Any Price Share ($\APS$)}\label{sec:aps-prelims}
Let $\Price$ denote the simplex of price vectors over the set of goods $\M = [m]$, formally, $\Price=\{(p_1,\dots,p_m)\ge 0\ |\ \sum_i p_i=1\}$. For an instance $\MMSins,$ the $\APS$ value of agent $i$ is defined as,
\begin{equation}\label{def:aps-price}
    \APS_i^n([m]) \coloneqq \min_{p \in \Price} \max_{S \subseteq [m], p(S) \leq b_i} v_i(S)
\end{equation}
where $p(S)$ is the sum of prices of goods in $S$. We will refer to $\APS_i^{[n]}([m])$ by $\APS_i$ when the qualifiers $n$ and $m$ are clear.

An alternate definition without prices is as follows. 

\begin{definition}[Any Price Share]\label{def:aps-sets}
 The $\APS$ value of an agent $i$ for an instance $\MMSins$ is the solution of the following program.
\begin{align*}
& \APS_i = \max z \\ 
    \sum_{T \subseteq [m]} &\lambda_{T} = 1 \\
    & \lambda_T = 0 &&\forall T \text{ such that } v_i(T) < z \\
     \sum_{T \subseteq [m]: j \in T} &\lambda_T \leq b_i &&\forall j \in [m] \\
    & \lambda_T \geq 0 &&\forall {T \subseteq [m]}
\end{align*}
\end{definition}
These definitions and their equivalence is stated in \cite{BabaioffEF21}. We also note the following claim that was proved in \cite{BabaioffEF21}
\begin{claim}\label{clm:mms-aps-rel}
    For any symmetric fair division instance, $\MMSinssym$, $\APS_i^n([m]) \geq \MMS_i^n([m])$ for all agents $i \in [n]$.
\end{claim}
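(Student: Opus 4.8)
The plan is to work directly from the price-based definition of $\APS$ in Equation \eqref{def:aps-price}, specializing to the symmetric case where $b_i = \sfrac{1}{n}$. Fix an agent $i$ and let $(A_1,\dots,A_n) \in \Pi_{[n]}([m])$ be a partition attaining the maximin value, so that $\min_{k\in[n]} v_i(A_k) = \MMS_i$. The key observation is a simple averaging (pigeonhole) step: for \emph{any} price vector $p \in \Price$, since the bundles partition $[m]$ we have $\sum_{k=1}^n p(A_k) = p([m]) = 1$, and therefore at least one bundle $A_{k^*}$ satisfies $p(A_{k^*}) \le \sfrac{1}{n} = b_i$. This bundle is affordable under $p$, and by the choice of the partition $v_i(A_{k^*}) \ge \MMS_i$.

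First I would conclude from the above that the inner maximization in \eqref{def:aps-price} is bounded below by $\MMS_i$ for every fixed $p$, i.e.
\[
\max_{S \subseteq [m],\, p(S)\le b_i} v_i(S) \;\ge\; v_i(A_{k^*}) \;\ge\; \MMS_i.
\]
Since this holds uniformly over all $p \in \Price$, taking the minimum over $p$ preserves the bound, giving $\APS_i^n([m]) = \min_{p}\max_{S} v_i(S) \ge \MMS_i^n([m])$, which is exactly the claim.

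As an alternative route I would instead use the set-based formulation in Definition \ref{def:aps-sets}, exhibiting an explicit feasible point: set $z = \MMS_i$ and $\lambda_{A_k} = \sfrac{1}{n}$ for each bundle of the maximin partition (aggregating weights if two bundles coincide). Feasibility is immediate to verify: the weights sum to $1$; every bundle carrying positive weight has value $v_i(A_k) \ge \MMS_i = z$ so the support constraint holds; and because the $A_k$ form a partition, each good $j$ lies in exactly one bundle, making $\sum_{T \ni j}\lambda_T = \sfrac{1}{n} = b_i$, which meets the per-good price constraint with equality. Since $\APS_i$ is the maximum over feasible $z$, it is at least $\MMS_i$.

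I do not expect a genuine obstacle here; the argument is direct and relies only on the partition summing prices to $1$ against $n$ bundles. The one point warranting care is the interaction between the partition structure and the price normalization $\sum_j p_j = 1$ — specifically ensuring the pigeonhole bound uses that the bundles are disjoint and exhaustive (so their prices sum to exactly $p([m])=1$), and, in the LP formulation, handling the degenerate case of coincident (e.g.\ empty) bundles by summing their $\lambda$-weights.
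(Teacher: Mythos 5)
Your proof is correct and complete. The paper itself gives no proof of this claim---it is stated with a citation to \cite{BabaioffEF21}---and your pigeonhole argument (any price vector $p \in \Price$ assigns total price $1$ to the $n$ bundles of a maximin partition, so some bundle is affordable at budget $b_i = \sfrac{1}{n}$ and is worth at least $\MMS_i$) is precisely the standard argument from that reference. Your alternative via Definition \ref{def:aps-sets}, taking $z = \MMS_i$ and $\lambda_{A_k} = \sfrac{1}{n}$ on the maximin bundles, is also valid and has the added benefit of exercising the same set-based LP formulation the paper later relies on in Lemma \ref{lem:aps-concave-closure}; your care about aggregating weights of coincident (empty) bundles is the right degenerate case to flag.
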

Similar to Claim \ref{clm:single-good-redn}, single good reduction also holds for $\APS$ when agents are symmetric.
\begin{restatable}{claim}{clmsinglegoodrednaps}
\label{clm:single-good-redn-aps}
    Given a symmetric fair division instance $\MMSinssym$, the $\APS$ value of an agent is retained if we remove any single agent and any single good. That is,
        $\APS_i^n([m]) \geq \APS_i^{n-1}([m \setminus \{g\}])$ for all $g \in [m]$
\end{restatable}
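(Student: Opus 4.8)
The plan is to work directly with the price-based definition \eqref{def:aps-price}, in which agent $i$'s entitlement $b_i$ enters only through the budget constraint $p(S)\le b_i$ and the price vector ranges over the simplex of the \emph{current} good set. Holding $b_i$ fixed under the reduction, both sides become minimization problems over prices whose only difference is that on the left the adversary prices all of $[m]$ while on the right it prices the smaller set $[m]\setminus\{g\}$. To prove $\APS_i^n([m])\ge \APS_i^{n-1}([m\setminus\{g\}])$ it therefore suffices to show that every price vector available to the adversary on $[m]$ can be transferred to a price vector on $[m]\setminus\{g\}$ that is at least as good for the adversary, i.e. that yields a value no larger than the one attained on $[m]$.

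Concretely, I would fix an arbitrary $p\in\Price$ over $[m]$ with $p_g<1$ and define $q$ over $[m]\setminus\{g\}$ by deleting the coordinate $g$ and renormalizing, $q_j = p_j/(1-p_g)$ for $j\ne g$; this is a valid price vector since $\sum_{j\ne g} q_j = (1-p_g)/(1-p_g)=1$. The key observation is that renormalization weakly \emph{raises} every remaining price, so $q(S)=p(S)/(1-p_g)\ge p(S)$ for all $S\subseteq[m]\setminus\{g\}$. Hence the affordable family shrinks, $\{S\subseteq[m]\setminus\{g\}:\,q(S)\le b_i\}\subseteq\{S\subseteq[m]\setminus\{g\}:\,p(S)\le b_i\}\subseteq\{S\subseteq[m]:\,p(S)\le b_i\}$, and taking the best set in each family gives $\max_{S\subseteq[m]\setminus\{g\}:\,q(S)\le b_i} v_i(S)\le \max_{S\subseteq[m]:\,p(S)\le b_i} v_i(S)$.

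Because $\APS_i^{n-1}([m\setminus\{g\}])$ is a minimum over price vectors on $[m]\setminus\{g\}$, it is at most the value attained at our particular $q$, so $\APS_i^{n-1}([m\setminus\{g\}])\le \max_{S\subseteq[m]\setminus\{g\}:\,q(S)\le b_i}v_i(S)\le \max_{S\subseteq[m]:\,p(S)\le b_i}v_i(S)$. As this bound holds for \emph{every} $p\in\Price$, taking the minimum over $p$ on the right yields $\APS_i^{n-1}([m\setminus\{g\}])\le \min_{p\in\Price}\max_{S\subseteq[m]:\,p(S)\le b_i}v_i(S)=\APS_i^n([m])$, which is exactly the claim. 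The degenerate case $p_g=1$ is handled separately: then no affordable subset can contain $g$ (its price exceeds $b_i=\sfrac{1}{n}<1$), so the right-hand value at such $p$ equals $v_i([m]\setminus\{g\})$, and monotonicity already gives the trivial bound $\APS_i^{n-1}([m\setminus\{g\}])\le v_i([m]\setminus\{g\})$, so the inequality holds for this $p$ as well.

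I expect the main subtlety to be the direction of the price transfer: one must pass from the larger good set to the smaller one and verify that renormalization can only shrink the affordable family and never enlarge it, which is what makes the reduced adversary at least as powerful and drives the inequality. The accompanying point that the per-agent budget $b_i$ is held fixed across the reduction (rather than being rescaled) is precisely what the argument relies on and should be stated explicitly; monotonicity of $v_i$ is needed only for the degenerate case and for the trivial bound $\APS_i^{n-1}([m\setminus\{g\}])\le v_i([m]\setminus\{g\})$.
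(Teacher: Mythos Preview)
Your argument hinges on the claim that ``the per-agent budget $b_i$ is held fixed across the reduction.'' In a \emph{symmetric} instance this is false: with $n$ agents $b_i=1/n$, and after one agent is removed the remaining agents have entitlement $1/(n-1)$. The paper's own proof uses exactly this, stating ``the weight of $i$ in this new instance is $\tfrac{1}{n-1}$.'' Once the budget increases, your containment $\{S:q(S)\le b_i\}\subseteq\{S:p(S)\le b_i\}$ no longer compares the right families; with $q_j=p_j/(1-p_g)$ one has $\{S:q(S)\le \tfrac{1}{n-1}\}\subseteq\{S:p(S)\le \tfrac{1}{n}\}$ only when $p_g\ge 1/n$, so the step fails in general. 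In fact the inequality as printed is a typo and is false: for $n=2$, $m=4$ with additive unit values, $\APS_i^2([4])=2<3=\APS_i^1([3])$. What ``retained'' means, and what the paper's proof actually establishes, is the reverse direction $\APS_i^{n-1}([m]\setminus\{g\})\ge\APS_i^n([m])$.

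The paper proves this via the set-based Definition~\ref{def:aps-sets}: take an optimal witness $\{(S_r,\lambda_{S_r})\}$ for $\APS_i^n([m])$, discard every $S_r$ containing $g$, and scale the remaining weights by $n/(n-1)$. Since the total weight on sets containing $g$ is at most $1/n$, the rescaled weights sum to at least $1$ and each surviving good's coverage is at most $\tfrac{n}{n-1}\cdot\tfrac{1}{n}=\tfrac{1}{n-1}$, so the same objective value is feasible in the reduced instance with the larger budget. A price-based argument in the correct direction would instead have to start from an arbitrary price vector $q$ on $[m]\setminus\{g\}$ and produce a $p$ on $[m]$ under which the agent with the \emph{smaller} budget $1/n$ does no better than under $q$ with budget $1/(n-1)$; your construction transfers prices the wrong way for this and does not address the budget change.
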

\begin{proof}
Consider the dual definition of $\APS$ based on sets, definition \ref{def:aps-sets}. Consider a fair division instance, $\MMSinssym$ and fix an agent $i \in \N$. Let a solution that achieves the maximum in \ref{def:aps-sets} for $i$ be $\mathcal{S} = \{S^*_1, \ldots, S^*_r\}$ and corresponding weights on the sets be $\lambda^*_{S_j}$ for $j \in [r]$. Suppose we remove a good $g$ from and some agent $i' \neq i$ to get a new instance. The weight of $i$ in this new instance is $\frac{1}{n-1}$. Suppose $g$ belongs to some sets in $\mathcal{S}$. For all $\lambda_{S}$ such that $S \cap \{g\} \ \emptyset$, we modify $\lambda'_{S} = \lambda_{S} \cdot \frac{n}{n-1}$. We set all other $\lambda'_S$ to be zero. Clearly, we are only putting non-zero weight on sets that have value at least $\APS_i$. For any good $g' \in \M \setminus \{g\}$, 
\begin{align*}
\sum_{S : g' \in S} \lambda'_{S} &= \sum_{S : g' \in S} \lambda_{S} \cdot \frac{n}{n-1} \\
&\leq \frac{n}{n-1} \cdot \frac{1}{n} &&(\because\text{ original $\lambda_S$ are feasible)}\\
&\leq \frac{1}{n-1}.
\end{align*}
Finally, we prove that we put a total weight of $1$ on the sets. Since in original instance, the total weight on sets that included $g$ cannot exceed $\frac{1}{n}$, the total weight on other sets, aka the sets we put non-zero weight on is at least $1 - \frac{1}{n} = \frac{n-1}{n}$. As we have scaled all weights by $\frac{n}{n-1}$, the new total weight is at least $1$.
\end{proof}

\paragraph{$\alpha$-Approximate Allocations.} Given a symmetric fair division instance, $\MMSinssym$, we say that an allocation $A = (A_1, \ldots, A_n)$ is $\alpha$-$\MMS$ if for all $i \in [n]$, $v_i(A_i) \geq \alpha \MMS_i$. Similarly, for an instance $\MMSins$, we say an allocation is $\alpha$-$\APS$ if for all agents $i \in \N$, $v_i(A_i) \geq \alpha \APS_i$. The notion of $\alpha$-$\MMS$ or $\alpha$-$\APS$ allocations also applies to fractional allocations.
\subsection{Concave Extension}
The valuation functions of agents are discrete set functions. In Section \ref{sec:upper-bounds} we discuss new upper bounds on $\MMS$ and $\APS$ values of agents. These upper bounds require us to consider fractional allocations. Therefore, we must extend the input valuation function to include values for fractional sets. We assume $f:2^V\rightarrow \mathbb{R}_+$ is a non-negative monotone real-valued set function over a finite ground set $V$. A natural way to extend $f$ to a continuous concave function over the hypercube $[0,1]^V$ is called the concave closure or concave extension, and is denoted by $f^+$. It is defined as follows.
\begin{definition}[Concave extension of $f$]\label{def:concave-ext}
\begin{multline}
f^+(x) = \max_{(\alpha_S)_{S \subseteq 2^{\M}}} \{ \sum_{S \subseteq V} f(S) \alpha_S\ | \ \sum_{S} \alpha_S = 1 \text{~and~} \sum_{S \ni i} \alpha_S = x_i \quad \forall i \in V \text{~and~} \alpha_S \ge 0 \quad \forall S\subseteq V\}.
\end{multline}
\end{definition}
\begin{remark}
For submodular valuations, concave extensions $\classNP$-hard to evaluate. Therefore, another extension called the Multilinear Extension is more widely used in Fair Division literature with these functions. For those familiar with this theory, we show in Appendix \ref{app:prelims} that under Multilinear Extension with submodular valuations, there are instances where no fractional $1$-$\MMS$ allocation exists. This is the main reason we work with concave extension here.
\end{remark}
\subsection{Market (Competitive) Equilibrium}
The theory of market equilibrium typically considers fractional allocations of goods. Therefore, in this part we will assume that agents have continuous valuation function. Consider an instance $(\N, \M, \widehat{v}_i(\cdot), (b_i)_{i \in \N})$ where, for $m=|\M|$,  $\widehat{v}_i: \mathbb{R}^{m} \rightarrow \mathbb{R}_+$ is a non-negative non-decreasing continuous valuation function of agent $i$. A market equilibrium constitutes prices $\vecp = (p_j)_{j \in \M}$ of goods and an allocation $\vecx = (x_{ij})_{i \in \N, j \in \M}$, where $x_{ij}$ is the amount of good $j$ allocated to agent $i$, such that the following conditions are satisfied \cite{AD}
\begin{enumerate}
    \item   Each agent $i \in \N$ is allocated her {\em optimal  bundle}, i.e., $(x_{i1},\dots,x_{im}) \in \argmax_{{y\in \mathbb R_+^{m}: \Sigma_{j\in \M}: y_j p_j \le b_i}} \widehat{v}_i(y)$.  
    \item Market clears: all goods $j \in \M$ are completely sold, i.e. $\sum_{i \in \N} x_{ij} = 1$, or if $\sum_{i \in N} x_{ij} < 1$ then $p_j=0$.
\end{enumerate}
\section{Upper Bounds via Market Equilibrium and Concave Extension} 
\label{sec:upper-bounds}
In this section we give new upper bounds that we can use to approximate $\APS$ and $\MMS$. These upper bounds are on $\APS$ value of agents and by Claim \ref{clm:mms-aps-rel} they also upper bound the $\MMS$ value. Section \ref{sec:ub-mkts} finds an upper bound using Market Equilibrium and section \ref{sec:ub-concave-ext} finds an upper bound using concave extensions of functions. They are based on primal and dual definitions of $\APS$ respectively.
\subsection{Market Equilibrium Based 
Bound}\label{sec:ub-mkts}
Consider a fair division instance, $\MMSins$. Let $\widehat{v}_i(\cdot)$ be any continuous extension of $v_i(\cdot)$ \emph{under which market equilibrium exists}. Then we can prove the following lemma.
\begin{restatable}{lemma}{lemubmkts}
\label{lem:ub-mkts}
    For any fair division instance, $\MMSins$, let $(\vecx^*, \vecp^*)$ denote a market equilibrium with continuous extension $\widehat{v}_i(\cdot)$. Then, for all $i \in \N$, $\APS_i \leq \widehat{v}_i(\vecx_i^*)$.
\end{restatable}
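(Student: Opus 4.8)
The plan is to use the price-based definition of $\APS$ (Equation \ref{def:aps-price}) together with the equilibrium prices $\vecp^*$ as a \emph{specific} choice of price vector in the outer minimization. Recall that $\APS_i = \min_{p \in \Price} \max_{S : p(S) \le b_i} v_i(S)$, so any feasible price vector $p$ gives an \emph{upper} bound on $\APS_i$ via the inner maximization $\max_{S : p(S) \le b_i} v_i(S)$. Thus it suffices to exhibit one price vector for which this inner max is at most $\widehat{v}_i(\vecx_i^*)$. The natural candidate is the equilibrium price vector $\vecp^*$ — but first I must normalize it, since the $\APS$ definition requires prices summing to $1$, whereas equilibrium prices need not. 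So the first step is to handle this normalization (see the obstacle paragraph below).

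Second, I would argue that the inner maximum at the equilibrium prices is bounded by the equilibrium utility. The key fact is the first equilibrium condition: $\vecx_i^*$ is a utility-maximizing bundle for agent $i$ subject to the budget constraint $\sum_j x_{ij} p_j^* \le b_i$. Now consider any integral set $S \subseteq \M$ with $p^*(S) \le b_i$; its indicator vector $\mathbf{1}_S$ is an affordable fractional bundle at prices $\vecp^*$ with budget $b_i$. Since $\widehat{v}_i$ extends $v_i$ (so $\widehat{v}_i(\mathbf{1}_S) = v_i(S)$) and $\vecx_i^*$ maximizes $\widehat{v}_i$ over all affordable fractional bundles, we get $v_i(S) = \widehat{v}_i(\mathbf{1}_S) \le \widehat{v}_i(\vecx_i^*)$. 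Taking the max over all such $S$ yields $\max_{S : p^*(S) \le b_i} v_i(S) \le \widehat{v}_i(\vecx_i^*)$, which combined with the minimization over prices gives the claim.

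The main obstacle is the \textbf{price normalization mismatch}: the $\APS$ definition fixes total prices to $1$, but market equilibrium prices $\vecp^*$ and budgets $\B$ are only normalized so that $\sum_i b_i = 1$. The resolution is that both the prices and the budget scale together homogeneously. If $P^* = \sum_j p_j^*$, I would rescale to $\tilde{p}_j = p_j^*/P^*$, so that $\tilde{p} \in \Price$, and correspondingly the affordability threshold $b_i$ becomes $b_i / P^*$. The affordability constraint $p^*(S) \le b_i$ is \emph{exactly} equivalent to $\tilde{p}(S) \le b_i/P^*$, so the set of affordable bundles is unchanged and the argument of the previous paragraph goes through verbatim with $\tilde{p}$ in place of $\vecp^*$. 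One subtlety I would verify is that in the symmetric-budget regime $b_i = 1/n$ relevant to $\MMS$, the equilibrium-price normalization is consistent with the convention $\sum_i b_i = 1$; this is where care is needed, but it does not affect the equivalence of the affordability constraints since that equivalence holds for any common scaling.

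A secondary point to check is that unallocated goods carry zero price (the second equilibrium condition), which ensures the budget is not wastefully consumed and that the equivalence $p^*(S) \le b_i \iff \tilde p(S) \le b_i/P^*$ is clean; but since we only ever use equilibrium prices as a feasible point in the outer $\min$, we do not actually need tightness or complementary slackness — we merely need that $\vecx_i^*$ beats every affordable \emph{integral} bundle, which follows directly from optimality over the larger fractional feasible region.
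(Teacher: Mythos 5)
Your core argument is the same as the paper's proof: use the price-based definition~(\ref{def:aps-price}), treat the equilibrium prices as one feasible point of the outer minimization, and observe that any integral set $S$ affordable at those prices has indicator vector $\mathbf{1}_S$ feasible for agent $i$'s fractional demand problem, whence $v_i(S) = \widehat{v}_i(\mathbf{1}_S) \le \widehat{v}_i(\vecx_i^*)$ by equilibrium optimality. That two-step chain is exactly the paper's (three-line) proof, and your second paragraph is correct.

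The gap is in your normalization step, which you yourself flag as the main obstacle but then resolve incorrectly. Definition~(\ref{def:aps-price}) fixes the budget at $b_i$; you are not free to replace it by $b_i/P^*$ after rescaling prices. Plugging $\tilde{p} = \vecp^*/P^*$ into the definition yields $\APS_i \le \max_{S:\ \tilde{p}(S) \le b_i} v_i(S)$, and $\{S : \tilde{p}(S) \le b_i\} = \{S : p^*(S) \le b_i P^*\}$, which is \emph{not} the set $\{S : p^*(S) \le b_i\}$ you need: if $P^* > 1$ it is strictly larger and contains sets unaffordable at equilibrium, for which the optimality argument says nothing. So the claim that ``the set of affordable bundles is unchanged'' fails as used. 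The repair is a one-line observation you omitted, and it is precisely where the second equilibrium condition (which you dismissed as not needed) enters: by market clearing, every good with positive price is fully sold, so $P^* = \sum_j p_j^* = \sum_j p_j^* \sum_i x_{ij}^* = \sum_i \sum_j x_{ij}^* p_j^* \le \sum_i b_i = 1$, with equality under the non-satiation hypothesis the paper invokes for equilibrium existence (Walras' law) --- this also explains why the paper's proof can treat $\vecp^*$ as lying in $\Price$ without comment. Given $P^* \le 1$, we have $b_i P^* \le b_i$, so the rescaled affordable family only \emph{shrinks}, and your bound goes through. (The degenerate case $P^* = 0$ is trivial: the budget constraint is then vacuous, so $\vecx_i^*$ maximizes $\widehat{v}_i$ over all of $[0,1]^{\M}$ and $\widehat{v}_i(\vecx_i^*) \ge v_i(\M) \ge \APS_i$ by monotonicity.)
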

\begin{proof}
    Consider Definition (\ref{def:aps-price}) of $\APS$ based on prices. We get,
    \begin{align*}
        \APS_i &= \min_{p \in \Price} \max_{S \subseteq [m], p(S) \leq b_i} v_i(S) \\
        &\leq \max_{S \subseteq [m], \vecp^*(S) \leq b_i} v_i(S) \\
        &\leq \widehat{v}_i(\vecx_i)
    \end{align*}
    where the second inequality follows because $\vecp^*$ is one particular set of prices, therefore the minimum over all prices can be at most the value at these prices. The third inequality follows because $\vecx_i$ is a fractional optimal set that agent can get under prices $\vecp^*$ and every optimal integral bundle is a potential feasible solution to the fractional optimal.
\end{proof}
\begin{remark}
For any monotone, non-negative set function, $v(\cdot)$, its concave extension $v^+(\cdot)$ will be non-decreasing, non-negative, continuous, and concave. As long as agents together are non-satiated for the available supply of goods, {\em i.e.,} have non-zero marginal up to consuming all the goods, a market (competitive) equilibrium is known to exist \cite{AD}. This will give us an upper bound to work with.
\end{remark}
\subsection{Concave Extension Based 
Bound}\label{sec:ub-concave-ext}
For a given set function $f:2^V \rightarrow \mathbb{R}_+$ and a real value $\gamma \ge 0$ we define the truncation of $f$ to $\gamma$, denoted by $\truncate{f}{\gamma}$, as follows: $\truncate{f}{\gamma}(A) = \min\{f(A), \gamma\}$ for each $A \subseteq V$. It is well-known and easy to verify that truncation of a monotone submodular function yields another monotone submodular function. We make a connection between APS value and the concave closure via the following lemma.
\begin{restatable}{lemma}{lemapsconcaveclosure}
\label{lem:aps-concave-closure}
Consider a fair division instance, $\MMSins$. For all agents $i \in \N$, $\APS_i = \sup \{ z: v_{i\downarrow z}^+(b_i,b_i,\ldots,b_i) = z\}$.
\end{restatable}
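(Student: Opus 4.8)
The goal is to show that $\APS_i$ equals the supremum of those values $z$ for which the truncated concave extension satisfies $v_{i\downarrow z}^+(b_i,\ldots,b_i) = z$. The natural strategy is to work from the dual (set-based) definition of $\APS$ in Definition~\ref{def:aps-sets} and to recognize that the constraints there are, after truncation, exactly the constraints defining the concave extension evaluated at the uniform point $(b_i,\ldots,b_i)$. The plan is to fix a candidate value $z$ and argue a two-way equivalence: a feasible solution $(\lambda_T)$ in the $\APS$ program with objective value $\geq z$ corresponds to a collection of weights that witnesses $v_{i\downarrow z}^+(b_i,\ldots,b_i) \geq z$, and conversely.

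First I would fix $z$ and analyze what $v_{i\downarrow z}^+(b_i,\ldots,b_i) = z$ means. By Definition~\ref{def:concave-ext}, $v_{i\downarrow z}^+(b_i,\ldots,b_i)$ is the maximum of $\sum_S \truncate{(v_i)}{z}(S)\,\alpha_S = \sum_S \min\{v_i(S), z\}\,\alpha_S$ over distributions $(\alpha_S)$ with $\sum_S \alpha_S = 1$ and $\sum_{S \ni j}\alpha_S = b_i$ for all $j$. Because every truncated value is at most $z$ and the $\alpha_S$ sum to one, this maximum is at most $z$, with equality iff all the weight sits on sets $S$ with $\min\{v_i(S),z\} = z$, i.e.\ on sets with $v_i(S) \geq z$. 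So $v_{i\downarrow z}^+(b_i,\ldots,b_i) = z$ holds exactly when there is a distribution $(\alpha_S)$ with $\sum_S\alpha_S = 1$, $\sum_{S\ni j}\alpha_S = b_i$ for every $j$, and $\alpha_S = 0$ whenever $v_i(S) < z$. This is almost verbatim the feasibility system in Definition~\ref{def:aps-sets}, the only discrepancy being that the $\APS$ program has the inequality $\sum_{S\ni j}\lambda_S \leq b_i$ whereas the concave-extension point fixes equality $\sum_{S\ni j}\alpha_S = b_i$.

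The second step is to reconcile the inequality-versus-equality gap and thereby establish the two inclusions. For the direction that $v_{i\downarrow z}^+(b_i,\ldots,b_i) = z$ implies $\APS_i \geq z$: an equality-feasible distribution is immediately feasible for the $\APS$ program (equality implies the inequality), and its support lies on sets of value $\geq z$, so $\APS_i \geq z$. For the reverse direction, given an $\APS$-feasible $(\lambda_T)$ achieving objective $z$, the per-good constraints are inequalities $\sum_{T\ni j}\lambda_T \leq b_i$, so I must massage it into an equality-satisfying distribution without lowering the objective; here I would invoke monotonicity of $v_i$ to fill up slack — for instance by adding weight on supersets, or on sets already of value $\geq z$, so that each coordinate marginal is pushed up to exactly $b_i$ while keeping $\sum_S\alpha_S = 1$ and the support confined to high-value sets. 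Finally, taking suprema over all valid $z$ on both sides yields the claimed equality; I would also note the supremum is attained, so the statement could be read with equality.

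The main obstacle I anticipate is precisely this last reconciliation: showing that the inequality marginal constraints in the $\APS$ program can always be tightened to equalities at no cost to the objective. Monotonicity should make this routine (extra goods only help), but one must be careful that the rebalancing keeps $(\alpha_S)$ a genuine probability distribution and does not push any marginal \emph{above} $b_i$; a clean way is to observe that if some coordinate marginal is strictly below $b_i$, one can always augment the supporting sets with the slack goods while preserving $\sum_S \alpha_S = 1$, since the truncated objective is unaffected once values already exceed $z$. The rest of the argument is a direct transcription between the two optimization programs.
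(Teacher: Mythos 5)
Your proof is correct and follows essentially the same route as the paper's: both directions pass through the correspondence between the set-based program of Definition~\ref{def:aps-sets} and the concave-extension LP of Definition~\ref{def:concave-ext} at the point $(b_i,\ldots,b_i)$, observing that the truncated objective equals $z$ exactly when all weight sits on sets with $v_i(S)\ge z$. You are in fact slightly more careful than the paper, which silently treats the inequality marginals $\sum_{T\ni j}\lambda_T\le b_i$ of the $\APS$ program as if they matched the equality constraints $\sum_{S\ni j}\alpha_S=b_i$ of the concave extension; your monotonicity-based augmentation of support sets with slack goods (splitting weights so that each marginal is raised to exactly $b_i$ without overshooting) is precisely the patch that step needs.
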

\begin{proof}
    We first show that $\APS_i \leq \sup \{ z: v_{i\downarrow z}^+(b_i,b_i,\ldots,b_i) = z\}$. Consider  $v_{i\downarrow \APS_i}^+(b_i, \ldots, b_i)$. By the dual Definition \ref{def:aps-sets} of $\APS_i$, there exist sets $\mathcal{S} = \{S_1, \ldots, S_r\}$ such that $v_i(S_j) \geq \APS_i$ for all $j \in [r]$. Therefore, $v_{i \downarrow \APS_i}(S_j) = \APS_i$ for all $j \in [r]$. The program of Definition $\ref{def:aps-sets}$ also associates variables $\lambda_{S_j}$ with these sets. Setting $\alpha_{S_j} = \lambda_{S_j}$ for all $S_j \in \mathcal{S}$ and $\alpha_{S_j} = 0$ for all other sets is a feasible solution to the program in Definition\ref{def:concave-ext}. Since value of each set is $\APS_i$, the objective value is $\APS_i$. Thus, this direction is proved.

    We now prove $\APS_i \geq \sup \{ z: v_{i\downarrow z}^+(b_i,b_i,\ldots,b_i) = z\}$. Let $z^* = \sup \{ z: v_{i\downarrow z}^+(b_i,b_i,\ldots,b_i) = z\}$. Then there exist sets $S_1, \ldots, S_r$ and values $\alpha_{S_j}$ for $j \in [r]$ that satisfy Definition \ref{def:concave-ext}. Since the function is truncated at $z^*$ and $\sum_{j \in [r]} \alpha_{S_j} v_i(S_j) = z^*$ with all $\alpha_{S_j} \leq 1$, we have that $v_i(S_j) = z^*$ for all $j \in [r]$. Thus, these sets $\{S_1, \ldots S_r\}$ and corresponding $\alpha_{S_j}$'s satisfy the constraints in Definition \ref{def:aps-sets} of $\APS_i$. Thus every feasible solution to the constraint set in Defintion \ref{def:concave-ext} is a feasible solution for the $\sf LP$ based definition of $\APS$, Definition \ref{def:aps-sets}. This implies that $\APS_i \geq \sup \{ z: v_{i\downarrow z}^+(b_i,b_i,\ldots,b_i) = z\}$ proving the claim.
\end{proof}

 \begin{remark}
     Note that the truncation is important in the indivisible setting, for otherwise the relaxation is too weak. We also see that $\APS_i$ can be computed if one can evaluate the concave closure of the truncation of $v_i$
 \end{remark}
\section{$\nicefrac{1}{2}$-$\MMS$ for $\SPLC$ valuations}\label{sec:splc}

In this section, we give a polynomial time algorithm for computing an allocation that gives each agent a bundle they value at least half as much as their $\MMS$ value. As mentioned in the Introduction, all our results in this section hold for symmetric $\APS$ also. The section is organized as follows. In Section \ref{sec:frac-1-mms} we give a linear relaxation for $\SPLC$ valuations and show that under this relaxation, there exists a fractional allocation that gives each agent their $\MMS$ value. In Section \ref{sec:lp-gap-round} we give a linear program and show that if the program is feasible, we can, in polynomial time find an integral solution where each agent loses at most one good. Then in Section \ref{sec:half-mms-exist} we use the results in Sections \ref{sec:frac-1-mms} and \ref{sec:lp-gap-round} to prove existence of $\frac{1}{2}$-$\MMS$ allocation under $\SPLC$ valuations. Finally in Section \ref{sec:alg-half-mms}, we address the computational aspects and give a polynomial time algorithm to compute the $\frac{1}{2}$-$\MMS$ allocation.
\subsection{Fractional $1$-$\MMS$ allocation exists}\label{sec:frac-1-mms}
Recall that under $\SPLC$ valuations, we have $t$ goods and for each good, $j \in [t]$ there are $k_j$ copies. Every agent $i \in [n]$ has a value $v_{ijk}$ associated with $k^{th}$ copy of $j^{th}$ good. For an $\SPLC$ valuation $v_i(\cdot)$, we denote its linear extension by $v_i^L(\cdot)$ and define it as follows.
\begin{definition}[Linear Extension for $\SPLC$ function]
    Let variables $x_{ijk}$ denote the fraction of $k^{th}$ copy of good $j$ that agent $i$ receives. We use $\vecx_i = (x_{ijk})_{j \in [t], k \in [k_j]}$ to denote the vector of fractional allocation received by agent $i$. The value of the set is then defined as
    \begin{equation}
        v_i^L(\vecx_i) \coloneqq \sum_{j \in [t]} \sum_{k \in k_j} v_{ijk} x_{ijk}
    \end{equation}
\end{definition}
It is known from \cite{SPLC1} that a market equilibrium exists under this linear relaxation. Combining this with Lemma \ref{lem:ub-mkts}, we immediately get the following lemma.
\begin{lemma}\label{lem:exist-frac-mms}
    Given an $\SPLC$ fair division instance $\SPLCins$, there exists a fractional allocation $\vecx = (x_{ijk})_{i \in [n], j \in [t], k \in [k_j]}$ such that $v_i^L(\vecx_i) \geq \MMS_i$.
\end{lemma}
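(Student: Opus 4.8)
The plan is to combine the market-equilibrium upper bound of Lemma~\ref{lem:ub-mkts} with the known existence of a market equilibrium under the linear extension of an $\SPLC$ valuation. Concretely, I would set $\widehat{v}_i(\cdot) = v_i^L(\cdot)$ for each agent $i$ as the continuous extension of $v_i(\cdot)$. Observe that $v_i^L$ is non-negative, non-decreasing, and continuous (indeed linear, hence concave), so it is a legitimate continuous extension of the $\SPLC$ set function $v_i$ in the sense required by Lemma~\ref{lem:ub-mkts}. The cited result of \cite{SPLC1} guarantees that a market (competitive) equilibrium $(\vecx^*, \vecp^*)$ exists for the Fisher/Arrow--Debreu market in which every agent $i$ has budget $b_i = \sfrac{1}{n}$ and valuation $v_i^L$.

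Given such an equilibrium, Lemma~\ref{lem:ub-mkts} applied with $\widehat{v}_i = v_i^L$ immediately yields $\APS_i \le v_i^L(\vecx_i^*)$ for every agent $i$. Since the instance is symmetric (all $b_i = \sfrac{1}{n}$), Claim~\ref{clm:mms-aps-rel} gives $\MMS_i \le \APS_i$, and chaining these two inequalities produces $\MMS_i \le v_i^L(\vecx_i^*)$. Taking $\vecx = \vecx^*$ as the desired fractional allocation, we get $v_i^L(\vecx_i) \ge \MMS_i$ for all $i$, which is exactly the claimed statement. The equilibrium allocation $\vecx^*$ is a genuine fractional allocation of the goods (the market-clearing condition ensures each good is fully allocated and no good is over-allocated), so it satisfies the structural requirements on $\vecx = (x_{ijk})_{i,j,k}$ in the lemma.

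The only step needing a moment of care is verifying that the hypothesis of Lemma~\ref{lem:ub-mkts}, namely that a market equilibrium exists under $v_i^L$, is actually met; this is precisely what \cite{SPLC1} provides for $\SPLC$ valuations, and the remark following Lemma~\ref{lem:ub-mkts} already notes that the non-satiation condition needed for existence holds whenever agents have non-zero marginals up to consuming all goods (true here since each $v_{ijk} \ge 0$, and one may assume positivity or reason with the supporting prices directly). I expect no substantive obstacle: the content of this lemma is essentially an application of the two previously established facts (the upper bound and the existence of $\SPLC$ equilibria) glued together by $\MMS_i \le \APS_i$. The main thing to state explicitly is that the linear extension $v_i^L$ is the right continuous extension to feed into Lemma~\ref{lem:ub-mkts}, and that the resulting equilibrium allocation serves as the witnessing fractional $1$-$\MMS$ allocation.
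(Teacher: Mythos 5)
Your proof is correct and follows exactly the paper's own route: the paper likewise instantiates Lemma~\ref{lem:ub-mkts} with the linear extension $v_i^L$, invokes \cite{SPLC1} for existence of the equilibrium, and concludes via $\MMS_i \le \APS_i$ (Claim~\ref{clm:mms-aps-rel}) with the equilibrium allocation as the witness. Your added care about non-satiation and market clearing is consistent with the remark following Lemma~\ref{lem:ub-mkts} and introduces no deviation.
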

The above lemma proves existence of $1$-$\MMS$ fractional allocations under linear relaxation. However, market equilibrium under this relaxation is $\classPPAD$-hard to compute \cite{garg2012complementary}. In the next section we give a linear program that uses Lemma \ref{lem:exist-frac-mms} to compute a (fractional) $1$-$\MMS$ allocation in polynomial time.
\subsection{Linear Program to Compute a Fractional $1$-$\MMS$ allocation} \label{sec:lp-gap-round}
Given an $\SPLC$ fair division instance, $\SPLCins$, consider the following Linear Feasibility Program parameterized by $(\mu_i)_{i \in \N}$.

\begin{equation}\label{lp:splc}
\large \begin{array}{ccc}
    \sum_{j} \sum_{k} v_{ijk} x_{ijk}  \geq   \mu_i & \text{for all } i \in \N \\ \\
    \sum_{i} \sum_{k} x_{ijk} \leq  k_j  & \text{for all } j \in [t]\\ \\
    0  \leq x_{ijk}  \leq 1 & \text{for all } i, j, k
\end{array}
\end{equation}
We say that $(\mu_i)_{i \in \N}$ are \emph{feasible} if there exists a feasible solution to (\ref{lp:splc}) for the given $(\mu_i)_{i \in \N}$.

The rest of this section is dedicated to proving the following lemma which says that if $\sf LP$ (\ref{lp:splc}) is feasible, we can get an integral allocation where each agent gets a value of at least $\mu_i - {\sf Max}_i$ where ${\sf Max}_i$ is the maximum value that agent $i$ has for any good, i.e. ${\sf Max}_i = \max_{j,k} v_{ijk}$.
\begin{lemma}\label{lem:main-splc}
    Given $(\mu_i)_{i \in \N}$ that are feasible, in polynomial time we can find an integral allocation such that for all $i \in \N$, agent $i$ gets a set $A_i$ with $v_i(A_i) \geq \mu_i - \max_{j,k} v_{ijk}$.
\end{lemma}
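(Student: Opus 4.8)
The plan is to start from a vertex (basic feasible) solution of the linear program \ref{lp:splc} and round it to an integral allocation by first rounding every fractional variable \emph{down} and then adding back, for each agent, all but at most one of the goods it held fractionally. Concretely, I would first normalize the fractional solution to a \emph{top-filled} form: within each pair $(i,j)$, by an exchange argument that moves mass onto the highest-valued (lowest-indexed) copies, I may assume $x_{ij1}\ge x_{ij2}\ge\cdots$ with at most one fractional copy per block. Since $v_{ij1}\ge v_{ij2}\ge\cdots$, this exchange can only \emph{increase} $\sum_{j,k}v_{ijk}x_{ijk}$ and it preserves both the per-type supply constraints (which depend only on the column sums $\sum_{i,k}x_{ijk}$) and the box constraints, so the value constraints $\ge \mu_i$ still hold. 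Among such top-filled solutions I take a vertex, which lets me control the number of fractional blocks. Writing $c_{ij}=\sum_k x_{ijk}$ for the (fractional) count of type-$j$ copies held by $i$, the plan is to first assign each agent its floor copies $1,\dots,\lfloor c_{ij}\rfloor$ of every type, and then selectively round some blocks up to $\lceil c_{ij}\rceil$.

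The structural step is to show the fractional support is sparse. At a vertex, the number of variables strictly between $0$ and $1$ is bounded by the number of tight non-box constraints, namely the $n$ value constraints plus the $t$ supply constraints; together with top-filling (at most one fractional copy per block) this means at most $n+t$ blocks $(i,j)$ are fractional. I would then form the bipartite incidence (multi)graph $H$ on vertex set $\N\cup[t]$ with one edge for each fractional block. Because $H$ has at most $n+t=|V(H)|$ edges, every connected component of $H$ contains at most one cycle. This pseudoforest property is exactly what rules out the pathological dense configurations (many agents each fractional on a few scarce types) in which one could not recover enough value.

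Given the floor assignment, the residual capacity of type $j$ is $r_j=k_j-\sum_i\lfloor c_{ij}\rfloor$, an integer satisfying $r_j\ge\sum_i f_{ij}$ where $f_{ij}=c_{ij}-\lfloor c_{ij}\rfloor$. I would then seek an \emph{add-back set} $U\subseteq E(H)$ — blocks to round up by one copy — such that each type $j$ receives at most $r_j$ added edges (so supply stays feasible) and each agent is incident to at most one fractional block \emph{not} in $U$. This is a degree-constrained subgraph / bipartite $b$-matching problem, and I would argue its feasibility from the pseudoforest structure of $H$ together with the capacities $r_j$, computing the actual set by a matching or max-flow routine in polynomial time. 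Granting such a $U$, the value bound follows cleanly: rounding a block up gains the full marginal $v_{i,j,\lceil c_{ij}\rceil}\ge f_{ij}\,v_{i,j,\lceil c_{ij}\rceil}$, so agent $i$'s final value is at least $\mu_i$ minus the residual value of the single block it misses, and that loss is at most $v_{i,j,\lceil c_{ij}\rceil}\le {\sf Max}_i$. Since an integral SPLC bundle is always valued using the top marginals for the count received, assigning concrete copies only helps, so $v_i(A_i)\ge \mu_i-{\sf Max}_i$ as required. Solving the LP, extracting a vertex, top-filling, and the matching computation are all polynomial.

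The main obstacle is the existence (and polynomial-time construction) of the add-back set $U$ in the third step: the per-type supply constraints (rather than per-copy) and the concave value constraints prevent the usual clean ``support is a forest'' argument, leaving only the weaker pseudoforest guarantee, and the fractional certificate $w_{ij}=1-f_{ij}$ does not directly satisfy the agent-side ``drop at most one'' requirement. The crux is therefore to combine the sparsity of the vertex solution, the single-cycle-per-component structure of $H$, and the integrality $r_j\ge\lceil\sum_i f_{ij}\rceil$ into a proof that each agent can retain all but one of its fractional blocks while respecting every residual capacity $r_j$; this is where the SPLC top-filling and the vertex bound must be used in tandem, and it is the step I expect to require the most care.
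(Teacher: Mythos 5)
Your proposal is incomplete in exactly the place you flag, and it also contains one incorrect inference earlier in the chain. First, the step ``$H$ has at most $n+t=|V(H)|$ edges, hence every connected component of $H$ contains at most one cycle'' is a non sequitur: a graph with $|E|\le |V|$ globally can still have a component with $v_c+1$ edges (two cycles) as long as some other component is a tree with spare slack. The per-component bound you actually need \emph{is} recoverable from vertexhood, but by a different argument that you never make: each tight value constraint involves only one agent's fractional variables, and each tight supply constraint only one type's, so the tight-constraint system at a vertex decomposes across the connected components of $H$; uniqueness of the vertex restricted to a single component then forces (number of fractional variables in that component) $\le$ (agents plus types in that component). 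Second, and more centrally, the existence of the add-back set $U$ --- which you yourself identify as the crux --- is asserted, not proved, so the proof as written does not establish the lemma. For what it is worth, once per-component pseudoforestness is correctly in hand, this step can be closed: orient each unicyclic component so every vertex has out-degree exactly one (and root each tree component at an agent, orienting edges toward the root), and let $U$ consist of all edges except the agents' out-edges. Then every agent drops at most one block; every in-edge of a type is some agent's out-edge and hence excluded, so each type carries at most one edge of $U$ (its own out-edge), which is within capacity since $r_j\ge \sum_i f_{ij}>0$ and $r_j$ is an integer gives $r_j\ge 1$ whenever type $j$ has a fractional edge. Your value accounting (floor allocation plus one extra copy per block in $U$, losing at most $f_{ij}\,v_{ij\lceil c_{ij}\rceil}\le \max_{j,k}v_{ijk}$ on the single dropped block) is correct modulo these two repairs.

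It is worth noting that this is a genuinely different route from the paper's. The paper works with the optimization program (\ref{lp:splc-opt}) and its Lagrangian: Lemma \ref{lem:mbb_last_good} shows via complementary slackness that all of an agent's fractionally allocated goods have the same value-to-price ratio $\frac{1}{1+\beta_i}$, so each agent's fractional value is proportional to the total price-weight incident to her in the Fractional Price Allocation Graph. This lets the paper cancel cycles \emph{value-preservingly} (Lemmas \ref{lem:splc-acyclicity} and \ref{lem:splc-acyclicity-value}), after which the graph is a forest and the rounding is immediate: root each tree at an agent, assign every good to its parent agent, and each agent loses at most its one parent good. In other words, the paper eliminates cycles using dual information and never needs vertex-counting or capacity bookkeeping, whereas your approach avoids duality entirely but must instead carry the vertex sparsity, the per-component pseudoforest structure, and the degree-constrained subgraph argument --- precisely the combinatorial load-bearing steps that are missing or misargued in the current write-up.
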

\begin{remark}
    The rest of this section proves Lemma \ref{lem:main-splc}. Readers not wanting to go into technical details of this proof can move to Section \ref{sec:half-mms-exist} to see how to use it for proving the existence of $\frac{1}{2}$-$\MMS$ allocations.
\end{remark}
To prove Lemma \ref{lem:main-splc}, we need to start with a fractional solution that satisfies some properties enabling us to round it. Towards this, we consider the fractional solution that maximizes the following optimization program.
\begin{equation}\label{lp:splc-opt}
\large \begin{array}{ccc}
    \max \sum_{i, j, k} v_{ijk} x_{ijk} && \\ \\
    \sum_{j} \sum_{k} v_{ijk} x_{ijk}  \geq   \mu_i & \text{for all } i \in \N \\ \\
    \sum_{i} \sum_{k} x_{ijk} \leq  k_j  & \text{for all } j \in [t]\\ \\
    x_{ijk}  \leq 1 & \text{for all } i, j, k \\ \\
    x_{ijk} \geq 0 & \text{for all } i, j, k
\end{array}
\end{equation}
\label{sec:splc-acyclic}
Denote the optimal of this program with $\vecx^* = (x^*_{ijk})_{i \in \N, j \in [t], k \in [k_j]}$. We start with the following simple claim.
\begin{restatable}{claim}{clmlastgoodfrac}
    $\vecx^*$ can be modified so that for all $i \in \N, j \in [t]$, $x^*_{ijk}$ is fractional for exactly one $k$.
\end{restatable}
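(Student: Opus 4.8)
The plan is to exploit the concavity of the $\SPLC$ valuations --- that is, the ordering $v_{ij1} \ge v_{ij2} \ge \cdots \ge v_{ijk_j}$ within each good type --- together with the structure of the supply constraint in (\ref{lp:splc-opt}), which caps only the \emph{total} amount $\sum_i\sum_k x_{ijk} \le k_j$ consumed of each type and imposes no coupling between agents at the level of individual copies. I would fix an agent $i \in \N$ and a good type $j \in [t]$ and argue that agent $i$'s allocation across the copies of $j$ can be ``left-packed'' without affecting feasibility or the objective value.

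Concretely, first I would set $s_{ij} := \sum_{k \in [k_j]} x^*_{ijk}$, the total fractional amount of good $j$ that agent $i$ receives under $\vecx^*$, and replace agent $i$'s copy-allocation of $j$ by the vector that assigns $1$ to copies $1,\dots,\lfloor s_{ij}\rfloor$, the value $s_{ij}-\lfloor s_{ij}\rfloor$ to copy $\lceil s_{ij}\rceil$, and $0$ to all remaining copies. This new vector keeps $\sum_k x_{ijk}=s_{ij}$ unchanged and lies in $[0,1]$ coordinatewise, so the per-copy bounds and the supply constraint for type $j$ (which depends only on the per-agent totals $\sum_k x_{ijk}$) are preserved; moreover it touches neither other agents nor other good types, so no other constraint is affected.

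Next I would verify that the value term $\sum_{k} v_{ijk} x_{ijk}$ does not decrease under this rearrangement. Since we only move fractional mass from higher-indexed copies to lower-indexed copies, and $v_{ij1}\ge v_{ij2}\ge\cdots$, a standard rearrangement argument shows that left-packing maximizes $\sum_k v_{ijk} x_{ijk}$ among all vectors with fixed total $s_{ij}$ and entries in $[0,1]$. Hence the constraint $\sum_{j,k} v_{ijk}x_{ijk}\ge \mu_i$ continues to hold, and the objective $\sum_{i,j,k} v_{ijk}x_{ijk}$ weakly increases. Performing this operation for every pair $(i,j)$ --- the operations act on disjoint blocks of variables, so they do not interfere --- yields a solution in which agent $i$'s allocation of each type has at most one fractional copy, namely the copy carrying the fractional part of $s_{ij}$ (and exactly one when $s_{ij}\notin\mathbb{Z}$).

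The main point requiring care is the interaction with optimality: the rearrangement only \emph{weakly} increases each value term, so I must observe that optimality of $\vecx^*$ forces equality (the objective cannot strictly increase), which is precisely what certifies that the left-packed solution is still optimal rather than merely feasible. The underlying rearrangement inequality is routine given the sorted coefficients, and the supply constraint poses no difficulty because it constrains only aggregate consumption of each type rather than individual copies across agents.
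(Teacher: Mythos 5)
Your proposal is correct and is essentially the paper's argument: the paper performs the same ``left-packing'' via repeated pairwise shifts ($x^*_{ijk} \leftarrow \min\{1, x^*_{ijk}+x^*_{ijk'}\}$, $x^*_{ijk'} \leftarrow \max\{0, x^*_{ijk'}-(1-x^*_{ijk})\}$ for $k<k'$), whereas you write the packed vector in closed form from the total $s_{ij}$ and invoke a rearrangement inequality --- the substance (concavity $v_{ij1}\ge\cdots\ge v_{ijk_j}$, per-type aggregate supply constraint, and optimality ruling out a strict objective increase) is identical. Your observation that the property is really ``at most one fractional copy'' (exactly one only when $s_{ij}\notin\mathbb{Z}$) matches the paper's intended reading, so there is no gap.
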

\begin{proof}
    For any agent $i \in [n]$, for $k < k'$ if we have $x^*_{ijk'} \neq 0$ and $x^*_{ijk} < 1$, we can change $x^*_{ijk} = \min\{1, x^*_{ijk} + x^*_{ijk'}\}$ and $x^*_{ijk'} = \max\{0, x^*_{ijk'} - (1 - x^*_{ijk})\}$. It is easy verify that this does not affect the feasibility of the program or the value of optimal solution.
\end{proof}
Therefore, for each agent and each good, we have only one copy fractionally allocated\footnote{Note that in the fractional allocation, there might be more than one copy of a good that is fractionally allocated among the agents, however each agent is only allocated one copy fractionally.}. This lets us define the following graph of fractional allocations.
\begin{definition}[Fractional Allocation Graph]\label{def:frac-graph}
   This is a bipartite graph with agents on one side and one single copy of each good on the other side. We draw edge between agent $i$ and good $j$ if there is a copy of good $j$ that agent $i$ gets fractionally. The weight of this edge is the fractional amount of the good assigned to the agent.
\end{definition}

The optimal solution obtained might be such that the Fractional Allocation Graph has cycles in it. We prove the following Lemma that says even if the graph has cycles, they can be eliminated without reducing the value received by any agent.
\begin{restatable}{lemma}{lemsplcacyclicityvalue}
\label{lem:splc-acyclicity-value}
    Given any fractional optimal solution to $\sf LP$ \emph{(\ref{lp:splc-opt})} $\vecx^*$, we can get a fractional solution, $\Bar{\vecx}$ such that the Fractional Allocation Graph corresponding to $\Bar{\vecx}$ has no cycles and $v^L_i(\vecx^*_i) = v^L_i(\Bar{\vecx}_i)$ for all $i \in \N$.
\end{restatable}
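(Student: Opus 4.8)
The plan is to eliminate cycles one at a time by a local, value-preserving perturbation that is supported only on the edges of a single cycle, repeating until the graph is a forest. Throughout I will maintain that $\Bar{\vecx}$ stays a valid fractional allocation (it satisfies the supply constraints $\sum_i\sum_k x_{ijk}\le k_j$ and the box constraints $x_{ijk}\in[0,1]$) and that each agent's linear value $v^L_i$ is preserved \emph{exactly}. Since the edges of the Fractional Allocation Graph (Definition~\ref{def:frac-graph}) are precisely the copies that are fractionally held, i.e.\ those with $x_{ijk}\in(0,1)$, driving any such variable to $0$ or $1$ deletes the corresponding edge. Each perturbation will delete at least one edge and create none, so the total edge count strictly decreases and the procedure terminates with an acyclic graph.

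Fix a cycle $i_1,j_1,i_2,j_2,\dots,i_\ell,j_\ell,i_1$ (indices cyclic). By the preceding claim each agent holds exactly one fractional copy of each good, so the graph is simple and $\ell\ge 2$; write $\alpha_s$ for the value of the fractional copy of $j_s$ held by $i_s$ (edge $(i_s,j_s)$) and $\beta_s$ for the value of the fractional copy of $j_s$ held by $i_{s+1}$ (edge $(i_{s+1},j_s)$). If some edge on the cycle has zero value, I simply drive that one variable to $0$: this deletes the edge, leaves every $v^L_i$ unchanged, and only decreases a good's total, so feasibility is preserved. Hence I may assume all $\alpha_s,\beta_s>0$.

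The key step is to build a one-parameter family of cycle-supported perturbations that preserves every agent's value. I keep each agent's value fixed by linking the changes on its two incident cycle edges through the constraint $\alpha_s\,\Delta_{(i_s,j_s)}+\beta_{s-1}\,\Delta_{(i_s,j_{s-1})}=0$, and I keep the total assigned to every good on the cycle fixed except for one distinguished good $j_\ell$. The per-agent constraints then determine all edge changes up to a single scalar $\delta$, and the net change in the total assigned to $j_\ell$ comes out equal to a positive multiple of $(\prod_s\beta_s-\prod_s\alpha_s)\,\delta$. The main obstacle—and the conceptual heart of the argument—is exactly this tension: demanding both per-agent value preservation \emph{and} preservation of every good's total over-determines the circulation, forcing the multiplicative balance $\prod_s\alpha_s=\prod_s\beta_s$, which need not hold (optimality of $\vecx^*$ only furnishes the weaker additive relation $\sum_s\alpha_s=\sum_s\beta_s$, obtained from the bidirectional good-preserving alternating circulation). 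I resolve this by using that the supply constraints are \emph{inequalities}: choosing the sign of $\delta$ so that the total assigned to $j_\ell$ does not increase frees precisely the one degree of freedom needed.

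Finally I increase $|\delta|$ from $0$. Since all cycle variables start strictly inside $(0,1)$ and all the linking ratios are finite and positive, every cycle edge moves at a nonzero rate, so at a well-defined finite step some edge first reaches $0$ or $1$. At that point the box constraints still hold, the preserved goods still meet their supply bounds, the total on $j_\ell$ has only dropped, and every $v^L_i$ is unchanged by construction; the edge that hit a boundary is deleted, breaking the cycle. Iterating over the remaining cycles yields the desired acyclic $\Bar{\vecx}$ with $v^L_i(\Bar{\vecx}_i)=v^L_i(\vecx^*_i)$ for all $i\in\N$. I note that optimality of $\vecx^*$ is not actually used in this step: the argument applies to any feasible fractional allocation whose graph is defined as above.
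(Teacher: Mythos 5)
Your proof is correct, but it takes a genuinely different route from the paper's. The paper works through LP duality: it forms the Lagrangian of (\ref{lp:splc-opt}), shows via complementary slackness that at the optimum every fractionally allocated copy gives agent $i$ the same value-to-price ratio $\frac{1}{1+\beta_i}$ (Lemma \ref{lem:mbb_last_good}, a bang-per-buck property), and then cancels cycles in a \emph{price-weighted} graph by the standard uniform circulation that preserves the total weight at every agent and every good (Lemma \ref{lem:splc-acyclicity}); value preservation then falls out because $v^L_i(\vecx^*_i)=\frac{1}{1+\beta_i}\sum_{j\in F_i} w_{ij}$. You instead work directly in allocation space with no prices: you link the two cycle edges at each agent by the exact value-preservation constraint $\alpha_s\,\Delta_{(i_s,j_s)}+\beta_{s-1}\,\Delta_{(i_s,j_{s-1})}=0$, propagate around the cycle, and correctly identify that the residual imbalance at the distinguished good is a positive multiple of $\bigl(\prod_s\beta_s-\prod_s\alpha_s\bigr)\delta$, which you absorb by choosing the sign of $\delta$ so that this good's total only decreases --- legitimate precisely because the supply constraints are inequalities (I verified the sign computation; for $\ell=2$ the imbalance is $\frac{\delta}{\alpha_2\beta_2}(\beta_1\beta_2-\alpha_1\alpha_2)$, as you claim). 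What each approach buys: the paper's argument is shorter once the KKT machinery is set up and reuses market-equilibrium intuition, but it genuinely needs $\vecx^*$ to be an \emph{optimum} of (\ref{lp:splc-opt}); your argument is elementary, duality-free, and, as you observe, applies to any feasible point --- in particular it would let one cancel cycles directly on a feasible solution of the feasibility program (\ref{lp:splc}) without passing to the optimization version. Note also that since you preserve every $v^L_i$ exactly, your $\Bar{\vecx}$ remains feasible and even optimal, so it plugs into Algorithm \ref{alg:round} unchanged. One small caution: your parenthetical that optimality yields $\sum_s\alpha_s=\sum_s\beta_s$ via the good-preserving circulation is not quite right in general, since the constraints $v^L_i(\vecx_i)\geq\mu_i$ may bind and block one direction of that circulation; but this remark is purely motivational and nothing in your argument relies on it. You may also want to state explicitly that it suffices to cancel \emph{simple} cycles (every cycle contains one), which is what your indexing implicitly assumes.
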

To prove the above lemma, we first introduce the concept of Fractional Price Allocation Graphs using the Lagrangian dual variables. For each good $j$ and agent $i$, let $r^i_j$ be the last copy of good $j$ that agent $i$ gets and gets it fractionally. We look at the Lagrangian corresponding to $\sf LP$ (\ref{lp:splc-opt}). Let $\beta_i$, $i \in \N$ be the dual variables corresponding to first set of constraints, $p_j$, $j \in [m]$ be the dual variables corresponding to the second set of constraints, $\lambda_{ijk}$ be the dual variables corresponding to third set of constraints and $\gamma_{ijk}$ be the variables corresponding to the non-negativity constraints. Therefore, the Lagrangian is.
    \begin{multline}\label{eqn:lp-lagrange}
        -\sum_{i, j, k} v_{ijk}x_{ijk} + \sum_{i} \beta_i \sum_{j,k} (-v_{ijk}x_{ijk} + \mu_i) + \sum_j p_j (\sum_{i,k} x_{ijk} - k_j) + \sum_{i,j,k}\lambda_{ijk}(x_{ijk} - 1) - \sum_{i,j,k}\gamma_{ijk}x_{ijk}
    \end{multline}
Let $(p_j^*)_{j \in [t]}$, $\beta_i^*$, $i \in \N$, $\lambda^*_{ijk}$ and $\gamma^*_{ijk}$ be the values for dual variables at optimal. With this notation, we first prove the following property.\footnote{For the readers familiar with Fisher markets,this property is reminiscent of the maximum bang per buck allocation in linear case.}
\begin{restatable}{lemma}{lemmbblastgood}
\label{lem:mbb_last_good}
    For all goods $j \in [m]$ that have any copy fractionally allocated, we have
    \begin{equation*}
        \frac{v_{ijr^i_j}}{p^*_j} = \frac{v_{ij'r^i_{j'}}}{p^*_{j'}}.
    \end{equation*}
\end{restatable}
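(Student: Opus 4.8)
The plan is to read off the stationarity (KKT) conditions of LP~(\ref{lp:splc-opt}) from the Lagrangian~(\ref{eqn:lp-lagrange}) and interpret them in the language of ``bang per buck''. Taking the partial derivative of~(\ref{eqn:lp-lagrange}) with respect to a variable $x_{ijk}$ and setting it to zero at the optimum gives, for every triple $(i,j,k)$,
\begin{equation*}
-v_{ijk} - \beta_i^* v_{ijk} + p_j^* + \lambda_{ijk}^* - \gamma_{ijk}^* = 0,
\end{equation*}
i.e. $(1+\beta_i^*)\, v_{ijk} = p_j^* + \lambda_{ijk}^* - \gamma_{ijk}^*$. The idea is to specialize this identity to the particular copy $k=r^i_j$, the last copy of good $j$ that agent $i$ receives and that is received fractionally, and to argue that the multipliers $\lambda^*$ and $\gamma^*$ vanish there. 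Complementary slackness for the box constraints handles both: since $r^i_j$ is received strictly fractionally, we have $0 < x^*_{ijr^i_j} < 1$, so neither $x_{ijk}\le 1$ nor $x_{ijk}\ge 0$ is tight, forcing $\lambda^*_{ijr^i_j}=0$ and $\gamma^*_{ijr^i_j}=0$.

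With those two multipliers killed, the stationarity condition collapses to $(1+\beta_i^*)\, v_{ijr^i_j} = p_j^*$, and hence
\begin{equation*}
\frac{v_{ijr^i_j}}{p_j^*} = \frac{1}{1+\beta_i^*}
\end{equation*}
for every good $j$ that agent $i$ receives fractionally. The right-hand side depends only on the agent $i$, not on the good $j$, so for any two such goods $j,j'$ the two ratios are equal, which is exactly the claimed identity. Here I am implicitly using that $p_j^*>0$ whenever good $j$ has a fractionally allocated copy: a fractionally-held copy means the corresponding supply constraint is active (the agent would want more of a good priced at zero given monotonicity), so I would either invoke market-clearing/complementary slackness to get $p_j^*>0$, or simply note that if $p_j^*=0$ the first-order condition already forces $v_{ijr^i_j}=0$ and one can read the statement as an equality of the degenerate ratios; the clean route is to show $p_j^*>0$ and divide.

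The main obstacle, and the step to be careful about, is justifying that stationarity holds with equality for the relevant variables and that the box multipliers genuinely vanish at $r^i_j$. This rests on (a) the claim already proved that $\vecx^*$ can be arranged so that exactly one copy per (agent, good) pair is fractional, which guarantees $r^i_j$ is well-defined and strictly interior; and (b) strong duality / KKT applying to this linear program, so that a single dual optimum $(\beta^*,p^*,\lambda^*,\gamma^*)$ satisfies all of stationarity and complementary slackness simultaneously. Once the interiority of $x^*_{ijr^i_j}$ is invoked to zero out $\lambda^*_{ijr^i_j}$ and $\gamma^*_{ijr^i_j}$, the equality $\frac{v_{ijr^i_j}}{p_j^*}=\frac{1}{1+\beta_i^*}$ is immediate and the cross-good comparison finishes the proof. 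I would present it in that order: write stationarity, apply complementary slackness at $r^i_j$, derive the per-agent common ratio, then conclude across goods.
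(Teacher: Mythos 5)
Your proposal is correct and takes essentially the same route as the paper: differentiate the Lagrangian~(\ref{eqn:lp-lagrange}) with respect to $x_{ijk}$, invoke complementary slackness at the strictly fractional copy $r^i_j$ (where $0 < x^*_{ijr^i_j} < 1$) to force $\lambda^*_{ijr^i_j} = \gamma^*_{ijr^i_j} = 0$, and conclude that $\frac{v_{ijr^i_j}}{p^*_j} = \frac{1}{1+\beta^*_i}$ is a per-agent constant independent of the good. Your treatment of the edge case $p^*_j = 0$ is, if anything, slightly more explicit than the paper's, which instead assumes without loss of generality that every good has positive value to some agent to deduce $p^*_j > 0$ for fractionally allocated goods.
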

\begin{proof}
    Differentiating the Lagrangian (\ref{eqn:lp-lagrange}) with respect to $x_{ijk}$, we get
    \begin{equation*}
        -v_{ijk} - \beta_i v_{ijk} + p_j + \lambda_{ijk} = \gamma_{ijk}.
    \end{equation*}
    Due to non-negativity constraints we have, $\gamma_{ijk} \geq 0$. Therefore we get,
    \begin{equation}\label{eqn:mbb}
        v_{ijk}(1 + \beta_i) \leq p_j + \lambda_{ijk}
    \end{equation}
    We can assume without loss of generality that for all $j \in [t]$, there exist $i, k$ such that $v_{ijk} > 0$. Therefore, the left hand side of the above equation is strictly greater than $0$ for some $i$, implying that $p_j + \lambda_{ijk} > 0$. Further,  by complementary slackness, $\gamma^*_{ijk} = \lambda^*_{ijk} = 0$ if $0 < x^*_{ijk} < 1$. Therefore, for any good that has some copy fractionally allocated, we have $p^*_j > 0$. This implies that for all $j \in [t]$ that have any copy fractionally allocated, we can rewrite equation (\ref{eqn:mbb}) as
    \begin{equation*}
        \frac{v_{ijk}}{p^*_j} = \frac{1}{1+\beta_i}.
    \end{equation*}
    In particular, for the last copy of each good that gets assigned fractionally, the ratio of value to price of the good is $\frac{1}{1+\beta_i}$ which is independent of the good. This proves the claim.
\end{proof}
We can now define a Fractional Price Allocation graph based on the dual variables, $(p_j)_{j \in [t]}$ that will help us to prove that the Fractional Allocation Graph can be made acyclic.
\begin{definition}[Fractional Price Allocation Graph]\label{def:frac-price-graph}
     This is a bipartite graph with agents on one side and goods on another. There is an edge between an agent, $i$ and good $j$ if the agent buys a copy of the good fractionally. The weight of the edge is $w_{ij} = x_{ij}p_j$ where $p_j$ is the dual variable defined above and $x_{ij}$ is the fractional amount of (last copy of) good $j$ that $i$ received.
\end{definition}
The following observation about the two graphs is easy to verify.
\begin{claim}
    For a given fractional solution $\vecx$, there are cycles in the Fractional Allocation Graph corresponding to it if and only if there are cycles in the Fractional Price Allocation Graph corresponding to $\vecx$.
\end{claim}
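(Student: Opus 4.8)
The plan is to observe that the two graphs are identical as combinatorial objects and differ only in their edge weights, so cyclicity --- a purely topological property --- transfers immediately. First I would fix the vertex sets: both the Fractional Allocation Graph (Definition \ref{def:frac-graph}) and the Fractional Price Allocation Graph (Definition \ref{def:frac-price-graph}) are bipartite with agents $\N$ on one side and the $t$ goods on the other, so their vertex sets coincide by construction.

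Next I would argue that the edge sets coincide. In both definitions an edge $(i,j)$ is placed exactly when agent $i$ holds a fractional copy of good $j$. By the earlier claim that each agent holds at most one fractional copy of any good, for each agent $i$ and good $j$ at most one copy of $j$ is allocated fractionally to $i$, so this condition is unambiguous and it is the same predicate in both graphs. Hence edge $(i,j)$ appears in the Fractional Allocation Graph if and only if it appears in the Fractional Price Allocation Graph.

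The only remaining point is that the two graphs carry different weights: $x_{ij}$ in the allocation graph versus $w_{ij} = x_{ij}\, p_j^*$ in the price graph. Here I would invoke the fact established in the proof of Lemma \ref{lem:mbb_last_good} that every good with a fractionally allocated copy has $p_j^* > 0$; together with $x_{ij} > 0$ on every present edge, this shows that the price-graph weight is strictly positive precisely on the edges that exist, so the reweighting neither deletes an existing edge (by zeroing its weight) nor creates a new one. Since a cycle in a graph is determined entirely by its vertex set and edge set and is insensitive to the edge weights, any cycle in the Fractional Allocation Graph is the very same cycle in the Fractional Price Allocation Graph, and conversely. I do not anticipate a genuine obstacle: the content of the claim is just that the reweighting from $x_{ij}$ to $x_{ij}\,p_j^*$ preserves the support of the edge set, and the one thing to verify carefully is that positivity of the prices $p_j^*$ on fractionally allocated goods makes the two edge sets literally equal.
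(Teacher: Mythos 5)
Your proposal is correct and is essentially the verification the paper has in mind: the paper states this claim without proof (``easy to verify''), and the intended content is exactly your observation that the two graphs share the same vertex and edge sets, with cyclicity independent of edge weights. Your extra care in citing the fact from the proof of Lemma~\ref{lem:mbb_last_good} that $p_j^* > 0$ on every fractionally allocated good---so that the reweighting $x_{ij} \mapsto x_{ij} p_j^*$ never zeroes out an existing edge---is a worthwhile detail, since the cycle-elimination argument of Lemma~\ref{lem:splc-acyclicity} implicitly relies on all cycle edges having strictly positive weight.
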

\begin{restatable}{lemma}{lemsplcacyclicity}
\label{lem:splc-acyclicity}
    Any cycles in the Fractional Price Allocation Graph can be eliminated without changing the total weight on edges incident any of the agents or any of the goods.
\end{restatable}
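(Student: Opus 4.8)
The plan is to use a standard cycle-canceling argument on the Fractional Price Allocation Graph, exploiting the fact that every edge weight is expressed in a common ``price'' unit so that shifting weight around a cycle preserves all the conservation constraints simultaneously. Suppose the graph contains a cycle $C = (i_1, j_1, i_2, j_2, \ldots, i_\ell, j_\ell, i_1)$ alternating between agents and goods. Since the graph is bipartite, $C$ has even length, so its edges partition into two perfect matchings of the cycle, call them $E^+$ (the ``forward'' edges) and $E^-$ (the ``backward'' edges), with the edges alternating between the two classes as we traverse $C$. I would then define a perturbation parameter $\delta$ and modify the price-weights by setting $w_{ij} \leftarrow w_{ij} + \delta$ for each edge in $E^+$ and $w_{ij} \leftarrow w_{ij} - \delta$ for each edge in $E^-$.

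The key observation is that at every vertex of the cycle, exactly one $E^+$ edge and one $E^-$ edge are incident, so the net change in the total price-weight incident to any agent $i_s$ and any good $j_s$ is $\delta - \delta = 0$. Thus the total price-weight on every agent and every good is preserved, which is exactly the invariant the lemma asks for. It remains to translate this back into the underlying fractional allocation: since $w_{ij} = x_{ij} p_j^*$ and $p_j^* > 0$ for every good with a fractionally allocated copy (as established in the proof of Lemma \ref{lem:mbb_last_good}), the update corresponds to a well-defined change $x_{ij} \leftarrow x_{ij} \pm \delta / p_j^*$ in the fractional amounts. I would choose $\delta$ as large as possible (in either the positive or negative direction) while keeping all resulting $x_{ij}$ values in $[0,1]$ and nonnegative; by monotonicity in $\delta$, the extremal feasible choice drives at least one edge weight to a boundary value ($x_{ij} = 0$ or $x_{ij} = 1$), removing that edge from the graph and thereby breaking the cycle.

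The main obstacle, and the point requiring care, is verifying that driving an edge to a boundary genuinely reduces the number of cycle-participating edges without introducing new ones, and that preserving per-good price-weight indeed preserves each good's feasibility constraint $\sum_i \sum_k x_{ijk} \le k_j$. The latter is immediate because the per-good price-weight is $p_j^* \sum_i x_{ij}$, so fixing it fixes $\sum_i x_{ij}$; and preserving per-agent price-weight together with the common bang-per-buck ratio from Lemma \ref{lem:mbb_last_good} ensures each agent's valuation $v_i^L(\vecx_i)$ is unchanged, since along every edge the value contributed is $w_{ij}/(1+\beta_i^*)$. Finally I would note that this operation strictly decreases the edge count of the graph, so iterating it finitely many times eliminates all cycles, which combined with the preceding claim equating cycles in the two graphs completes the reduction needed for Lemma \ref{lem:splc-acyclicity-value}.
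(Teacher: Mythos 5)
Your proposal is correct and takes essentially the same approach as the paper: both cancel a cycle by alternately adding and subtracting a common $\delta$ on its edges, which preserves the total incident price-weight at every agent and every good, and iterate until an edge is eliminated. The only (harmless) difference is that the paper fixes $\delta$ to the minimum edge weight along the cycle, whereas you take the extremal feasible $\delta$ in either direction, which additionally handles the cap $x_{ij}\le 1$ that the paper leaves implicit.
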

\begin{proof}
Consider any cycle in the Fractional Price Allocation graph. This cycle alternates between agents and goods. Let the cycle be $i_1 - j_1 - i_2 - j_2 - \cdots - j_{\ell-1} - i_{\ell} - j_{\ell} - i_1$. To eliminate this cycle, we find the edge with minimum weight. Without loss of generality, assume that this edge $\{i_1, j_1\}$.\footnote{We can always relabel the cycle so that this is true.} Denote the weight of this edge, $p_{i_1j_1}$ by $\delta$. We now change the weights on the edges to be:
\begin{align*}
    w_{i_{k}j_{k}} &= w_{i_{k}j_{k}} - \delta  &&\text{ for all } k \in [\ell] \\
     w_{i_{k}j_{k-1}} &= w_{i_{k}j_{k}} + \delta  &&\text{ for all } k \in [\ell]
\end{align*}
where we define $j_0$ to be $j_{\ell}$.
Thus, in the price allocation graph, we eliminate the edge $i_1j_1$. Further note that for each agent and each good, we add $\delta$ on one edge and remove $\delta$ on the other edge. Therefore, we don't change the total weight incident on any agent or any good. This completes the proof.
\end{proof}
We can now prove Lemma \ref{lem:splc-acyclicity-value}.
\begin{proof}[Proof of Lemma \ref{lem:splc-acyclicity-value}]
    Consider the Fractional Price Allocation Graph corresponding to $\vecx^*$. Using Lemma \ref{lem:splc-acyclicity}, get a new fractional solution $\Bar{\vecx}$ such that the Fractional Price Allocation Graph corresponding to $\Bar{\vecx}$ has no cycles and the total weight incident on any agent is unchanged. Let $F_i$ denote the set of all goods that agent $i$ has received fractionally. We show that the total value received by the agent remains unchanged in $\Bar{\vecx}$.
    \begin{align*}
        v_i^L(\vecx^*) &= \sum_{j \in F_i} v_{ij} x^*_{ij} \\ 
        &= \sum_{j \in F_i} \frac{w_{ij}}{p^*_j} v_{ij} && \text{(By definition of }w_{ij}) \\
        &= \frac{1}{1+\beta_i} \sum_{j \in F_i} w_{ij}
    \end{align*}
    Since the sum of weights incident is same in $\vecx^*$ and $\Bar{\vecx}$, the final value is same for $\vecx^*$ and $\Bar{\vecx}$. This proves the lemma.
\end{proof}

Algorithm \ref{alg:round} describes the rounding algorithm that lets us prove Lemma \ref{lem:main-splc}.
\begin{algorithm}[tbh!]
\caption{Rounding the Fractional Optimal of $\sf LP$ (\ref{lp:splc-opt})}
\label{alg:round}
\DontPrintSemicolon
  \SetKwFunction{Define}{Define}
  \SetKwInOut{Input}{Input}\SetKwInOut{Output}{Output}
  \Input{$\vecx^*$ that is an optimal solution to $\sf LP$ (\ref{lp:splc-opt}).}
  \Output{Integral Allocation $A = (A_1, \ldots, A_n)$ where each agent $i \in \N$ receives a value of at least $\mu_i - \max_{j,k}v_{ijk}$.}
  \BlankLine
  Use Lemma \ref{lem:splc-acyclicity-value} to convert $\vecx^*$ to $\Bar{\vecx}$ such that the Fractional Allocation Graph corresponding to $\Bar{\vecx}$ has no cycles and value of each agent is same.\;
  Create Fractional Allocation Graph corresponding to $\Bar{\vecx}$.\label{step:cycle-elim}\;
  In the forest obtained, root every tree by an arbitrary agent.\;
  For each tree, assign all the children goods to the parent agent.\label{step:bottom-up-round}\;
\end{algorithm}
\begin{proof}[Proof of Lemma \ref{lem:main-splc}]
    Consider the rounding procedure described in Algorithm \ref{alg:round}. From Lemma \ref{lem:splc-acyclicity-value}, we get that the agents have lost no value up to step \ref{step:cycle-elim}. After that in Step \ref{step:bottom-up-round}, every agent loses at most one good -- the parent good in the tree. This proves the lemma's claim on value of agent. To see that the algorithm runs in polynomial time, note that solving a Linear Program can be done in polynomial time using any of the well known algorithms. Cycle cancellation removes a cycle by deleting one edge. Since there are at most $n \cdot t$ edges in the graph, this also runs in polynomial time. Therefore, overall the subroutine runs in polynomial time.
\end{proof}
\subsection{Existence of $\frac{1}{2}$-$\MMS$ Allocations}\label{sec:half-mms-exist}
Algorithm \ref{alg:splc-mms} gives an algorithm that proves the existence of $\frac{1}{2}$-$\MMS$ allocations. It does so by assuming we can compute $\MMS$ values for all agents. While these are $\classNP$-hard to compute making the algorithm non-polynomial time, it exhibits all other important ideas in getting a polynomial time $\frac{1}{2}$-$\MMS$ allocation.
\begin{algorithm}[tbh!]
\caption{Existence of $\frac{1}{2}$-$\MMS$ Algorithm for  $\SPLC$ valuations}\label{alg:splc-mms}
\DontPrintSemicolon
  \SetKwFunction{Define}{Define}
  \SetKwInOut{Input}{Input}\SetKwInOut{Output}{Output}
  \Input{$\SPLCins$}
  \Output{Allocation $A$ where for every agent $v_i(A_i)\ge \MMS_i/2$}
  \BlankLine
  Define $\mu_i \coloneqq \MMS_i$ for all $i \in \N$.\;
  Initialize set of \emph{active} agents $\A = \N$ and set of \emph{active} goods $\G = \M$.\;
  \While{there exists $i \in \N$ and $j \in [t]$ such that $v_{ij1} \geq \frac{\mu_i}{2}$} {\label{step:single-good-start}
    $A_i \leftarrow \{j\}$\;
    $\A \leftarrow \A \setminus \{i\}$\;
    Remove one copy of $j$ from $\G$\;
  }\label{step:single-good-end}
  Solve the Linear Program (\ref{lp:splc-opt}) for the reduced instance $(\A, \G, (v_i)_{i \in \N})$ and obtain fractional allocation $\vecx$.\;
  Use Lemma \ref{lem:main-splc} to get an integral allocation and assign appropriate bundles to agents in $\A$.\;
  Return $A = (A_i)_{i \in \N}$.
\end{algorithm}
\begin{theorem}\label{thm:half-mms-exist}
    Given an $\SPLC$ fair division instance, $\SPLCins$, Algorithm $\ref{alg:splc-mms-complete}$ gives each agent a bundle $A_i$ such that $v_i(A_i) \geq \frac{1}{2} \MMS_i$.
\end{theorem}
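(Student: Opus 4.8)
The plan is to analyze the two phases of Algorithm~\ref{alg:splc-mms} separately and argue that every agent ends up with at least $\MMS_i/2$. The agents fall into two disjoint groups: those removed inside the while loop of Steps~\ref{step:single-good-start}--\ref{step:single-good-end} (call them the \emph{single-good} agents) and those whose bundles come from the LP-and-round phase (the \emph{leftover} agents). The single-good agents are immediate: such an agent $i$ is removed precisely because some type $j$ satisfies $v_{ij1}\ge \mu_i/2=\MMS_i/2$, and since $v_i(\{j\})=v_{ij1}$ for a single copy, the bundle $\{j\}$ it receives already meets the $1/2$-$\MMS$ guarantee. So the whole difficulty is in the leftover agents.

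For the leftover agents I would first establish two facts about the reduced instance $(\A,\G)$ handed to the LP. First, the values $\mu_i=\MMS_i$ remain \emph{feasible} for LP~(\ref{lp:splc}) on the reduced instance. Each iteration of the while loop deletes exactly one agent together with exactly one copy of a good, so repeatedly invoking the single-good reduction (Claim~\ref{clm:single-good-redn}) shows that each surviving agent's maximin share in the reduced instance is at least its original value $\MMS_i$. Applying Lemma~\ref{lem:exist-frac-mms} to the reduced instance then yields a fractional allocation with $v_i^L(\vecx_i)\ge \MMS_i^{\text{reduced}}\ge \mu_i$, i.e.\ exactly a feasible point of LP~(\ref{lp:splc}) (equivalently of~(\ref{lp:splc-opt})). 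Second, the loop's termination condition gives a clean bound on the per-agent rounding loss: once the loop stops, every leftover agent $i$ has $v_{ij1}<\mu_i/2$ for every type $j$, and since the $\SPLC$ concavity $v_{ij1}\ge v_{ij2}\ge\cdots$ makes $v_{ij1}$ the largest copy value of type $j$, we get $\max_{j,k} v_{ijk}=\max_j v_{ij1}<\mu_i/2$.

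With these two facts in hand, the conclusion follows by feeding the feasible $\mu_i$ into Lemma~\ref{lem:main-splc}: the rounding produces an integral bundle $A_i$ for each leftover agent with $v_i(A_i)\ge \mu_i-\max_{j,k}v_{ijk}> \mu_i-\mu_i/2=\mu_i/2=\MMS_i/2$. Combining both groups shows every agent receives at least $\MMS_i/2$, which is the claim. The main obstacle, and the crux of the whole argument, is the feasibility step: it is exactly the interplay between the single-good reduction (which must preserve the original $\MMS_i$ across the entire sequence of removals, not merely for a single removal) and the fractional existence result of Lemma~\ref{lem:exist-frac-mms} that keeps the LP feasible after deletions. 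The role of the preprocessing loop is then precisely to force $\max_{j,k}v_{ijk}<\mu_i/2$, so that the additive ``lose one good'' slack of Lemma~\ref{lem:main-splc} degrades the guarantee by at most a factor of $1/2$ rather than by an uncontrolled amount.
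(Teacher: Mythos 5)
Your proposal is correct and follows essentially the same route as the paper's own proof: handle the agents satisfied in the while loop directly, use Claim~\ref{clm:single-good-redn} together with Lemma~\ref{lem:exist-frac-mms} to show $(\mu_i)_{i\in\A}$ remain feasible for the reduced instance, observe the loop's exit condition forces $\max_{j,k}v_{ijk}<\mu_i/2$, and invoke Lemma~\ref{lem:main-splc}. Your write-up is in fact slightly more careful than the paper's, since you make explicit both the iterated application of the single-good reduction across the whole sequence of removals and the concavity fact $v_{ij1}\ge v_{ij2}\ge\cdots$ behind the bound on $\max_{j,k}v_{ijk}$.
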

\begin{proof}
    Note that for any agent who gets allocated an item in Steps $\ref{step:single-good-start}$ to $\ref{step:single-good-end}$, they get a value of $\frac{\MMS_i}{2}$ by definition of $\mu_i = \MMS_i$. Recall from Claim \ref{clm:single-good-redn} that removal of one good and one agent retains the $\MMS$ value of the agent. Therefore, for all remaining agents, at the end of while loop from Step \ref{step:single-good-start} to \ref{step:single-good-end}, the $\MMS$ value of active agents is retained in the reduced instance. Further the reduced instance satisfies the property that for each agent $i \in \N$, $\max_{j,k} v_{ijk} \leq \frac{\MMS_i}{2}$. Finally, using Lemma \ref{lem:exist-frac-mms}, note that $\mu_i \coloneqq \MMS_i$ for all $i \in \A$ (the set of Active agents left after single good reduction), form \emph{feasible} $(\mu_i)_{i \in \A}$. Therefore, we can use Lemma \ref{lem:main-splc} to get an integral allocation where each agent, $i \in \A$ gets a value of $v_i(A_i) \geq \MMS_i - \max_{j,k} v_{ijk} \geq \frac{\MMS_i}{2}$. This completes the proof.
\end{proof}

\subsection{Algorithm : Computational Aspects}\label{sec:alg-half-mms}
We now address the final computational aspect of finding $\MMS_i$ values here. Since they are $\classNP$-hard to compute, we use $\mu_i$ values as follows:
\begin{equation}\label{eqn:mu_i}
    \mu_i = \sum_{j \in [t]} \left(\sum_{k \leq \frac{k_j}{n}}v_{ijk} + v_{ij{\lceil\sfrac{k_j}{n} \rceil}}\left(\frac{k_j}{n} - \left\lfloor \frac{k_j}{n} \right\rfloor \right)\right)
\end{equation}
In Appendix \ref{app:splc}, we give the complete algorithm that shows how to use these $\mu_i$ to get a $\frac{1}{2}$-$\MMS$ allocation. We only state the main theorem here.

\begin{restatable}{theorem}{algsplcmmscomplete}
    There exists an algorithm which in polynomial time outputs an allocation that gives $\frac{1}{2}$-$\MMS$ to each agent.
\end{restatable}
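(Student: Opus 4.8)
The plan is to recognize that the existence algorithm (Algorithm~\ref{alg:splc-mms}, analyzed in Theorem~\ref{thm:half-mms-exist}) is already polynomial in every step but one: it consumes the values $\MMS_i$, which are $\classNP$-hard. Every structural ingredient it uses --- the single-good reduction, the feasibility of {\sf LP}~(\ref{lp:splc-opt}), and the rounding of Lemma~\ref{lem:main-splc} --- is oblivious to whether the target fed to it is the true $\MMS_i$ or some other number. Hence the whole task reduces to exhibiting the polynomial-time computable surrogate $\mu_i$ of (\ref{eqn:mu_i}) and checking the two properties that the analysis of Theorem~\ref{thm:half-mms-exist} actually relied on: (P1) $\mu_i \ge \MMS_i$, so that a $\tfrac12\mu_i$ guarantee implies a $\tfrac12\MMS_i$ guarantee; and (P2) $(\mu_i)_i$ is \emph{feasible} for {\sf LP}~(\ref{lp:splc-opt}), so that the reduction-plus-rounding pipeline can be executed. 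I would then run Algorithm~\ref{alg:splc-mms} verbatim with $\MMS_i$ replaced by $\mu_i$ (recomputing $\mu_i$ on the current instance after each removal) and argue correctness through (P1)--(P2).

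For (P1) I would use a rearrangement argument exploiting concavity within each good type. Fix the partition $(A_1,\dots,A_n)$ achieving $\MMS_i$, so $\MMS_i = \min_k v_i(A_k) \le \tfrac1n\sum_k v_i(A_k)$. Writing $c_{jk} = |A_k \cap \M_j|$, the contribution of type $j$ to $\sum_k v_i(A_k)$ equals $\sum_\ell n_\ell\,v_{ij\ell}$, where $n_\ell = |\{k : c_{jk} \ge \ell\}|$ is non-increasing, bounded by $n$, and sums to $k_j$. Since the $v_{ij\ell}$ are non-increasing, this weighted sum is maximized by loading the top copies, i.e.\ $n_\ell = n$ for $\ell \le \lfloor k_j/n\rfloor$ with the remaining weight on the next copy; this maximum is exactly $n$ times the per-type term of (\ref{eqn:mu_i}). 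Summing over $j$ gives $\tfrac1n\sum_k v_i(A_k) \le \mu_i$, hence $\MMS_i \le \mu_i$.

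For (P2) I would exhibit the feasible fractional solution directly: give every agent the top $k_j/n$ copies of each type $j$ (copies $1,\dots,\lfloor k_j/n\rfloor$ fully, and a $(k_j/n - \lfloor k_j/n\rfloor)$-fraction of the next). Each $x_{ijk}\le 1$, and the total amount of type $j$ handed out is $n\cdot(k_j/n)=k_j$, so the supply constraint $\sum_i\sum_k x_{ijk}\le k_j$ holds --- crucially, {\sf LP}~(\ref{lp:splc-opt}) caps only the \emph{total} number of copies of a type, not each individual copy, so distinct agents may both load the high-value copy slots. By construction each agent's value is exactly $\mu_i$, proving feasibility. Feeding $(\mu_i)$ into Lemma~\ref{lem:main-splc} yields an integral allocation with $v_i(A_i)\ge \mu_i - \max_{j,k}v_{ijk}$; running the single-good reduction with threshold $\mu_i/2$ first guarantees $\max_{j,k}v_{ijk}=\max_j v_{ij1}<\mu_i/2$ for every surviving agent, so $v_i(A_i)\ge \mu_i/2 \ge \MMS_i/2$ by (P1).

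The step I expect to be the main obstacle is not any single inequality but the bookkeeping across the reduction, since $\mu_i$ is recomputed on the shrinking instance yet I must still certify a $\tfrac12\MMS_i$ bound against the \emph{original} instance. The clean resolution is to chain $\mu_i^{\mathrm{cur}} \ge \MMS_i^{\mathrm{cur}} \ge \MMS_i^{\mathrm{orig}}$, where the first inequality is (P1) on the current instance and the second is Claim~\ref{clm:single-good-redn}, which guarantees that a surviving agent's $\MMS$ does not decrease when one agent and one good are removed. Reasoning about the fixed point of the reduction loop (rather than tracking $\mu_i$ monotonically) then shows that at termination every active agent has $\max_j v_{ij1}<\mu_i^{\mathrm{cur}}/2$, so both the grab step (value $v_{ij1}\ge \mu_i^{\mathrm{cur}}/2 \ge \MMS_i^{\mathrm{orig}}/2$) and the final LP-rounding step (value $\ge \mu_i^{\mathrm{cur}}/2 \ge \MMS_i^{\mathrm{orig}}/2$) meet the target. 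Polynomial running time is immediate: $\mu_i$ is a closed-form sum, {\sf LP}~(\ref{lp:splc-opt}) is solved in polynomial time, the reduction loop runs at most $n$ times, and cycle-cancellation deletes one edge per step among the $O(nt)$ edges.
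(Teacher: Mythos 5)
Your proposal is correct and is, in structure, exactly the paper's proof: the paper's Algorithm~\ref{alg:splc-mms-complete} computes the closed-form surrogate $\mu_i$ of Equation~(\ref{eqn:mu_i}), runs the single-good reduction with threshold $\mu_i/2$ while recomputing $\mu_i$ on the shrinking instance, certifies correctness through the same chain $\mu_i^{\mathrm{cur}} \ge \MMS_i^{\mathrm{cur}} \ge \MMS_i^{\mathrm{orig}}$ via Claim~\ref{clm:single-good-redn}, and finishes with the LP~(\ref{lp:splc-opt}) plus the rounding of Lemma~\ref{lem:main-splc} losing at most $\max_{j,k} v_{ijk} < \mu_i/2$; your feasibility witness for (P2) --- every agent takes $k_j/n$ top-loaded copies of each type, exhausting exactly the aggregate supply constraint --- is verbatim the paper's Lemma~\ref{lem:mu_i_values}. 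The one genuine divergence is your proof of (P1), $\mu_i \ge \MMS_i$: you argue directly by averaging over the $\MMS$ partition and a rearrangement within each type (the weights $n_\ell = |\{k : |A_k \cap \M_j| \ge \ell\}|$ satisfy $n_\ell \le n$ and $\sum_\ell n_\ell = k_j$, so top-loading against the non-increasing $v_{ij\ell}$ maximizes $\sum_\ell n_\ell v_{ij\ell}$), whereas the paper observes that $\mu_i = v_i^+(\frac1n,\ldots,\frac1n)$ and invokes its concave-extension characterization of $\APS$ (Lemma~\ref{lem:aps-concave-closure}) together with $\APS_i \ge \MMS_i$ (Claim~\ref{clm:mms-aps-rel}). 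Your route is more elementary and self-contained; the paper's route buys the stronger bound $\mu_i \ge \APS_i$, which is precisely what lets it extend the whole result to symmetric $\frac12$-$\APS$ in the remark following the theorem --- if you wanted that extension, your rearrangement argument alone would not suffice. (Incidentally, the inequality chain printed in the paper's appendix lemma, $\mu_i \le \APS_i \le \MMS_i$, is a typo; the direction you prove, $\MMS_i \le \APS_i \le \mu_i$, is the intended one, as the remark in Section~\ref{sec:alg-half-mms} confirms.)
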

\begin{remark}
    The values $\mu_i$ defined in Equation \ref{eqn:mu_i} are upper bounds on $\APS$ also. From Claim \ref{clm:single-good-redn-aps}, single good reduction also holds for $\APS$ with symmetric agents. These two things together give us that all our results of this section also go through for $\APS$ approximation for symmetric agents.
\end{remark}
\section{Approximate APS allocations with Submodular valuations}\label{sec:sub-aps}
We now move to the more general case of submodular valuations and agents with non-symmetric entitlements. In this section, given a fair division instance, $\MMSins$, we show that a greedy algorithm gives us an allocation $A = (A_1, \ldots, A_n)$ such that $v_i(A_i) \geq \frac{1}{3} \APS_i$ for all $i \in \N$. While this algorithm is simpler than the Algorithm for $\SPLC$ valuations seen in previous section, we give an example in Appendix \ref{app:intro} (example \ref{eg:greedy-half-splc-counter}) to show that a natural modification fails to give a $\frac{1}{2}$ approximation for even $\SPLC$ valuations. The section is organized as follows. In Section \ref{sec:aps-props}, we outline some basic properties for $\APS$. Section \ref{sec:alg} describes the details of the greedy algorithm assuming we can compute the $\APS$ values for agents. Section \ref{sec:analysis} analyses the algorithm and shows how to bypass the computational aspect of computing $\APS$ values.
\subsection{Properties of $\APS$}\label{sec:aps-props}

We will use the following properties of $\APS$ in our analysis. The proofs are simple and we omit them here.
\begin{claim}
\label{claim:aps-concave-closure}
For any $z \le APS_i$, $v_{i\downarrow z}^+(b_i,b_i,\ldots,b_i) = z$. Moreover, if $\alpha_S, S \subseteq [m]$ is a feasible solution to the LP defining the value $v_{i\downarrow z}^+(b_i,b_i,\ldots,b_i)$ then $\alpha_S > 0$ implies that $v_{i \downarrow z}(S) = z$.
\end{claim}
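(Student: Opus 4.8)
The plan is to prove Claim \ref{claim:aps-concave-closure} by invoking the characterization of $\APS$ established earlier in Lemma \ref{lem:aps-concave-closure}, which states that $\APS_i = \sup\{z : v_{i\downarrow z}^+(b_i,\ldots,b_i) = z\}$. The claim has two assertions, and I would handle them in order.

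First, for the assertion that $v_{i\downarrow z}^+(b_i,\ldots,b_i) = z$ for all $z \le \APS_i$, I would argue by monotonicity in $z$ combined with a clean upper bound. The upper bound direction $v_{i\downarrow z}^+(b_i,\ldots,b_i) \le z$ is immediate: the truncated function $v_{i\downarrow z}$ takes values in $[0,z]$, so any convex combination $\sum_S v_{i\downarrow z}(S)\alpha_S$ appearing in Definition \ref{def:concave-ext} is at most $z$ (using $\sum_S \alpha_S = 1$). For the reverse inequality when $z \le \APS_i$, I would take the optimal $\APS$ solution $(\alpha_S)$ from Definition \ref{def:aps-sets} at the value $\APS_i$; every set $S$ with $\alpha_S > 0$ satisfies $v_i(S) \ge \APS_i \ge z$, hence $v_{i\downarrow z}(S) = z$ for these sets. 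The price constraints $\sum_{S \ni j}\alpha_S \le b_i$ and $\sum_S \alpha_S = 1$ are exactly the feasibility constraints of the concave-extension LP evaluated at $(b_i,\ldots,b_i)$, so this gives a feasible point of objective value $\sum_S v_{i\downarrow z}(S)\alpha_S = z \cdot \sum_S \alpha_S = z$. Combining the two bounds yields equality.

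Second, for the ``moreover'' part, I would take any feasible $(\alpha_S)$ for the LP defining $v_{i\downarrow z}^+(b_i,\ldots,b_i)$ that achieves the optimal value, which by the first part equals $z$. Since $v_{i\downarrow z}(S) \le z$ for every $S$ and $\sum_S \alpha_S = 1$, the only way to have $\sum_S v_{i\downarrow z}(S)\alpha_S = z$ is for every $S$ in the support (i.e.\ with $\alpha_S > 0$) to attain the maximum, namely $v_{i\downarrow z}(S) = z$. This is the standard observation that a convex combination equals its upper bound only if each positively-weighted term attains that bound.

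The argument is essentially a reorganization of the proof of Lemma \ref{lem:aps-concave-closure} together with elementary convex-combination reasoning, so I do not expect a genuine obstacle. The one subtlety worth stating carefully is the role of the truncation: without truncating at $z$, sets with $v_i(S)$ strictly larger than $z$ could inflate the objective and break both the equality $v_{i\downarrow z}^+ = z$ and the support characterization. Thus the cleanest exposition would first record the uniform bound $v_{i\downarrow z}(S)\le z$, then derive both parts of the claim from it, making explicit that monotonicity of $z \mapsto v_{i\downarrow z}^+(b_i,\ldots,b_i)$ (it is nondecreasing) lets the equality at $z = \APS_i$ propagate down to all smaller $z$.
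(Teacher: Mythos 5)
Your proposal is correct and takes essentially the approach the paper intends: the paper omits the proof of this claim as ``simple,'' and your two steps --- the uniform bound $v_{i\downarrow z}(S)\le z$ (yielding both the upper bound and, via the convex-combination argument, the support characterization) together with transporting the dual $\APS$ solution of Definition~\ref{def:aps-sets} into the concave-extension LP --- are precisely the reasoning the paper does spell out in its proof of Lemma~\ref{lem:aps-concave-closure}, of which this claim is a refinement. The one wrinkle, that the $\APS$ program only guarantees $\sum_{S\ni j}\lambda_S \le b_i$ while Definition~\ref{def:concave-ext} demands equality $\sum_{S\ni j}\alpha_S = b_i$ (fixable by monotonically augmenting support sets, which keeps their truncated value pinned at $z$), is glossed over identically in the paper's own proof of Lemma~\ref{lem:aps-concave-closure}, so your argument is at parity with the paper's level of rigor.
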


\begin{claim}\label{clm:aps-scale-free}[Scale-freeness]
    If a valuation function $v:2^{\M}\rightarrow \mathbb{R}_{\ge 0}$ is multiplied by any scalar value $\alpha,$ the $\APS$ value of an agent with the scaled function is $\alpha$ times the $\APS$ value of an agent with the original function and same entitlement.
\end{claim}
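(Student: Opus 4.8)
The plan is to read the scale-freeness directly off the price-based definition of $\APS$ in (\ref{def:aps-price}), since multiplying the valuation by a nonnegative constant interacts cleanly with the nested $\min$--$\max$. Write $v' = \alpha v_i$ for a fixed scalar $\alpha \ge 0$; nonnegativity is needed both so that $v'$ remains a valid nonnegative monotone valuation and so that $\alpha$ can be pulled outside an extremum without reversing its sense. The crucial observation is that the affordability constraint $p(S) \le b_i$ determining which bundles $S$ are admissible depends only on the price vector $p \in \Price$ and the entitlement $b_i$, and not at all on the valuation. Hence the index set over which the inner maximum ranges is literally identical for $v_i$ and for $v'$.

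First I would fix any $p \in \Price$ and treat the inner maximization. For $\alpha \ge 0$,
\[
\max_{S \subseteq [m],\, p(S) \le b_i} v'(S) = \max_{S \subseteq [m],\, p(S) \le b_i} \alpha\, v_i(S) = \alpha \max_{S \subseteq [m],\, p(S) \le b_i} v_i(S),
\]
where the last step uses that scaling a finite set of reals by a nonnegative constant scales its maximum by the same constant, with the index set unchanged. Taking the minimum over $p \in \Price$ of both sides and again pulling $\alpha \ge 0$ through the minimum of this nonnegative-scaled family gives
\[
\APS(v') = \min_{p \in \Price}\ \alpha \max_{S:\, p(S) \le b_i} v_i(S) = \alpha \min_{p \in \Price} \max_{S:\, p(S) \le b_i} v_i(S) = \alpha\, \APS(v_i),
\]
which is the claim for $\alpha > 0$; the boundary case $\alpha = 0$ is immediate since both sides vanish.

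There is essentially no serious obstacle here; the only point requiring care is the sign of $\alpha$. If negative scalars were permitted, the scaling would swap $\max$ and $\min$ and the identity would fail, so the statement is naturally restricted to $\alpha \ge 0$, which is also the only regime keeping $\alpha v_i$ within the class of nonnegative monotone valuations assumed throughout. As an alternative route that sidesteps even this bookkeeping, one may argue from the set-based program in Definition \ref{def:aps-sets}: the budget constraints $\sum_{T \ni j} \lambda_T \le b_i$ and the normalization $\sum_T \lambda_T = 1$ are valuation-independent, while replacing $v_i$ by $\alpha v_i$ merely rescales the support condition ``$\lambda_T = 0$ whenever the bundle value is below $z$'' from $v_i(T) < z$ to $v_i(T) < z/\alpha$. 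Substituting $z = \alpha z'$ then yields a value-preserving bijection between feasible $(\lambda, z')$ for $v_i$ and feasible $(\lambda, z)$ for $\alpha v_i$ whose objectives differ exactly by the factor $\alpha$, giving the same conclusion.
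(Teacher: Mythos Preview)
Your argument is correct. The paper itself omits the proof of this claim, stating only that ``the proofs are simple and we omit them here,'' so there is nothing to compare against directly; your derivation from the price-based definition (\ref{def:aps-price}) is the natural one, and the alternative via the set-based Definition~\ref{def:aps-sets} is equally valid.
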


\begin{claim}\label{clm:aps-cap-free}[Capping retains $\APS$]
If an agent with some entitlement $b$ and valuation function $v:2^{\M}\rightarrow \mathbb{R}_{\ge 0}$ has $\APS$ value $\gamma,$ then the $\APS$ value of an agent with the same entitlement and the capped function $v'(\Set):=\min\{\APS, v(\Set)\}$ $\forall Set\subseteq \M$ is also $\gamma.$ 
\end{claim}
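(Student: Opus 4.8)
The plan is to prove the two inequalities $\APS(v',b) \le \gamma$ and $\APS(v',b) \ge \gamma$ separately, where $\gamma$ denotes the original $\APS$ value of the agent with valuation $v$ and entitlement $b$, and where the capped function is $v'(\Set) = \min\{\gamma, v(\Set)\}$. The upper bound is essentially immediate from the price-based Definition \ref{def:aps-price}, while the lower bound is cleanest through the set-based (dual) Definition \ref{def:aps-sets}. The crux is the observation that capping never touches the sets on which the dual optimum places its mass.

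For the upper bound, note that $v'(\Set) \le \gamma$ for every $\Set \subseteq \M$ by construction. Hence, for any price vector $p \in \Price$, we have $\max_{\Set:\, p(\Set)\le b} v'(\Set) \le \gamma$, and taking the minimum over $p$ gives $\APS(v',b) \le \gamma$ directly from Definition \ref{def:aps-price}. No property of $v$ beyond the cap is needed for this direction.

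For the lower bound, I would invoke Definition \ref{def:aps-sets} for the \emph{original} valuation: since the agent's $\APS$ value is $\gamma$, there is a feasible solution $(\lambda_T)_{T\subseteq \M}$ with objective $z=\gamma$, so that $\sum_T \lambda_T = 1$, $\sum_{T\ni j}\lambda_T \le b$ for every $j$, and $\lambda_T = 0$ whenever $v(T) < \gamma$. The key step is that every set $T$ carrying positive weight satisfies $v(T)\ge \gamma$, and therefore $v'(T)=\min\{\gamma, v(T)\}=\gamma$. Consequently the same weights $(\lambda_T)$ are feasible for the program of Definition \ref{def:aps-sets} written for $v'$ with the same target $z=\gamma$: the constraints $\sum_{T\ni j}\lambda_T\le b$ and $\sum_T\lambda_T=1$ are unchanged, and the support condition ``$\lambda_T=0$ whenever $v'(T)<\gamma$'' holds because $v'(T)=\gamma$ on the support. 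This certifies $\APS(v',b)\ge \gamma$, and combining the two bounds yields $\APS(v',b)=\gamma$.

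I do not anticipate a genuine obstacle here; the only things to get right are the direction of monotonicity (capping lowers $v$, so it can only lower the inner max in the price definition, giving the easy $\le$ bound) and the fact that the dual-optimal cover lives entirely on sets of value at least $\gamma$, which the cap fixes at exactly $\gamma$. As an alternative self-contained route, one can use Claim \ref{claim:aps-concave-closure} together with Lemma \ref{lem:aps-concave-closure}: capping at $\gamma$ is precisely truncation, $v' = \truncate{v}{\gamma}$, and for every $z \le \gamma$ one has $\truncate{(v')}{z} = \truncate{v}{z}$, so $\truncate{(v')}{\gamma}^+(b,\dots,b) = \truncate{v}{\gamma}^+(b,\dots,b) = \gamma$, while for $z > \gamma$ the truncated concave closure is bounded by $\gamma < z$; this shows $\APS(v',b) = \sup\{z : \truncate{(v')}{z}^+(b,\dots,b)=z\} = \gamma$.
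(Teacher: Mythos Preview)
The paper explicitly omits the proof of this claim (Section~\ref{sec:aps-props} states ``The proofs are simple and we omit them here''), so there is nothing to compare against; your argument is correct and exactly the kind of direct two-line verification the authors had in mind. Both the price-definition upper bound and the dual-definition lower bound are clean, and your alternative via Lemma~\ref{lem:aps-concave-closure} is also valid.
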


\subsection{Algorithm.}\label{sec:alg}
The main subroutine of our algorithm is Algorithm \ref{alg:submodular-aps}, which takes as input a fair division instance $(\N, \M, (v_i)_{i \in \N}, (b_i)_{i \in \N})$ along with a vector of numbers $(\beta_i)_{i \in \N}$. It outputs an allocation with the following property:  each agent $i$ receives a bundle of value at least $\frac13 \beta_i$ if $\beta_i \leq \APS_i$; note that the guarantee holds for agent $i$ even if the $\beta$ values for other agents are larger than their $\APS$ values. In particular, if we have the $\APS$ values of all agents, the algorithm is efficient and gives each agent $\frac{1}{3}$ fraction of her $\APS$. Note that computing $\APS$ values is known to be $\classNP$-hard. In the next section, we give Algorithm \ref{alg:submodular-aps-search} which is a wrapper around Algorithm \ref{alg:submodular-aps} and bypasses this computational issue.

First, we explain Algorithm \ref{alg:submodular-aps}. This algorithm works as follows. Given an instance of the fair division problem, and guesses for the $\APS$ values of all agents ($\beta_i$ for agent $i$), it performs two pre-processing steps. 
The intuition is the following. Suppose we knew the exact $\APS_i$ values for each $i$. Since our algorithm is greedy, we must normalise valuations. $\APS$ values are scale invariant (Claim \ref{clm:aps-scale-free}) therefore, we scale $v_i$ such that $\APS_i = nb_i$ implying $\sum_i \APS_i = n$. More over, if we knew $\APS_i$ then truncating $v_i$ to $\APS_i$ is convenient and does not affect the value (Claim \ref{clm:aps-cap-free}). In our algorithm we do have $\beta_i$ values rather than $\APS_i$ values. Nevertheless we will proceed as if these values are correct estimates. After scaling and truncating we have for each agent $i$ a valuation function $\hat{v}_i$. We have the property that $\hat{v}_i$ is truncated at $nb_i$.

The key part of the algorithm is the following greedy strategy. The algorithm allocates goods in multiple rounds.
Each round greedily chooses an agent and a good that maximizes the objective $\min\{2nb_i/3, v_i(j\mid A_i)\}$ over all agents and goods. Recall that $v_i(j\mid A_i)$ is the marginal value of the (unallocated) good $j$ to agent $i$'s current bundle $A_i$. The selected good is allocated to the selected agent. As soon as an agent receives a bundle of value at least $nb_i/3$, it is satisfied and removed from consideration in future rounds.

It is easy to see that the algorithm will terminate in at most $m$ rounds. We will show that at termination the following property is true: for each agent $i$ such that $\beta_i \le \APS_i$, $v_i(A_i) \ge \beta_i/3$ where $A_i$ is the allocation to $i$. This will be used to obtain a polynomial time algorithm for a $\frac13$-$\APS$ allocation.

\begin{algorithm}[tbh!]
\caption{Greedy Procedure for $\APS$ with Submodular Valuations}\label{alg:submodular-aps}
\DontPrintSemicolon
  \SetKwFunction{Define}{Define}
  \SetKwInOut{Input}{Input}\SetKwInOut{Output}{Output}
  \Input{$\MMSins,$ vector $(\beta_i)_{i\in\N}$}
  \Output{Either (i) allocation $\A$ where for every agent $v_i(A_i)\ge \beta_i/3,$\\ or (ii) some agent $i:$ $\beta_i>\APS_i$}
  \BlankLine
  Normalization: $v_i'\coloneqq$ normalized $v_i$ so that $\beta_i=nb_i$ that is, for all sets $S \subseteq \M$, $v_i'(S) = v_i(S) \cdot \frac{nb_i}{\beta_i}$\;
  Truncation: $\hat{v_i}\coloneqq \truncate{{v_i}}{nb_i}$ \;
  Initialize $\PA\leftarrow (\emptyset)_{i\in\N}, \M^r\leftarrow \M, \N^r\leftarrow \N$\; \tcp*{$\N^r$ is list of active agents and $\M^r$ is unallocated items}
\While{$\M^r\ne \emptyset$ and $\N^r \ne \emptyset$}{
    Let $S = \{(i, j)\mid i\in \N^r, j\in \M^r\}$\tcp*{all remaining agent-good pairs}
    $(i^*, j^*)\in \argmax\limits_{(i,j)\in S}\min\{\frac{2}{3}nb_i, \hat{v}_i(j\mid A_i)\}$ \label{greedy-step}\tcp*{greedily choose the agent with the highest marginal capped at $2nb_i/3$}
    $A_{i^*}\leftarrow A_{i^*}\cup \{j^*\}, \M^r\leftarrow \M^r\backslash \{j^*\}$\tcp*{allocate the chosen good to the agent}
    \If{$\hat{v}_{i*}(A_{i^*})\ge nb_{i^*}/3$}{
    $\N^r\leftarrow \N^r\backslash \{i^*\}$\tcp*{remove agent if they received at least $nb_i/3$}
    }
}
\If{$\exists i\in [n], v_i(A_i) < \beta_i/3$}
{\Return one such $i$}
\Return $\PA$
\end{algorithm}

\subsection{Analysis}\label{sec:analysis}
\subsubsection{Approximation Guarantee}
The following is the main theorem of our section.
\begin{restatable}{theorem}{thmmainsubaps}
\label{thm:main-sub-aps}
If $\beta_i \leq \APS_i$ for an agent $i \in \A$, then Algorithm \ref{alg:submodular-aps} terminates with an allocation such that $v_i(A_i) \geq \frac{1}{3} \beta_i$.
\end{restatable}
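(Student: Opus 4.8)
The plan is to analyze the greedy Algorithm \ref{alg:submodular-aps} by fixing an agent $i$ with $\beta_i \le \APS_i$ and showing that at termination $v_i(A_i) \ge \beta_i/3$. After the scaling step (Claim \ref{clm:aps-scale-free}) we may assume $\APS_i = nb_i$, and after truncation (Claim \ref{clm:aps-cap-free}) the $\APS$ value is preserved, so $\hat v_i$ is a monotone submodular function truncated at $nb_i$ with $\APS$ value still $nb_i$. The whole proof then reduces to establishing that for the \emph{normalized, truncated} function $\hat v_i$, the greedy procedure guarantees agent $i$ a final value of at least $nb_i/3$. The two cases to handle are: (a) agent $i$ was removed from $\N^r$ because $\hat v_{i}(A_i) \ge nb_i/3$ — here the claim is immediate since $\hat v_i \le v_i'$ pointwise on the achieved bundle up to the scaling bookkeeping; and (b) agent $i$ was never satisfied and remained active until $\M^r$ emptied.

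The crux is case (b). Here I would argue by contradiction: suppose $\hat v_i(A_i) < nb_i/3$ at termination. The key leverage is Claim \ref{claim:aps-concave-closure}: since $\APS_i = nb_i$ and $\hat v_i$ is truncated exactly at $nb_i$, the point $(b_i,\dots,b_i)$ lies on the concave extension at height $nb_i$, so there is a distribution $(\alpha_S)$ with $\sum_S \alpha_S = 1$, $\sum_{S \ni j}\alpha_S \le b_i$ for every good $j$, and $\alpha_S > 0$ only for sets $S$ with $\hat v_i(S) = nb_i$. The plan is to use this fractional ``witness'' packing of value-$nb_i$ sets to show that as long as agent $i$'s current bundle $A_i$ has value below $nb_i/3$, there must exist an unallocated good $j$ whose marginal $\hat v_i(j \mid A_i)$ is large — intuitively at least $2nb_i/3$ worth of value is ``missing'' and, by submodularity combined with the packing constraint that no good is overused, this missing value is spread over goods not in $A_i$ in a way that forces some available good to have marginal at least $2nb_i/(3\cdot(\text{something}))$. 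I would quantify the ``something'' using the averaging: the witness sets $S$ (each of value $nb_i$) satisfy $\hat v_i(S) - \hat v_i(A_i) \le \sum_{j \in S \setminus A_i} \hat v_i(j \mid A_i)$ by submodularity, and averaging this over the distribution $(\alpha_S)$, using $\sum_{S \ni j}\alpha_S \le b_i$, bounds the total available marginal mass from below, so some good $j^* \notin A_i$ has marginal at least the threshold.

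The decisive comparison is with the greedy selection rule, which maximizes $\min\{\tfrac23 nb_i, \hat v_i(j \mid A_i)\}$. I would argue that whenever agent $i$ is unsatisfied, the good $j^*$ identified above gives agent $i$ a capped marginal of $\tfrac23 nb_i$, and therefore in \emph{every} round until $i$ is satisfied, the agent chosen by greedy achieves objective value at least what agent $i$ could have achieved, namely $\tfrac23 nb_i$ (capped). This shows that as long as $i$ is active and unsatisfied, goods are being allocated at high marginal value; the remaining work is to chain these increments. If $i$ itself is ever selected, one good already pushes $\hat v_i(A_i)$ up by $\tfrac23 nb_i \ge nb_i/3$, satisfying $i$ — contradiction with case (b). If $i$ is never selected, every round allocates a good to \emph{some other} agent at objective value $\ge \tfrac23 nb_i$; but since the total available value is bounded (each agent's truncated value is $\le nb_i$ and there are $n$ agents, giving total at most $n \cdot nb_i = \sum_i (nb_i) \cdot n$... here I must track the normalization carefully), the number of such high-value rounds is limited, and I would derive a contradiction from goods running out before $i$ can be starved, or more directly show the packing witness always leaves $i$ an option.

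The main obstacle I expect is this last counting/potential argument in case (b): making precise why the greedy choices for \emph{other} agents cannot deplete all the goods that the witness distribution $(\alpha_S)$ needs for agent $i$, i.e.\ ruling out that every value-$nb_i$ set for $i$ gets ``used up'' by others before $i$ accumulates $nb_i/3$. The clean way to close this is likely a charging argument: charge each good other agents take to the witness sets of $i$ via the constraint $\sum_{S \ni j}\alpha_S \le b_i$, and show that to exhaust \emph{all} of $i$'s witness sets, others would have to collectively consume value exceeding the total supply $\sum_k nb_k = n$ — a contradiction with the $\tfrac23$ factor creating enough slack. The factor $3$ in the guarantee and the cap at $\tfrac23 nb_i$ in the greedy rule are precisely tuned so that this slack argument goes through, so verifying the exact constant balance is where the real care is needed.
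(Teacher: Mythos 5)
There is a genuine gap in your proposal, and it sits exactly where you flagged uncertainty. Your setup (normalize so $\APS_i = nb_i$, truncate, argue by contradiction, invoke the witness distribution $(\alpha_S)$ from Claim~\ref{claim:aps-concave-closure}) matches the paper, and your per-set inequality $\hat v_i(S) - \hat v_i(A_i) \le \sum_{j \in S \setminus A_i} \hat v_i(j \mid A_i)$ is precisely the right ingredient. But your closing argument breaks at two points. First, the averaging over $(\alpha_S)$ does \emph{not} produce a single good $j^*$ with capped marginal $\tfrac{2}{3}nb_i$: it only shows that the marginal mass summed over $S \setminus A_i$ exceeds $\tfrac{2}{3}nb_i$, spread over up to $|S|$ goods, so each individual marginal can be arbitrarily small. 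Consequently your claim that ``in every round until $i$ is satisfied, the agent chosen by greedy achieves objective value at least $\tfrac{2}{3}nb_i$'' is false: the greedy rule only guarantees that the winner's capped value dominates agent $i$'s \emph{current best} capped marginal, which may be tiny. Second, the round-by-round potential/charging argument you sketch to rule out others depleting $i$'s witness sets is not needed and is unlikely to close as stated, since the marginals $\hat v_i(j \mid A_i^t)$ change as $A_i$ grows, making any per-round accounting slippery.

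The paper closes the argument with a single \emph{global double count at termination}, with no per-round reasoning. For each good $j$, let $\marginalbid_i(j)$ denote agent $i$'s capped marginal for $j$ \emph{at the round $j$ was allocated}, and $\allocvalue(j)$ the winner's capped value at that round; the greedy rule gives the pointwise inequality $\allocvalue(j) \ge \marginalbid_i(j)$ for every $j \notin A_i$ (valid because $i$ stays active throughout). The upper bound (Lemma~\ref{lem:upper-bound}) comes from Lemma~\ref{lem:agent-won-bids-sum}: every agent $k$'s total winning value satisfies $\sum_{j \in A_k} \allocvalue(j) \le \tfrac{2}{3}nb_k$ --- the single-good case of that lemma is exactly where the cap at $\tfrac{2}{3}nb_k$ is essential, since an uncapped single win could be worth up to $nb_k$ --- yielding $\sum_{j \notin A_i} \marginalbid_i(j) \le \sum_{k \ne i} \tfrac{2}{3}nb_k = \tfrac{2}{3}n(1-b_i)$. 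The lower bound (Lemma~\ref{lem:lower-bound}) is your witness averaging executed on the quantities $\marginalbid_i(j)$, which dominate $\hat v_i(j \mid A_i)$ by submodularity since $A_i^{t_j} \subseteq A_i$: each witness set contributes at least $\tfrac{2}{3}nb_i$ (either via one cap firing or via $\hat v_i(S_r) - \hat v_i(A_i) > nb_i - \tfrac{1}{3}nb_i$), and dividing the packing constraint $\sum_{r: j \in S_r} \alpha_r \le b_i$ through by $b_i$ gives $\sum_{j \notin A_i} \marginalbid_i(j) \ge \tfrac{1}{b_i}\sum_r \alpha_r \cdot \tfrac{2}{3}nb_i = \tfrac{2}{3}n$. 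Since $b_i > 0$, the two bounds contradict; the slack is exactly the factor $(1-b_i)$, and no counting of rounds or of ``used up'' witness sets is required. In short: aggregate the capped marginals over all goods once, at the end, rather than tracking what greedy does round by round.
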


We prove Theorem \ref{thm:main-sub-aps} by contradiction.
Fix an agent $i$ and suppose $\beta_i \le \APS_i$ and $\hat{v}_i(A_i) < nb_i/3$. The algorithm removes $i$ from consideration during the algorithm if at time $t$, $\hat{v}_i(A_i^t) \ge nb_i/3$, thus $i$ must have stayed active until termination and the algorithm allocated all goods to agents. We compute the total sum of marginals that agent $i$ sees for all the goods (allocated to her and other agents). We compute this value in two different ways and obtain a contradiction. We now proceed with analysis of both algorithms here.

Let $A_1,A_2,\ldots,A_n$ be the allocation produced by Algorithm \ref{alg:submodular-aps}. Our goal is to prove that for each agent $i$ with $\beta_i \le \APS_i$, $v_i(A_i) \ge \beta_i/3$.
We will use the following notation. Algorithm \ref{alg:submodular-aps} irrevocably assigns one good in each iteration of the while loop. We denote the good assigned in any iteration $t$ by $j_t$, and the round in which any good $j$ was assigned by $t_j$. Let $\hat{v}_i$ denote the valuation function of agent $i$ after the pre-processing steps. Also let $A_i^t$ denote the bundle allocated to agent $i$ at the beginning of round $t$ (that is before $j_t$ is allocated). 

Now Step \ref{greedy-step} iteratively chooses an (agent, good) pair using a greedy strategy - the appropriately capped maximum marginal value of a good. We define three terms that arise from the greedy strategy definition. First, suppose the pair $(i,j)$ was chosen in some round $t.$ Let $\allocvalue(j)$ denote the unique capped marginal value that it contributed to $i$ when it was assigned, that is, $\allocvalue(j) \coloneqq \min \left\{ \frac{2}{3}nb_i, \hat{v}_i(j \mid A_i^t) \right\}.$ Second, for any round $t$, and an agent $i'$ that is active in round $t$, denote by $\marginalbid(i',t)$ the maximum (capped) marginal value that $i'$ \textit{sees}, meaning the greedy strategy value computed by fixing agent $i'$ and maximizing only over all the remaining goods; formally, $\marginalbid(i',t) =\max_{j \in \M^r} \min \left\{ \frac{2}{3}nb_{i'}, \hat{v}_{i'}(j \mid A_{i'}^t) \right\}$. Finally, for each good $j$, we denote by $\marginalbid_i(j)$ the capped marginal value that agent $i$ sees for good $j$ in the round that $j$ was allocated, that is, $\marginalbid_{i}(j) =\min \left\{ \frac{2}{3}nb_{i}, \hat{v}_{i}(j \mid A_{i}^{t_j}) \right\}$. To summarize, $\allocvalue(j)$ is the value at which good $j$ was assigned and $\marginalbid_{i}(j)$ is the value that $i$ would have received for good $j$. Because of the greedy nature of our algorithm, $\allocvalue(j) \geq \marginalbid_{i}(j)$. 

We first make a basic observation. $\hat{v}_i(S) = \min\{nb_i, \frac{nb_i}{\beta_i} v_i(S)\}$ for all $S \subseteq [m]$. Thus,
$v_i(A_i) \ge \frac{1}{3} \beta_i$ iff $\hat{v}_i(A_i) \ge nb_i/3$. Therefore, now on we will focus on proving $\hat{v}_i(A_i) \ge nb_i/3$ whenever $\beta_i \le \APS_i$, which will imply Theorem \ref{thm:main-sub-aps}. As mentioned in Section \ref{sec:sub-aps}, we prove Theorem \ref{thm:main-sub-aps} by contradiction.
Fix an agent $i$ and suppose $\beta_i \le \APS_i$ and $\hat{v}_i(A_i) < nb_i/3$. The algorithm removes $i$ from consideration during the algorithm if at time $t$, $\hat{v}_i(A_i^t) \ge nb_i/3$, thus $i$ must have stayed active until termination and the algorithm allocated all goods to agents. We compute the total sum of marginals that agent $i$ sees for all the goods (allocated to her and other agents). We compute this value in two different ways and obtain a contradiction.

\begin{lemma}\label{lem:agent-won-bids-sum}
For any agent $i \in \N$, 
$$\sum_{j \in A_i}\allocvalue(j) = \sum_{j \in A_i} \marginalbid_{i}(j) \le \frac{2}{3}nb_i.$$  
Moreover, if $|A_i| > 1$ then $\hat{v}_i(A_i) = \sum_{j \in A_i}\allocvalue(j)$.
\end{lemma}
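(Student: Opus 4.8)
The plan is to prove Lemma~\ref{lem:agent-won-bids-sum} in three parts, following directly from how goods are assigned to agent $i$ by the greedy rule. First I would establish the identity $\sum_{j \in A_i}\allocvalue(j) = \sum_{j \in A_i} \marginalbid_i(j)$. This is essentially a definitional matter: when a good $j$ is allocated to agent $i$ in round $t_j$, the value $\allocvalue(j)$ at which it is assigned is exactly the capped marginal $\min\{\tfrac{2}{3}nb_i, \hat{v}_i(j \mid A_i^{t_j})\}$, which is precisely the definition of $\marginalbid_i(j)$. So the two sums are termwise equal; no work is really needed beyond unwinding notation.

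The heart of the lemma is the bound $\sum_{j \in A_i}\allocvalue(j) \le \tfrac{2}{3}nb_i$, and this is where I would spend the effort. The key structural fact is that the algorithm removes $i$ from $\N^r$ as soon as $\hat{v}_i(A_i^t) \ge nb_i/3$, so during \emph{every} round in which $i$ is still active and receives a good, her current bundle has value strictly less than $nb_i/3$. Let me order the goods $i$ receives as $g_1, g_2, \ldots, g_\ell$ in the order they were allocated, and consider two cases depending on whether each $\allocvalue(g_r)$ hit the cap $\tfrac{2}{3}nb_i$. The plan is to split $A_i$ into the last good $g_\ell$ that caused $i$ to become satisfied (or the last good if $i$ stayed active until termination) and the preceding goods. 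The preceding goods sum to at most $\hat{v}_i(A_i^{t_{g_\ell}}) < nb_i/3$ by the submodularity/telescoping argument below, and the final good contributes at most its cap $\tfrac{2}{3}nb_i$; but I need to be more careful since a single $\allocvalue(g_r)$ capped at $\tfrac{2}{3}nb_i$ could already be large. The cleaner route uses the final claim of the lemma: when $|A_i|>1$, the uncapped marginals telescope, so I would argue that for all goods except possibly the last, the marginal was not capped (because the running bundle value stayed below $nb_i/3 < \tfrac{2}{3}nb_i$), hence $\allocvalue(g_r) = \hat{v}_i(g_r \mid A_i^{t_{g_r}})$ there; these telescope to $\hat{v}_i(A_i^{t_{g_\ell}})$, the bundle value just before the last good, which is $< nb_i/3$. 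The last good adds at most $\tfrac{2}{3}nb_i$ as a capped term, giving a total under $nb_i/3 + \tfrac{2}{3}nb_i = nb_i$; to sharpen this to $\tfrac{2}{3}nb_i$ I would observe that whichever good is the satisfying one, the sum of the \emph{earlier} capped marginals is below $nb_i/3$ and the satisfying good's own capped contribution is at most $\tfrac{2}{3}nb_i$ but we may split the budget more tightly using that $i$ was unsatisfied before it.

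For the final assertion, that $\hat{v}_i(A_i) = \sum_{j \in A_i}\allocvalue(j)$ when $|A_i|>1$, the plan is to show that no marginal contributing to $A_i$ was actually truncated by the cap $\tfrac{2}{3}nb_i$, so each $\allocvalue(g_r)$ equals the true (uncapped) marginal $\hat{v}_i(g_r \mid A_i^{t_{g_r}})$, and then these marginals telescope exactly to $\hat{v}_i(A_i)$. The justification that the cap is never active is the crucial point: right before any good is added, $\hat{v}_i(A_i^t) < nb_i/3$, and by submodularity the marginal of a good into a \emph{larger} set is no larger, so if the marginal exceeded $\tfrac{2}{3}nb_i$ then adding the good would push the bundle value above $nb_i/3$ and satisfy $i$; but if $|A_i|>1$ then $i$ received at least two goods, so at least all but the last marginal cannot have triggered satisfaction-by-cap in a way that breaks telescoping. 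The main obstacle I anticipate is handling the interaction between the cap and the satisfaction threshold cleanly, namely verifying rigorously that the cap at $\tfrac{2}{3}nb_i$ together with the stopping rule at $nb_i/3$ yields both the exact telescoping identity and the $\tfrac{2}{3}nb_i$ bound without an off-by-a-factor slip; I would resolve this by carefully treating the single good that crosses the $nb_i/3$ threshold as the sole possibly-capped term and bounding everything else by the sub-threshold running value.
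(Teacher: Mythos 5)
Your first (definitional) step and your treatment of the non-final goods are fine, but the proof of the central bound has a genuine gap that you yourself flag and never close: you bound the last good's contribution only by its cap $\tfrac{2}{3}nb_i$, which yields $\sum_{j\in A_i}\allocvalue(j) < \tfrac{1}{3}nb_i + \tfrac{2}{3}nb_i = nb_i$ rather than $\tfrac{2}{3}nb_i$, and for the ``Moreover'' clause your argument shows only that the marginals of all but the \emph{last} good are uncapped, so the exact telescoping $\hat{v}_i(A_i)=\sum_{j\in A_i}\allocvalue(j)$ is not established either. The missing idea --- and what the paper's proof actually uses --- is a greedy comparison anchored at the \emph{first} round in which $i$ receives a good. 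Let $j_1,\ldots,j_h$ be the goods $i$ receives in time order. Since $|A_i|>1$, agent $i$ was not removed after $j_1$, so $\hat{v}_i(j_1) < nb_i/3$. Every later good $j_\ell \in A_i$ was still available at round $t_{j_1}$ and the greedy rule preferred $(i,j_1)$, so
\begin{equation*}
\min\left\{\tfrac{2}{3}nb_i,\ \hat{v}_i(j_\ell)\right\} \le \min\left\{\tfrac{2}{3}nb_i,\ \hat{v}_i(j_1)\right\} = \hat{v}_i(j_1) < \tfrac{1}{3}nb_i,
\end{equation*}
which forces $\hat{v}_i(j_\ell) < nb_i/3$ for \emph{every} $\ell$, including $\ell = h$. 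By submodularity, $\hat{v}_i(j_\ell \mid A_i^{t_{j_\ell}}) \le \hat{v}_i(j_\ell) < nb_i/3 < \tfrac{2}{3}nb_i$, so no cap ever binds on goods in $A_i$; the sum then telescopes exactly to $\hat{v}_i(A_i)$, proving the ``Moreover'' clause, and splitting off the last term gives $\hat{v}_i(A_i^{t_{j_h}}) + \hat{v}_i(j_h \mid A_i^{t_{j_h}}) < \tfrac{1}{3}nb_i + \tfrac{1}{3}nb_i = \tfrac{2}{3}nb_i$.

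Your fallback suggestion --- ``split the budget more tightly using that $i$ was unsatisfied before it'' --- cannot work without this comparison: the stopping rule alone is consistent with the last good having marginal as large as $nb_i$ (the truncation value) at the moment it is allocated, since nothing in your argument prevents a single huge marginal from arriving while the running bundle value is still below $nb_i/3$. It is precisely the greedy choice at round $t_{j_1}$, not the stopping rule, that rules this out. (Your implicit handling of the $|A_i|=1$ case is fine: there the single term is bounded by the cap $\tfrac{2}{3}nb_i$ by definition, which is exactly why the ``Moreover'' clause is restricted to $|A_i|>1$.)
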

\begin{proof}
    Consider two cases, where $|A_i|=1$ and $|A_i|>1$ In the first case, agent $i$ wins exactly one good. If $r$ is the round when $i$ wins, since the marginal of the good she gets is capped at $\frac{2}{3}nb_i$, hence by definition, $\sum_{j \in A_i} \allocvalue(j) = \allocvalue(j_r) \leq \frac{2}{3}nb_i$. 
    
    Now consider the case when $|A_i|>1$. Let $j_1,j_2,\ldots,j_h$ be the items allocated to $i$ in increasing order of time, that is $t_{j_1} < t_{j_2} \ldots < t_{j_h}$. For ease of notation we use $t_\ell$ in place of $t_{j_\ell}$.
    Note that $i$ is not allotted further goods by the algorithm once the value for their bundle is at least $nb_i/3$. This gives us two things. First, $\hat{v}_i(j_1) < nb_i/3$. Hence, by greedy choice, for any $\ell > 1$, $\hat{v}_i(j_\ell) < nb_i/3$, and by submodularity,
    $\hat{v}_i(j_\ell \mid A_{i}^{t_\ell}) < nb_i/3$.
    Therefore, the marginal value of $i$ for all the goods they win, over the empty set, thus by submodularity over any set, is smaller than $nb_i/3.$ Second, $i$'s value before they received the last good is smaller than $nb_i/3$.
    Combining these inferences we get,
    \begin{align*}
        \sum_{j \in A_i} \allocvalue(j) &= \sum_{\ell = 1}^h \min \left\{ \frac{2}{3}nb_i, \hat{v}_i(j_\ell \mid A_i^{t_\ell}) \right\} \\
        & = \sum_{\ell = 1}^h \hat{v}_i(j_\ell \mid A_i^{t_\ell}) \\
        & = \sum_{\ell = 1}^{h-1} \hat{v}_i(j_\ell \mid A_i^{t_\ell}) + \hat{v}_i(j_h \mid A_i^{t_h})\\
        &= \hat{v}_i(A_i^{t_h}) + \hat{v}_i(j_h \mid A_i^{t_h}) \\
        & \leq \frac{1}{3}nb_i + \frac{1}{3}nb_i \\
        & \leq \frac{2}{3}nb_i. \qedhere
    \end{align*}
\end{proof}

\begin{lemma}\label{lem:upper-bound}
Consider agent $i$ that is active through out the algorithm. The total sum of capped marginals over all goods in $[m] \setminus A_i$, according to agent $i$, is at most $\frac{2}{3}n(1-b_i)$, that is,
\begin{equation*}
    \sum_{j \in [m] \setminus A_k} \marginalbid_i(j) \leq \frac{2}{3}n (1-b_i).
\end{equation*}
\end{lemma}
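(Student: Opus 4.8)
The plan is to bound the left-hand sum term-by-term using the greedy selection rule, then regroup the sum according to which agent actually won each good, so that the per-agent bound of Lemma \ref{lem:agent-won-bids-sum} can be applied agent-by-agent and summed.

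First I would record a structural fact from the ``$i$ active throughout'' hypothesis: since agent $i$ is never removed, the active set $\N^r$ is never empty, so the while loop can only terminate because $\M^r=\emptyset$. Hence every good is allocated at termination, and the goods outside $A_i$ are partitioned among the remaining agents, i.e. $[m]\setminus A_i=\bigcup_{i'\ne i}A_{i'}$ as a disjoint union. Next, for each good $j\in[m]\setminus A_i$, I would invoke the greedy inequality $\allocvalue(j)\ge\marginalbid_i(j)$ already noted in the text: at the round $t_j$ when $j$ was assigned, the algorithm chose the (agent, good) pair maximizing the capped marginal, so the value $\allocvalue(j)$ at which $j$ was actually assigned dominates the capped marginal $\marginalbid_i(j)$ that the still-active agent $i$ would have received for $j$. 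Summing over all such $j$ and using the partition gives
\[
\sum_{j\in[m]\setminus A_i}\marginalbid_i(j)\ \le\ \sum_{j\in[m]\setminus A_i}\allocvalue(j)\ =\ \sum_{i'\ne i}\ \sum_{j\in A_{i'}}\allocvalue(j).
\]

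Finally I would apply Lemma \ref{lem:agent-won-bids-sum} to each inner sum to get $\sum_{j\in A_{i'}}\allocvalue(j)\le\frac{2}{3}nb_{i'}$ for every agent $i'$, and then sum over $i'\ne i$ using the normalization $\sum_{i'}b_{i'}=1$:
\[
\sum_{i'\ne i}\frac{2}{3}nb_{i'}\ =\ \frac{2}{3}n\sum_{i'\ne i}b_{i'}\ =\ \frac{2}{3}n(1-b_i),
\]
which is exactly the claimed bound.

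The only points requiring care are (a) confirming that all goods are allocated, which is precisely what the ``$i$ active throughout'' hypothesis buys us, and (b) checking that the greedy inequality $\allocvalue(j)\ge\marginalbid_i(j)$ is legitimately available at every round where a good not won by $i$ is assigned --- this holds because $i$ remains active and is therefore a candidate agent at each such round, so $\marginalbid_i(j)$ is indeed a quantity the greedy maximum dominates. I do not expect a genuine obstacle here: the lemma is essentially a bookkeeping consequence of the per-agent bound in Lemma \ref{lem:agent-won-bids-sum} combined with the greedy selection rule.
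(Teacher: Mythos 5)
Your proof is correct and follows essentially the same route as the paper's: use the greedy rule to bound $\marginalbid_i(j)\le\allocvalue(j)$ for each good outside $A_i$ (valid because $i$ remains active and hence a candidate in every round), regroup the sum over the winning agents $i'\ne i$, and apply Lemma \ref{lem:agent-won-bids-sum} together with $\sum_{i'} b_{i'}=1$. Your additional care about why all goods get allocated and why the greedy inequality is available is exactly the implicit content of the paper's argument.
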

\begin{proof}
If $i$ is active through out the algorithm then all items are allocated to some agent. Consider any item $j \in [m]\setminus A_i$. By the greedy rule $\allocvalue(j) \ge \marginalbid_i(j)$, otherwise $j$ would be allocated to $i$. Thus 
 $\sum_{j \in [m] \setminus A_i} \marginalbid_i(j) \le  \sum_{j \in [m] \setminus A_i} \allocvalue(j)$. Since each item in $j \in [m] \setminus A_i$ is allocated to some other agent, we can use Lemma~\ref{lem:agent-won-bids-sum} and see that
 $$\sum_{j \in [m] \setminus A_i} \allocvalue(j) = \sum_{k \neq i} \sum_{j \in A_{k}} \allocvalue(j) \le \sum_{k\neq i} 2nb_{k}/3 = 2n(1-b_i)/3.$$
\end{proof}

\begin{lemma}\label{lem:lower-bound}
    Suppose there is an agent $i$ with $\beta_i\le \APS_i$ and receives a bundle $A_i$ such that $\hat{v}_i(A_i) < \frac{1}{3}nb_i$ then, \begin{equation*}
        \sum_{j \in [m]\setminus A_i} \marginalbid_i(j) \ge \frac{2}{3}n.
    \end{equation*}
\end{lemma}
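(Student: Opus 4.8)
The plan is to exploit the one hypothesis not yet used, namely $\beta_i \le \APS_i$, to produce a lower bound that directly contradicts Lemma~\ref{lem:upper-bound}: that lemma gives $\sum_{j\in[m]\setminus A_i}\marginalbid_i(j)\le \frac{2}{3}n(1-b_i)$, whereas I will show the sum is $\ge \frac{2}{3}n$; since $b_i>0$ the two are incompatible, which is precisely how the proof of Theorem~\ref{thm:main-sub-aps} closes. So all the work is in the lower bound.

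First I would extract a fractional family of ``tight'' sets from $\beta_i\le\APS_i$. After the preprocessing, $\hat{v}_i$ is the normalized valuation truncated at $nb_i$; by scale-freeness (Claim~\ref{clm:aps-scale-free}) the normalized valuation has $\APS$ value $\frac{nb_i}{\beta_i}\APS_i\ge nb_i$, so $nb_i$ does not exceed its $\APS$ value. Claim~\ref{claim:aps-concave-closure}, applied to that normalized valuation with $z=nb_i$, then yields coefficients $\{\alpha_S\}_{S\subseteq\M}$ with $\sum_S \alpha_S=1$, with $\sum_{S\ni j}\alpha_S=b_i$ for every good $j$, and with the support property that $\alpha_S>0$ implies $\hat{v}_i(S)=nb_i$. (This is exactly where truncation matters.) This fractional cover of the goods by value-$nb_i$ sets is the object I would double-count.

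The central internal step is a per-set bound: for every $S$ in the support, $\sum_{j\in S\setminus A_i}\marginalbid_i(j)\ge \frac{2}{3}nb_i$. Two ingredients feed into it. Since $A_i^{t_j}\subseteq A_i$, submodularity gives $\hat{v}_i(j\mid A_i^{t_j})\ge \hat{v}_i(j\mid A_i)$, and as $\min\{\frac{2}{3}nb_i,\cdot\}$ is nondecreasing we get $\marginalbid_i(j)\ge \min\{\frac{2}{3}nb_i,\hat{v}_i(j\mid A_i)\}$. Separately, because $\hat{v}_i(S)=nb_i$, monotonicity and submodularity give $nb_i\le \hat{v}_i(S\cup A_i)\le \hat{v}_i(A_i)+\sum_{j\in S\setminus A_i}\hat{v}_i(j\mid A_i)$, so $\sum_{j\in S\setminus A_i}\hat{v}_i(j\mid A_i)\ge nb_i-\hat{v}_i(A_i)>\frac{2}{3}nb_i$ using the assumed $\hat{v}_i(A_i)<\frac{1}{3}nb_i$. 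A short case split then finishes: if some $j\in S\setminus A_i$ has $\hat{v}_i(j\mid A_i)\ge\frac{2}{3}nb_i$, that single capped term already contributes $\frac{2}{3}nb_i$; otherwise every cap is inactive and the sum equals $\sum_{j\in S\setminus A_i}\hat{v}_i(j\mid A_i)>\frac{2}{3}nb_i$.

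Finally I would aggregate by double counting $\sum_S\alpha_S\sum_{j\in S\setminus A_i}\marginalbid_i(j)$. Summing the per-set bound weighted by $\alpha_S$ and using $\sum_S\alpha_S=1$ lower-bounds this by $\frac{2}{3}nb_i$; swapping the order of summation and using $\sum_{S\ni j}\alpha_S=b_i$ for each $j\notin A_i$ shows it equals $b_i\sum_{j\in[m]\setminus A_i}\marginalbid_i(j)$. Dividing by $b_i$ gives the claimed $\sum_{j\in[m]\setminus A_i}\marginalbid_i(j)\ge\frac{2}{3}n$. I expect the per-set bound to be the main obstacle, specifically obtaining the clean constant $\frac{2}{3}nb_i$: the naive relaxation $\marginalbid_i(j)\ge\frac{2}{3}\hat{v}_i(j\mid A_i)$ loses a factor and only yields $\frac{2}{3}\bigl(n-\hat{v}_i(A_i)/b_i\bigr)$, so it is the case-split observation, that one large marginal already saturates the cap $\frac{2}{3}nb_i$, that makes the constant come out exactly right.
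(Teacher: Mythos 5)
Your proof is correct and follows essentially the same route as the paper's: apply Claim~\ref{claim:aps-concave-closure} (with $z=nb_i$, after scaling and truncation) to obtain a fractional cover by sets of $\hat{v}_i$-value $nb_i$, prove the per-set bound $\sum_{j\in S\setminus A_i}\marginalbid_i(j)\ge \frac{2}{3}nb_i$ via the same cap-active/cap-inactive case split using submodularity, monotonicity, and $\hat{v}_i(A_i)<\frac{1}{3}nb_i$, and then double-count with the weights $\alpha_S$ to divide out $b_i$. The only cosmetic difference is that you use the exact covering constraint $\sum_{S\ni j}\alpha_S=b_i$ from Definition~\ref{def:concave-ext} where the paper works with $\le b_i$; both yield the needed inequality since the capped marginals are nonnegative.
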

\begin{proof}
$\APS_i$ is the $\APS$ value for agent $i$ with valuation function $v_i$. Recall that $v'_i = \frac{nb_i}{\beta_i} v_i$ and thus the $\APS$ value with respect to $v'_i$ is
$\frac{nb_i}{\beta_i} \APS_i \ge nb_i$ ($\because \ \beta_i \le \APS_i$). The function $\hat{v}_i$ is obtained by truncating $v'_i$ at $nb_i$. Thus, from Claim~\ref{claim:aps-concave-closure}, we have $\hat{v}_i^+(b_i,b_i,\ldots,b_i) = nb_i$. From the definition of concave extension and Claim~\ref{claim:aps-concave-closure}, there are sets of items
$S_1,S_2,\ldots,S_h$ and reals $\alpha_1,\alpha_2,\ldots,\alpha_h \in [0,1]$ such that (i) $\sum_{r=1}^j \alpha_r = 1$ and (ii) $\hat{v}_i(S_r) = nb_i$ for each $r \in [h]$ and (iii) for each item $j$, $\sum_{r: j \in S_r} \alpha_r \le b_i$.
    \begin{align}
        \nonumber\sum_{j \in [m] \setminus A_i} \marginalbid_i(j) &= \sum_{j \in [m]\setminus A_i } \left( \frac{1}{b_i} \sum_{r : j \in S_r} \alpha_r \right) \marginalbid_i(j) 
        \\
        \nonumber&\qquad\qquad\text{(by property (iii) above.)} \\
        \nonumber&= \frac{1}{b_i} \sum_{j \in [m] \setminus A_i} \sum_{r: j \in S_r} \alpha_r \marginalbid_i(j) \\
        \nonumber&= \frac{1}{b_i} \sum_{r \in [h]} \sum_{j \in S_r \setminus A_i} \label{eqn:marg_sums_sets}\alpha_r \marginalbid_i(j) \\
        \nonumber&\qquad\qquad\text{(changing the order of summations.)} \\
        &= \frac{1}{b_i} \sum_{r \in [h]} \alpha_r \sum_{j \in S_r \setminus A_i}  \marginalbid_i(j)
    \end{align}
    Now by definition of Step \ref{greedy-step},
    \begin{align*}
        \sum_{j \in S_r \setminus A_i} \marginalbid_i(j) = \sum_{j \in S_r \setminus A_i} \min \left\{ \frac{2}{3}nb_i, \hat{v}_i(j \mid A_i^{t(j)})\right\}
    \end{align*}
    If any of the minimums in the above equation is $\frac{2}{3}nb_i$, then $\sum_{j \in S_r} \marginalbid_i(j) \geq \frac{2}{3}nb_i$. On the other hand, suppose all minimums evaluate to the uncapped marginal value. Then we get,
    \begin{align*}
        \sum_{j \in S_r \setminus A_i} \marginalbid_i(j) &= \sum_{j \in S_r \setminus A_i} \hat{v}_i(j \mid A_i^{t(j)})\\
    &\geq \sum_{j \in S_r \setminus A_i} \hat{v}_i(j \mid A_i) &&\text{(by submodularity.)}
        \end{align*}
    Note that $\forall j \in A_i, \hat{v}_i(j\mid A_i)=0$. Therefore,
        \begin{align*}
        \sum_{j \in S_r \setminus A_i} \marginalbid_i(j) &= \sum_{j \in S_r} \hat{v}_i(j \mid A_i)\\ 
        &\geq \hat{v}_i(S_r \mid A_i) &&\text{(by submodularity.)} \\
        &=  \hat{v}_i(S_r \cup A_i) - \hat{v}_i(A_i) &&\text{(by definition.)} \\
        &\ge  \hat{v}_i(S_r) - \hat{v}_i(A_i) &&\text{(by monotonicity.)} \\
        &> nb_i - \frac{1}{3}nb_i = \frac{2}{3}nb_i
    \end{align*}
    Substituting in equation \ref{eqn:marg_sums_sets}, we get
    \begin{equation*}
        \sum_{j \in [m] \setminus A_i} \marginalbid_i(j) \ge \frac{1}{b_i} \sum_{r \in [h]} \alpha_r \frac{2}{3}nb_i = \frac{2}{3}n. \qedhere
    \end{equation*}
\end{proof}

Now we finish the proof of Theorem \ref{thm:main-sub-aps}.
\begin{proof}[Proof of Theorem \ref{thm:main-sub-aps}]
   
Recall that we are assuming that there is an agent $i$ with $\beta_i \le \APS_i$ who received a bundle $A_i$ such that $\hat{v}_i(A_i) < nb_i/3$. Since $i$ will be present until the end of the algorithm, Lemma \ref{lem:upper-bound} implies
$\sum_{j \in [m] \setminus A_i} \marginalbid_i(j) \le \frac{2}{3}n(1-b_i)$. While Lemma \ref{lem:upper-bound} implies $\sum_{j \in [m] \setminus A_i} \marginalbid_i(j) > \frac{2}{3}n$. These two inequalities clearly contradicts each other. Therefore, it follows that every agent with $\beta_i \le \APS_i$ is guaranteed a bundle $A_i$ with $\hat{v}_i(A_i) \ge nb_i/3$ equivalently $v_i(A_i) \ge \beta_i/3$.
\end{proof}
\begin{remark}
The greedy strategy of Algorithm \ref{alg:submodular-aps} caps the marginal values to $2nb_i/3.$ The reason for adding this cap instead of considering the uncapped marginals is single good allocations. The analysis fails for the alternative simpler greedy strategy of uncapped marginals only in the first case of Lemma \ref{lem:agent-won-bids-sum}. Therefore, if the input instance satisfies $v_i(\{j\})\le 2\APS_i/3$ for every $i$ and every good $j,$ then the simpler greedy strategy will suffice. For obtaining approximate $\MMS$ allocations, we can assume this without loss of generality, by what are popularly called \textit{single good reductions}. However, such reductions do not work for the $\APS$ problem, and adding the caps cleverly bypasses this issue.
\end{remark}

\subsubsection{Computational Aspects}
Finally, we deal with the computational aspect of not having exact $\APS$ values of the agents. We give Algorithm \ref{alg:submodular-aps-search} that starts with very large guess for the $\APS$ values and iteratively decreases the guesses till we get a feasible solution.
\begin{algorithm}[tbh!]
\caption{$\frac{1}{3}$-$\APS$ Algorithm for  Submodular Valuations}\label{alg:submodular-aps-search}
\DontPrintSemicolon
  \SetKwFunction{Define}{Define}
  \SetKwInOut{Input}{Input}\SetKwInOut{Output}{Output}
  \Input{$\MMSins$}
  \Output{Allocation $\A$ where for every agent $v_i(A_i)\ge \APS_i/3$}
  \BlankLine
  Initialize $\beta_i \leftarrow v_i(\M)$ for all $i \in [n]$\;
\While{$1$}{
    \If{Algorithm \ref{alg:submodular-aps} $(\MMSins$, $(\beta_i)_{i \in [n]})$ returns an allocation $\A$}{
        return $\A$
    }
    \Else{reduce $\beta_i \leftarrow (1 - \epsilon) \beta_i$ for the $i$ returned by Algorithm \ref{alg:submodular-aps}.}
}
\end{algorithm}
\begin{lemma}\label{lem:poly-time}
For any $\epsilon>0,$ Algorithm \ref{alg:submodular-aps} computes a $\frac{1}{3(1+\epsilon)}$-$\APS$ allocation for an instance $\MMSins$ with submodular valuation functions in $O(m^2n\log_{(1+\epsilon)} v^{max}(\M))$ time.    
\end{lemma}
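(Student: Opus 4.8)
The plan is to analyze the running time and correctness of Algorithm~\ref{alg:submodular-aps-search}, which wraps the greedy procedure of Algorithm~\ref{alg:submodular-aps} inside a geometric search over the guesses $(\beta_i)_{i\in\N}$. First I would establish correctness. By Theorem~\ref{thm:main-sub-aps}, whenever $\beta_i \le \APS_i$ the greedy procedure guarantees agent $i$ a bundle of value at least $\beta_i/3$; contrapositively, if Algorithm~\ref{alg:submodular-aps} reports failure and returns some agent $i$, that agent must satisfy $\beta_i > \APS_i$. Thus the decrement step $\beta_i \leftarrow (1-\epsilon)\beta_i$ only ever shrinks a guess that strictly exceeds the true $\APS_i$, so the invariant $\beta_i \ge \APS_i$ is never violated for a correctly-sized guess, and no $\beta_i$ is ever pushed below $(1-\epsilon)\APS_i$. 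When the loop finally terminates with an allocation $\A$, every agent has $\beta_i \ge (1-\epsilon)\APS_i$ and receives value at least $\beta_i/3 \ge \frac{1-\epsilon}{3}\APS_i \ge \frac{1}{3(1+\epsilon)}\APS_i$, giving the claimed approximation factor.

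Next I would bound the number of iterations of the outer while loop. Each failed call decreases exactly one $\beta_i$ by a factor of $(1-\epsilon)$. Since $\beta_i$ starts at $v_i(\M)$ and, by the argument above, never drops below $(1-\epsilon)\APS_i$, the number of times a single agent's guess can be decremented is at most $\log_{1/(1-\epsilon)}\!\bigl(v_i(\M)/\APS_i\bigr)$, which I would bound by $O(\log_{(1+\epsilon)} v^{\max}(\M))$ after noting $\APS_i$ is bounded below by the largest single-good value (hence $v_i(\M)/\APS_i$ is controlled). Summing over all $n$ agents gives $O(n \log_{(1+\epsilon)} v^{\max}(\M))$ total iterations of the outer loop. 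Each iteration runs Algorithm~\ref{alg:submodular-aps} once, which performs at most $m$ rounds, and each round scans all $O(mn)$ remaining agent–good pairs to find the greedy maximizer in Step~\ref{greedy-step}; this costs $O(m^2 n)$ value-oracle queries per call. Multiplying the per-call cost by the iteration count yields the stated $O\!\bigl(m^2 n \log_{(1+\epsilon)} v^{\max}(\M)\bigr)$ bound.

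The step I expect to require the most care is the lower bound on the guesses, namely verifying that $\beta_i$ cannot be decremented indefinitely and pinning down the logarithmic factor precisely. The subtlety is that the termination condition depends on \emph{all} agents simultaneously satisfying their guarantee, yet the decrement only touches one agent per failed round; I would argue that the monotone invariant $\beta_i \ge (1-\epsilon)\APS_i$ (combined with the fact that once $\beta_i \le \APS_i$ that agent never triggers another decrement on itself) prevents any single coordinate from being reduced more than the stated number of times, so the process is guaranteed to halt. A minor point to address cleanly is the relationship between $v^{\max}(\M)$, the initial value $v_i(\M)$, and $\APS_i$, to ensure the ratio inside the logarithm is genuinely $O\bigl(v^{\max}(\M)\bigr)$ rather than unbounded; this follows since each $\APS_i$ is at least the value of the most valuable good under $v_i$, which is positive, and $v_i(\M) \le m\, v^{\max}(\M)$ under monotonicity, folding the extra factor of $m$ harmlessly into the logarithm's argument.
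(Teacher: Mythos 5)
Your overall skeleton matches the paper's proof: correctness via the contrapositive of Theorem~\ref{thm:main-sub-aps} (a returned agent $i$ certifies $\beta_i>\APS_i$, so decrements never push a guess below one decrement-factor of $\APS_i$), a geometric search per coordinate, and a per-call cost of $O(m^2n)$ from $m$ rounds over $O(mn)$ pairs. But there is a genuine gap in your termination bound: you assert that ``each $\APS_i$ is at least the value of the most valuable good under $v_i$, which is positive.'' Both halves of this are false. Take $n=2$ agents and a single good $g$ with $v_1(\{g\})=1$: the adversary prices $g$ at $1>b_1=\tfrac12$, so $\APS_1=0$ while the maximum singleton value is $1$. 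Worse, in this instance whichever agent loses the good receives value $0<\beta_i/3$ in every run, so Algorithm~\ref{alg:submodular-aps-search} decrements that agent's $\beta_i$ forever and never terminates --- your iteration bound $\log\bigl(v_i(\M)/\APS_i\bigr)$ is infinite, and no folding of factors into the logarithm repairs it. The paper's proof addresses exactly this point: it first detects and removes agents with $\APS_i=0$ (for whom any allocation is trivially acceptable), and for the remaining agents lower-bounds $\APS_k \ge v_k^{min} \coloneqq \min_{j\in\M,\, v_k(\{j\})>0} v_k(\{j\})$, the \emph{minimum positive} singleton value, not the maximum (this holds since any set of positive value contains, by subadditivity of submodular valuations, a good of positive singleton value, and monotonicity finishes). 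The quantity $v^{max}(\M)$ in the lemma is then \emph{defined} as $\max_{i\in\N} v_i(\M)/v_i^{min}$, which is what makes the logarithm's argument finite; your reinterpretation of $v^{max}(\M)$ and the $v_i(\M)\le m\,v^{max}(\M)$ patch are artifacts of the missing definition rather than a fix for the zero-$\APS$ case.

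A second, smaller error: since $(1-\epsilon)(1+\epsilon)=1-\epsilon^2<1$, we have $1-\epsilon < \tfrac{1}{1+\epsilon}$, so the last step of your chain $\beta_i/3 \ge \tfrac{1-\epsilon}{3}\APS_i \ge \tfrac{1}{3(1+\epsilon)}\APS_i$ is backwards; with the pseudocode's $(1-\epsilon)$ decrement you only obtain a $\tfrac{1-\epsilon}{3}$-$\APS$ guarantee, which is strictly weaker than the stated $\tfrac{1}{3(1+\epsilon)}$. The paper's prose avoids this by decrementing by a factor of $\tfrac{1}{1+\epsilon}$, giving the invariant $\beta_i \ge \APS_i/(1+\epsilon)$ and the claimed factor exactly (admittedly the paper's pseudocode and prose disagree here, so your slip is understandable, but as written your inequality is false and should be repaired by adopting the $\tfrac{1}{1+\epsilon}$ decrement or reparametrizing $\epsilon$).
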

\begin{proof}
First, we know from Theorem \ref{thm:main-sub-aps} that if $\beta_i\le \APS_i,$ then agent $i$ will receive a bundle of value at least $\beta_i/3.$ This is independent of the $\beta$ values of the other agents and their value for their own bundles. In Algorithm \ref{alg:submodular-aps-search} we use this guarantee to guess a value of $\beta_i$ for every $i$ that is at most $1/(1+\epsilon)$ factor smaller than $\APS_i$ as follows. Note that, $\APS_i \le v_i(\M)$ for every agent $i$, hence we start at a high guess value of $\beta_i=v_i(\M).$ Now iteratively, run Algorithm \ref{alg:submodular-aps} for the current value $\beta_i$s, and if the algorithm returns agent $k$ then decrease her $\beta_k$ by a factor of $1/(1+\epsilon)$. We stop when the algorithm returns an allocation $\A$. By Theorem \ref{thm:main-sub-aps}, we know that if the algorithm returns $k$, then $\beta_k > \APS_k$. It is easy to detect agents with zero APS value, and they can be safely removed. For each remaining agent $k$, we have that $\APS_k \ge \min_{j\in \M, v_k(\{j\}) >0} v_k(\{j\})$ -- let us denote this lower bound by $v_k^{min}$. Hence, $\beta_k$ is updated in at most $\log_{(1+\epsilon)}{\frac{v_k(\M)}{v_k^{min}}}$. As there are $n$ agents, if $v^{max}(\M)$ denotes $\max_{i\in \N}\frac{v_i(\M)}{v_i^{min}},$ there are at most $n\log v^{max}(\M)$ iterations of Algorithm \ref{alg:submodular-aps-search}. Each iteration runs Algorithm \ref{alg:submodular-aps} once. This algorithm has at most $m$ iterations, as it assigns one good per iteration, and each iteration computes the maximum from $O(mn)$ values. Therefore, Algorithm \ref{alg:submodular-aps} runs in $O(m^2n)$ time. Combining together, Algorithm \ref{alg:submodular-aps-search} runs in $O(m^2n\log_{(1+\epsilon)} v^{max}(\M))$ time.
%
\end{proof}

\begin{remark}
    Theorem \ref{thm:main-sub-aps} shows existence of $\frac{1}{3}$-APS while Lemma \ref{lem:poly-time} implies an FPTAS to compute $\frac{1}{3(1+\epsilon)}$-APS allocation for any $\epsilon>0$. We can convert this FPTAS to a polynomial time algorithm to compute an exact $\frac{1}{3}$-APS by exploiting the gap of $(1-b_i)$ in the upper bound and lower bound from Lemma \ref{lem:upper-bound} and Lemma \ref{lem:lower-bound} respectively. For this, the algorithm needs to be modified a bit to ensure $v_i(A_i) \ge \frac{\beta_i}{3}(1+O(b_{min}))$ whenever $\beta_i\le \APS_i$ (similar to Theorem \ref{thm:main-sub-aps}), where $b_{min} = \min_{i\in \mathcal N} b_i$. Then, setting $\epsilon=O(b_{min})$ in Lemma \ref{lem:poly-time} will ensure $1/3$-APS allocation in $O(\frac{1}{b_{min}}m^2 n \log v^{max}(\M))$ time. Now note that, it is without loss of generality to assume that $b_i \ge \frac{1}{m},\ \forall i\in \mathcal N$, since any agent with $b_i < \frac{1}{m}$ will have $\APS_i=0$ (by Definition \ref{def:aps-price}) and hence can be discarded up front. Thus, we have $b_{min} \ge \frac{1}{m}$, and thereby we get running time of $O( m^3 n \log v^{max}(\M))$. 
\end{remark}

\bibliographystyle{alpha}{}
\bibliography{literature}
\appendix
\section{Missing Examples from Introduction}\label{app:intro}
The following example shows that given a fair division instance $\MMSinssym$, the $\MMS$ value of an agent $i \in \N$ can be as high as $v_i(\M)$.
\begin{example}
\label{eg:splc-mms-high}
Consider a fair division instance $\MMSinssym$ with $|\N| = n$. $\M$ consists of $n$ copies of same good, $|\M| = n$. The agents have a value of $1$ for any single copy and have value zero for any further copies she gets. In this case, $\MMS_i = 1$ for all agents by creating $n$ bundles with one copy of the good in each bundle. We also see that $v_i(\M) = 1$ for all $i \in [n]$.
\end{example}
The following example shows an instance with $\SPLC$ valuation functions where greedy (with the modification of stopping at $\frac{1}{2}$-$\MMS$) does not give $\frac{1}{2}$-$\MMS$ to all agents.
\begin{example}\label{eg:greedy-half-splc-counter}
    Consider an instance with $4$ agents and $8$ goods. There are $4$ copies of each good. We define valuation functions as follows.

    \noindent
    Agent $1$: $v_{111} = v_{121} = \frac{1}{2}-\delta$. $v_{131} = 2 \delta$. All other values are $0$. That is, agent $1$ likes one copy each of first three goods at the values defined and has zero value for all other copies and does not like any other goods.

    \noindent
    Agents $2$ and $3$: For $i \in \{2, 3\}$, $v_{i11} = 2\delta$. $v_{i21} = v_{i22} = v_{i31} = v_{i32} = \frac{1}{2} - \delta$. All other values are $0$. That is, agents $2$ and $3$ value only one copy of first good at value of $2\delta$ and first two copies of goods $2$ and $3$ at value of $\frac{1}{2}-\delta$. Any further copies are valued at $0$. The agents also do not value any other goods.

    \noindent
    Agent $4$: This agent has an additive value of $\frac{1}{8}$ for all goods, all copies.

    We see that the $\MMS$ value is $1$ for all agents (In fact, the value of concave extension at $(\frac{1}{4}), \ldots, \frac{1}{4}$ is $1$ for all agents.) Consider a greedy process that works as follows: Agent $1$ gets one copy of $g_1$ at $\frac{1}{2} - \delta$. Agent $2$ gets two copies of $g_2$ at $\frac{1}{2} - \delta$. Agent $2$ gets removed here. Agent $3$ gets two copies of $g_2$ at $\frac{1}{2} - \delta$. Agent $3$ gets removed at this point. Finally, agent $4$ is seeing a highest marginal value of $\frac18$ and agent $1$ is seeing highest marginal value of $2 \delta$. Therefore, if $\delta < \frac{1}{16}$, agent $4$ gets allocated goods in greedy and can pick all four copies of $g_3$. Therefore, agent $1$ does not get a value of $\frac{1}{2}$.
\end{example}
\section{Missing Details from Preliminaries}\label{app:prelims}
\subsection{Multilinear Extension}
Multilinear extension, $\mathbb{F}(\cdot)$ of a function $f(\cdot)$ is defined as follows.
\begin{definition}[Multilinear Extension of $f$]
    \begin{equation}
    \mathbb{F}(x) \coloneqq \sum_{S \subseteq V} \prod_{j \in S} x_j \prod_{j \in V \setminus S} (1 - x_j) f(S)
    \end{equation}
\end{definition}

Both Concave extension and Multilinear extension can be defined for any arbitrary set functions, they satisfy nice properties when functions are submodular.

Multilinear extension has been widely used in previous literature on fair division. However, we prove the following claim which shows that under Multilinear extension with submodular valuations, it is possible to have instances where no fractional allocation is $1$-$\MMS$.

\begin{claim}
There are instances where no fractional $1$-$\MMS$ allocation exists for submodular valuations under multilinear extension.
\end{claim}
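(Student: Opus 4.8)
The plan is to exhibit a single explicit instance with submodular valuations on which the multilinear relaxation fails to admit a fractional $1$-$\MMS$ allocation, and to verify this by a direct, finite analysis of the feasibility region. I would begin with a structural observation that narrows the search: at any integral point $x\in\{0,1\}^{\N\times\M}$ the multilinear extension agrees exactly with the set function, i.e. $\mathbb{F}_i(\vecx_i)=v_i(\mathrm{supp}(\vecx_i))$. Hence if an integral $\MMS$ allocation existed it would already be a feasible fractional $1$-$\MMS$ allocation under $\mathbb{F}$. Consequently the target instance must have no integral $\MMS$ allocation, and the failure must be genuinely non-additive: for additive $v_i$ the multilinear extension coincides with the linear extension, so splitting goods always repairs feasibility. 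The instance must therefore exploit the strict gap $v_i^+(\cdot)-\mathbb{F}_i(\cdot)>0$ that submodularity creates at fractional points (for a coverage/OR-type term, a shared element contributes $\min(x_a+x_b,1)$ under $v_i^+$ but only $x_a+x_b-x_ax_b$ under $\mathbb{F}$).

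Concretely, I would take a small coverage/OR-type instance in which the $\MMS$-defining partitions of the agents conflict, so that every integral allocation leaves some agent short of $\MMS_i$, while the only fractional allocations that restore $v_i^+(\vecx_i)\ge \MMS_i$ for all $i$ are forced to share one or more contested goods at genuinely interior fractions. For each agent I would certify $\MMS_i$ in two directions: the lower bound by displaying an explicit partition of $\M$, and the upper bound via $\MMS_i\le\APS_i$ (Claim \ref{clm:mms-aps-rel}) together with the concave-closure characterization already available in the paper (Lemma \ref{lem:aps-concave-closure} and Claim \ref{claim:aps-concave-closure}), plus the trivial bound $\MMS_i\le v_i(\M)$. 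This pins down the numbers $\MMS_i$ exactly.

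The crux is to rule out \emph{all} fractional allocations, not merely the integral ones; note that the pointwise inequality $\mathbb{F}_i\le v_i^+$ points the wrong way, since the concave relaxation \emph{does} achieve $1$-$\MMS$ fractionally (this is exactly why the paper adopts the concave extension). To handle the continuum I would exploit that $\mathbb{F}_i(\vecx_i)$ is linear in each coordinate $x_{ij}$: after fixing how each good is split among the agents, the optimum over the remaining freedom is attained at a vertex, which reduces the infinitely many allocations to finitely many structured candidates determined by the sharing pattern of the contested goods. On each such candidate I would sum the multilinear values and use submodularity together with the constraints $\sum_i x_{ij}\le 1$ to force a strict aggregate deficit $\sum_i \mathbb{F}_i(\vecx_i)<\sum_i \MMS_i$, pushing some agent below her share. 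The main obstacle I anticipate is precisely this global step: quantifying the per-good concavity penalty $v_i^+-\mathbb{F}_i$ incurred by sharing and showing it cannot be avoided simultaneously for every agent. Calibrating the overlap structure — how many agents contend for each good, and at what interior fractions the concave optimum sits — so that these penalties provably overwhelm the available slack is where the construction has to be done carefully.
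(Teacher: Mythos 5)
There is a genuine gap: what you have written is a programme, not a proof. No instance is exhibited, and you yourself defer the decisive step (``where the construction has to be done carefully''), namely showing that every fractional allocation incurs an unavoidable deficit. Worse, the finite-reduction step you lean on is unsound as stated. Fix a good $j$ and move its mass between agents $i$ and $i'$ along the line $x_{ij}+x_{i'j}=c$: each of $\mathbb{F}_i$ and $\mathbb{F}_{i'}$ is linear along this line, so $\min\{\mathbb{F}_i,\mathbb{F}_{i'}\}$ is \emph{concave} along it, and the simultaneous-feasibility question $\mathbb{F}_i(\vecx_i)\ge \MMS_i$ for all $i$ can therefore be resolvable only at strictly interior sharing fractions --- this is exactly why fractional allocations can beat every integral one, and it means you cannot reduce the continuum to vertices ``determined by the sharing pattern.'' The object that \emph{is} vertex-attained along such swap directions is the aggregate $\sum_i \mathbb{F}_i(\vecx_i)$, which is linear there; but then proving your targeted strict aggregate deficit $\sum_i \mathbb{F}_i(\vecx_i)<\sum_i \MMS_i$ for truncated valuations is \emph{equivalent} to proving that no integral $\MMS$ allocation exists for the instance --- so the hard calibration you anticipate either collapses to a known fact or silently presupposes it.

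The paper's proof makes this equivalence the whole argument, non-constructively and in a few lines: truncate each $v_i$ at $\MMS_i$ (truncation preserves monotone submodularity and the $\MMS$ value); under the capped valuations any fractional $1$-$\MMS$ point satisfies $\mathbb{F}_i(\vecx_i)=\MMS_i$ for every $i$ and is hence welfare-maximizing; pipage rounding \cite{ageev2004pipage} then yields an integral allocation with no loss in the sum, and the caps force every agent to retain exactly $\MMS_i$, producing an exact integral $\MMS$ allocation. This contradicts the known non-existence of exact $\MMS$ allocations \cite{ProcacciaW14, FeigeST21} --- note that even an additive counterexample suffices, since its truncation is submodular. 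Your structural observations (integral points are where $\mathbb{F}_i$ agrees with $v_i$, additive instances cannot work, the gap $v_i^+-\mathbb{F}_i$ is the culprit) are all correct and consistent with the paper's reasoning, but to complete your route you would still need the truncation trick plus a rounding argument of exactly this kind, at which point the explicit construction and overlap calibration become unnecessary.
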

\begin{proof}
    Consider a fair division instance $\MMSinssym$ where all valuations $v_i(\cdot)$ are monotone, non-negative submodular functions. Recall that $\MMS_i$ denotes the $\MMS$ value of agent $i$ in this instance. Consider new valuation functions, $\hat{v}_i(\cdot)$ defined as $\hat{v}_i(S) \coloneqq \min \{v_i(S), \MMS_i\}$. Note that $\hat{v}_i(\cdot)$ are also monotone, non-negative submodular functions and the $\MMS$ value of agent $i$ is retained.

    Suppose for contradiction, there exists a fractional allocation that under Multilinear extension gives $1$-$\MMS$ to each agent. Consider the Fair Division instance $(\N, \M, (\hat{v}_i)_{i \in \N})$. Let $\vecx = (x_{ij})_{i \in \N, j \in \M}$ denote the fractional allocation such that for all $i \in [n]$,  $\hat{v}_i(\vecx_i) = \MMS_i$. Note that $\vecx$ is also the social welfare maximizing allocation. Therefore we can round $\vecx$ using Pipage rounding \cite{ageev2004pipage} without any loss of value to any agent giving us an integral allocation that is $1$-$\MMS$. In particular, we show that if there is a fractional $1$-$\MMS$ allocation under Multilinear extension then there is an integral $1$-$\MMS$ allocation. However, it is well known that $1$-$\MMS$ allocation might not exist with submodular valuations. Therefore, the claim stands proved.
\end{proof}

\section{Missing Proofs From Section \ref{sec:splc} : Computational Aspects}\label{app:splc}
Recall from Section \ref{sec:alg-half-mms} we use $\mu_i$ values as follows:
\begin{equation}\label{eqn:app-mu-i}
    \mu_i = \sum_{j \in [t]} \left(\sum_{k \leq \frac{k_j}{n}}v_{ijk} + v_{ij{\lceil\sfrac{k_j}{n} \rceil}}\left(\frac{k_j}{n} - \left\lfloor \frac{k_j}{n} \right\rfloor \right)\right)
\end{equation}
Here we give full details of how to use the above defined $(\mu_i)$ values to get a polynomial time algorithm for $\frac{1}{2}$-$\MMS$ allocation. 
We prove the following two claims about these $(\mu_i)_{i \in \N}$.
\begin{lemma}\label{lem:mu_i_values}
    Given an $\SPLC$ instance, $\SPLCins$, $(\mu_i)_{i \in \N}$ defined in Equation \ref{eqn:app-mu-i} are \emph{feasible}.
\end{lemma}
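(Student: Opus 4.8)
The plan is to show that the fractional allocation which assigns each agent exactly a $\frac{1}{n}$ fraction of every copy of every good is feasible for LP~(\ref{lp:splc}) with the stated $\mu_i$ values, which immediately certifies feasibility. Concretely, I would set $x_{ijk} = \frac{1}{n}$ for every agent $i$, type $j$, and copy $k \in [k_j]$. The supply constraints $\sum_i \sum_k x_{ijk} \le k_j$ hold with equality since $n \cdot \frac{1}{n} \cdot k_j = k_j$, and the box constraints $0 \le x_{ijk} \le 1$ are immediate. So the only thing to verify is that this uniform allocation gives each agent value at least $\mu_i$ under the linear extension.

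The key computation is to evaluate $v_i^L$ of the uniform bundle and match it to the definition of $\mu_i$ in Equation~(\ref{eqn:app-mu-i}). The subtlety is that $v_i^L$ is the \emph{linear} extension, which simply sums $v_{ijk} x_{ijk}$ over all copies, whereas the true $\SPLC$ value of an \emph{integral} bundle only counts the top-valued copies (because the values are sorted decreasingly). Under the uniform fractional allocation, agent $i$ holds a $\frac{1}{n}$ fraction of each of the $k_j$ copies of type $j$, contributing $\frac{1}{n}\sum_{k \in [k_j]} v_{ijk}$ to $v_i^L$. I would argue that because the values $v_{ij1} \ge v_{ij2} \ge \dots \ge v_{ijk_j}$ are concave (decreasing), the quantity $\frac{1}{n}\sum_{k=1}^{k_j} v_{ijk}$ dominates the expression $\sum_{k \le k_j/n} v_{ijk} + v_{ij\lceil k_j/n\rceil}\bigl(\tfrac{k_j}{n} - \lfloor \tfrac{k_j}{n}\rfloor\bigr)$ appearing in $\mu_i$; the latter is precisely the value of integrally taking the top $\lfloor k_j/n \rfloor$ copies plus a fractional sliver of the next copy. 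Summing over all types $j \in [t]$ then yields $v_i^L(\vecx_i) \ge \mu_i$.

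The main obstacle is the per-type inequality $\frac{1}{n}\sum_{k=1}^{k_j} v_{ijk} \ge \sum_{k \le k_j/n} v_{ijk} + v_{ij\lceil k_j/n\rceil}\bigl(\tfrac{k_j}{n} - \lfloor \tfrac{k_j}{n}\rfloor\bigr)$, which is where concavity does the real work. Intuitively, $\mu_i$ is built by taking the ``first $\frac{1}{n}$ fraction'' of the sorted values (the largest ones), while the uniform allocation spreads the $\frac{1}{n}$ weight evenly across \emph{all} copies, including the smaller-valued tail. One might worry this could make $\mu_i$ larger, but in fact $\mu_i$ is designed as a proxy for $\MMS_i$ (an ``average-of-best'' quantity), and I would verify the inequality by a rearrangement/averaging argument: partitioning the $k_j$ copies into roughly equal blocks and using $v_{ij1} \ge \dots \ge v_{ijk_j}$ to bound the top block against the average. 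The cleanest route is to observe that $\mu_i$ equals the value of the fractional bundle that greedily takes the $\frac{k_j}{n}$ highest-valued (fractional) copies of each type, and since greedily taking the top $\frac{k_j}{n}$ fraction can only beat taking a uniform $\frac{1}{n}$ slice of every copy when values are sorted decreasingly, the inequality would go the \emph{wrong} way unless one is careful.

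Thus the correct framing, which I would adopt, is that $\mu_i$ is itself defined to be at most $\MMS_i$ (it is a lower bound on the $\MMS$ value, being the value of a specific balanced partition), and feasibility should instead be established directly: the uniform fractional allocation achieves exactly $\mu_i$ because averaging the integral ``top-fraction'' bundles over all $n$ cyclic shifts of the sorted copies produces precisely the uniform $\frac{1}{n}$-allocation, and by linearity of $v_i^L$ the average value equals $\mu_i$. Hence $v_i^L(\vecx_i) = \sum_j \frac{1}{n}\sum_k v_{ijk}$ and a short calculation confirms this equals (or exceeds) $\mu_i$, completing the feasibility proof. I expect reconciling the two expressions — the uniform-slice value versus the top-fraction value in Equation~(\ref{eqn:app-mu-i}) — to be the only real content, and it reduces to a one-dimensional concavity/averaging fact applied type-by-type.
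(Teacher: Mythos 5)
There is a genuine gap, and it is in the step you yourself flagged as suspicious. Your first route (the uniform allocation $x_{ijk}=\tfrac{1}{n}$ for every copy) gives each agent $\tfrac{1}{n}\sum_{k=1}^{k_j} v_{ijk}$ per type, which is at \emph{most} --- not at least --- the per-type term of $\mu_i$ in Equation~(\ref{eqn:app-mu-i}), since that term is the value of the top $\tfrac{k_j}{n}$ units of a decreasing sequence; you correctly observed the inequality goes the wrong way. But the repair you then adopt is false: averaging the $n$ cyclically shifted blocks does reproduce the uniform allocation, but by linearity its value is the \emph{average} of the $n$ block values, whereas $\mu_i$ is the value of the \emph{largest} block; these coincide only when all copies are equally valued. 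Concretely, take $n=2$, one type, $k_1=2$, $v_{i11}=1$, $v_{i12}=0$: then $\mu_i=1$ while the uniform bundle is worth $\tfrac12$, so your concluding claim ``$v_i^L(\vecx_i)=\sum_j \tfrac{1}{n}\sum_k v_{ijk}$ equals (or exceeds) $\mu_i$'' fails. Your side remark that $\mu_i \le \MMS_i$ (``value of a specific balanced partition'') is also wrong: with one type, $k_1=1$, $n=2$ one gets $\mu_i = v_{i11}/2 > 0 = \MMS_i$; indeed $\mu_i$ must \emph{upper}-bound $\MMS_i$ for the overall algorithm's $\tfrac12$-$\MMS$ guarantee to make sense.

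The missing idea is that LP~(\ref{lp:splc}) does not impose a per-copy constraint $\sum_i x_{ijk}\le 1$; its supply constraint is the aggregate $\sum_i\sum_k x_{ijk}\le k_j$ per type, because copies are interchangeable and the index $k$ in $x_{ijk}$ refers to agent $i$'s \emph{own} $k$-th unit of good $j$ (so the marginal $v_{ijk}$ applies to whichever physical copy serves as $i$'s $k$-th unit). The paper's proof exploits exactly this: give \emph{every} agent the top $\tfrac{k_j}{n}$ units of each type, i.e.\ $x_{ijk}=1$ for $k \le \lfloor k_j/n \rfloor$ and $x_{ij\lceil k_j/n\rceil} = \tfrac{k_j}{n} - \lfloor \tfrac{k_j}{n}\rfloor$. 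All $n$ agents may simultaneously occupy the ``first copies''; the aggregate per type is $n\cdot\tfrac{k_j}{n}=k_j$, the box constraints hold, and each agent's LP value is \emph{exactly} $\mu_i$ by the very definition in Equation~(\ref{eqn:app-mu-i}) --- no concavity or averaging inequality is needed at all. In the counterexample above, both agents set $x_{i11}=1$, the aggregate is $2\le k_1$, and both receive value $1=\mu_i$.
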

\begin{proof}
    For any good $j \in [t]$, we give $\frac{k_j}{n}$ copies of the good to each agent. These clearly form a valid partition. Further, by definition, the value that the agent receives from $\frac{k_j}{n}$ copies of good $j$ is the value $\mu_i$. Therefore, these are feasible.
\end{proof}
\begin{lemma}
    Given an $\SPLC$ instance, $\SPLCins$, for all $i \in \N$, $\mu_i$ defined in Equation \ref{eqn:app-mu-i} satisfy $\mu_i \leq \APS_i \leq \MMS_i$.
\end{lemma}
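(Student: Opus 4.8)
I read the displayed chain in the orientation $\MMS_i \le \APS_i \le \mu_i$, with $\mu_i$ as the upper bound; this is the direction forced by Claim~\ref{clm:mms-aps-rel} and is exactly what is needed for $\mu_i$ to certify the $\frac12$-guarantee (one wants $\tfrac12 \mu_i \ge \tfrac12 \MMS_i$). Thus the two comparisons to prove are $\MMS_i \le \APS_i$ and $\APS_i \le \mu_i$. The first is immediate: for symmetric agents $b_i = \tfrac1n$, so Claim~\ref{clm:mms-aps-rel} gives $\MMS_i \le \APS_i$ directly. All the content is therefore in the upper bound $\APS_i \le \mu_i$.

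To prove $\APS_i \le \mu_i$ I would go through the concave-closure characterization of Lemma~\ref{lem:aps-concave-closure}, which at $b_i = \tfrac1n$ reads $\APS_i = v_{i\downarrow \APS_i}^+(\tfrac1n,\ldots,\tfrac1n)$. Truncation can only decrease a set function, $v_{i\downarrow z}(S) = \min\{v_i(S), z\} \le v_i(S)$ for every $S$, and the concave closure is monotone under pointwise domination: if $f \le h$ pointwise then, plugging the optimal distribution for $f^+(x)$ into the $h$-objective over the same feasible region of Definition~\ref{def:concave-ext}, we get $f^+(x) \le h^+(x)$. Applying this with $f = v_{i\downarrow \APS_i}$ and $h = v_i$ yields $\APS_i = v_{i\downarrow \APS_i}^+(\tfrac1n,\ldots,\tfrac1n) \le v_i^+(\tfrac1n,\ldots,\tfrac1n)$. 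It then suffices to show $v_i^+(\tfrac1n,\ldots,\tfrac1n) = \mu_i$.

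This last identity is the technical heart, and I would establish it in two steps using the $\SPLC$ structure. First, separability of the closure: writing $v_i = \sum_{j \in [t]} g_{ij}$ where $g_{ij}$ depends only on the copies $\M_j$ of good $j$, one has $v_i^+(x) = \sum_j g_{ij}^+(x|_{\M_j})$; the $\ge$ direction takes the product of the per-block optimal distributions, and the $\le$ direction restricts any optimal distribution for $v_i^+$ to each block, whose marginals are feasible for $g_{ij}^+$. Second, each $g_{ij}$ depends only on cardinality, $g_{ij}(T) = \sum_{k \le |T|} v_{ijk}$, and this cardinality function is concave because $v_{ij1} \ge \cdots \ge v_{ijk_j}$. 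At the uniform point, symmetrizing any distribution over all permutations of $\M_j$ (which preserves feasibility and, by concavity, does not decrease the objective) reduces to distributions over cardinalities with expected cardinality $k_j/n$; Jensen's inequality for the piecewise-linear concave interpolant then shows the optimum is the two-point distribution on $\lfloor k_j/n \rfloor$ and $\lceil k_j/n \rceil$ copies, with value $\sum_{k \le k_j/n} v_{ijk} + v_{ij\lceil k_j/n\rceil}(k_j/n - \lfloor k_j/n\rfloor)$. Summing over $j$ reproduces exactly Equation~\ref{eqn:app-mu-i}, so $v_i^+(\tfrac1n,\ldots,\tfrac1n) = \mu_i$ and hence $\APS_i \le \mu_i$.

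The main obstacle is precisely the identity $v_i^+(\tfrac1n,\ldots,\tfrac1n) = \mu_i$, since the closure is defined by an exponential-size LP. The two facts that make it tractable---separability across good-types, and, on a single symmetric block, that the closure at a uniform point equals the piecewise-linear interpolant---are standard for separable-concave functions, but both rely on the averaging/symmetrization argument to pin down the optimal support; concavity of the within-good cardinality function is what drives every inequality. Once this is in hand, truncation monotonicity and Claim~\ref{clm:mms-aps-rel} close the chain $\MMS_i \le \APS_i \le \mu_i$.
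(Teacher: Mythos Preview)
Your proof is correct and follows the same route as the paper: identify $\mu_i$ with the (untruncated) concave closure $v_i^+(\tfrac1n,\ldots,\tfrac1n)$, bound $\APS_i$ above by this via Lemma~\ref{lem:aps-concave-closure} and truncation monotonicity, and invoke Claim~\ref{clm:mms-aps-rel} for $\MMS_i \le \APS_i$. The paper's own proof simply asserts the identity $\mu_i = v_i^+(\tfrac1n,\ldots,\tfrac1n)$ and cites Lemma~\ref{lem:aps-concave-closure}, whereas you additionally supply the separability-plus-symmetrization argument that justifies this identity and make the truncation-monotonicity step explicit; your reorientation of the (misprinted) inequality chain is also correct.
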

\begin{proof}
    The value of $\mu_i$ defined in Equation \ref{eqn:app-mu-i} is the value of concave extension of $v_i(\cdot)$ at $(\frac{1}{n}, \ldots, \frac{1}{n})$. Therefore, by Lemma \ref{lem:aps-concave-closure}, we get that $\mu_i \leq \APS_i$. Combining this with Claim \ref{clm:mms-aps-rel}, we get the lemma.
\end{proof}
Algorithm \ref{alg:splc-mms-complete} describes our complete algorithm for computing $\frac{1}{2}$-$\MMS$ allocation.
\begin{algorithm}[tbh!]
\caption{Polynomial time $\frac{1}{2}$-$\MMS$ Algorithm for  $\SPLC$ valuations}\label{alg:splc-mms-complete}
\DontPrintSemicolon
  \SetKwFunction{Define}{Define}
  \SetKwInOut{Input}{Input}\SetKwInOut{Output}{Output}
  \Input{$\SPLCins$}
  \Output{Allocation $A$ where for every agent $v_i(A_i)\ge \MMS_i/2$}
  \BlankLine
  Compute values $\mu_i$ for all $i \in \N$ as given in Equation \ref{eqn:app-mu-i}\;
  Initialize set of \emph{active} agents $\A = \N$ and set of \emph{active} goods $\G = \M$.\;
  \While{There exists $i \in \N$ and $j \in [t]$ such that $v_{ij1} \geq \frac{\mu_i}{2}$} {\label{step:single-good-start-complete}
    $A_i \leftarrow \{j\}$\;
    $\A \leftarrow \A \setminus \{i\}$\;
    Remove one copy of $j$ from $\G$\;
    Re-compute $(\mu_i)_{i \in \A}$ using Equation $\ref{eqn:app-mu-i}$ for instance $(\A, \G, (v_i)_{i \in \A})$.
  }\label{step:single-good-end-complete}
  Solve the Linear Feasibility Program (\ref{lp:splc}) for the reduced instance $(\A, \G, (v_i)_{i \in \N})$ and obtain fractional allocation $\vecx$.\;
  Use the Rounding subroutine described in Algorithm \ref{alg:round} to get an integral allocation from $\vecx$.\;
  Return $A = (A_i)_{i \in \N}$.
\end{algorithm}
We can now state and prove the main theorem of this section.
\algsplcmmscomplete*
\begin{proof}
Algorithm \ref{alg:splc-mms-complete} is the polynomial time algorithm that achieves the stated guarantees. We prove this here.

\textbf{$\frac{1}{2}$-$\MMS$ guarantee.} At the start of while loop in steps \ref{step:single-good-start-complete} to \ref{step:single-good-end-complete}, the value of $\mu_i$ are at least as much as $\MMS_i$. Therefore, the very first agent gets a good of value at least $\frac{\MMS_i}{2}$. For the reduced instance, by Claim \ref{clm:single-good-redn}, the $\MMS$ values in the reduced instance can only increase. Therefore, the new $\mu_i$ that upper bound the $\MMS$ values in new instance are an upper bound on the $\MMS$ values of original instance. Thus, inductively, in every iteration, $\mu_i$ are a bound on $\MMS_i$ values. Therefore, all agents who get a good in the while loop of steps \ref{step:single-good-start-complete} to \ref{step:single-good-end-complete} get a value of at least $\frac{\MMS_i}{2}$. Further, the set of Active agents that remain at the end of this while loop also have $\mu_i$ values upper bounding the $\MMS_i$ values. Therefore, in a proof similar to proof of Theorem \ref{thm:half-mms-exist}, we get that every agent receives a bundle of value at least half their $\MMS$ value. 

\noindent
\textbf{Time Complexity.} Since $\mu_i$ form a closed form expression as defined in Equation \ref{eqn:app-mu-i}, they are polynomial time computable. All other steps are polynomial time as seen in proof of Theorem \ref{thm:half-mms-exist}. Therefore, the entire algorithm runs in polynomial time.
\end{proof}

\end{document}